\newtheorem{lem}{Lemma}
\newtheorem{thm}[lem]{Theorem}
\numberwithin{equation}{section}
\newcommand{\abs}[1]{\left | #1 \right |}
\renewcommand{\d}{\textnormal{d}}
\newcommand{\renop} { \mathcal {R}  }
\newcommand{\hatrenop} { \hat{\mathcal {R}}  }
\newcommand{\hatrenopk}[1] { \hatrenop \left[ #1 \right]  }
\title{Dyson-Schwinger Equations in Minimal Subtraction}
\author{Paul-Hermann Balduf}
\thanks{The author thanks Dirk Kreimer, David Broadhurst and Gerald Dunne for several helpful discussions and for comments and suggestions on the draft.}
\begin{document}

\begin{abstract}
	We compare the solutions of one-scale Dyson-Schwinger equations in the Minimal Subtraction (MS) scheme to the solutions in kinematic (MOM) renormalization schemes. We establish that the MS-solution can be interpreted as a MOM-solution, but with a shifted renormalization point, where the shift itself is a function of the coupling. We derive  relations between this shift and various renormalization group functions and counter terms in perturbation theory. 
	
	As concrete examples, we examine three different one-scale Dyson-Schwinger equations, one based on the 1-loop multiedge graph in D=4 dimensions, one for D=6 dimensions and one mathematical toy model. For each of the integral kernels, we examine both the linear and nine different non-linear Dyson-Schwinger equations. For the linear cases, we empirically find exact functional forms of the shift between MOM and MS renormalization points.  For the non-linear DSEs, the results for the shift suggest a factorially divergent power series. We determine the leading asymptotic growth parameters and find them in agreement with the ones of the anomalous dimension. Finally, we present a tentative  exact solution to one of the non-linear DSEs of the toy model.
\end{abstract}
	
\maketitle

\section{Introduction}

\subsection{Motivation}

So far, the systematic Hopf-algebraic treatment \cite{kreimer_etude_2006,kreimer_etude_2008,yeats_growth_2008,bellon_approximate_2010,bellon_renormalization_2008,bellon_efficient_2010,klaczynski_resurgent_2016} of Dyson-Schwinger equations (DSEs) \cite{dyson_matrix_1949,schwinger_green_1951}   has relied on kinematic renormalization schemes, called MOM hereafter. The solution of a DSE is the renormalized Green function $G (p^2)$. MOM schemes assign a value to the renormalized Green function $G (p^2)$ at one particular momentum $p^2=\mu^2$, and thereby have a transparent interpretation as boundary condition for the DSE.  For a single (non-coupled) DSE, the only remaining unknown object is the anomalous dimension $\gamma(\alpha)$ as a function of the renormalized coupling $\alpha$. If MOM-conditions are used systematically, explicit regularization of divergent integrals is not necessary \cite{kreimer_etude_2006,kreimer_etude_2008} and one only deals with finite quantities at all stages. In fact, $\gamma(\alpha)$ can be computed from a differential equation without any divergent integral. The earlier works \cite{delbourgo_dimensional_1997,delbourgo_dimensional_1996} do regulate the integrals explicitly using dimensional regularization \cite{thooft_regularization_1972,bollini_dimensional_1972}, but  they   use  MOM renormalization conditions nonetheless.

On the other hand, many perturbative computations in quantum field theory use the combination of dimensional regularization and Minimal Subtraction (MS) renormalization conditions. In this scheme, all singular terms in the regulator $\epsilon$ are subtracted, but the resulting amplitude does not respect any particular kinematic boundary condition. The MS-solution of a DSE can only be found by explicitly regulating and renormalizing the divergent integrals at each loop order.

At the same time, different renormalization schemes should not lead to different physical outcomes, therefore, there ought to be \emph{some} particular momentum $\hat \mu$ such that Minimal Subtraction agrees with  kinematic renormalization using that very reference momentum $\hat \mu$. Indeed, the existence of this correspondence is proven on an abstract algebraic level \cite{panzer_renormalization_2015,kreimer_renormalization_2013}. Finding the explicit relationship is of practical relevance since the perturbative solution is typically only known to a few orders. In that case, the truncated MS-renormalized results are truly different from the MOM ones and a priori ony valid around their respective renormalization points \cite{celmaster_renormalizationprescription_1979}. Knowing the physical value of the MS-renormalization point $\hat \mu$ is then crucial for the validity range of the truncated Green function. Furthermore, renormalization group functions differ between both schemes. In MS, but presumably not in MOM, the beta function is expected to be dominated by subdivergence-free diagrams \cite{mckane_perturbation_2019}. A more concrete understanding of the relation between MOM and MS can help to translate such conjectures to the other scheme and subsequently attack them with a different set of tools.

\subsection{Content}
The present paper begins in \cref{theory} with a pedagogical discussion of $Z$-factors and renormalization group equations in both MOM and MS and their various relations and identities. In \cref{MS_as_shifted}, we establish that, in the physical limit $\epsilon \rightarrow 0$, the Green function in MS can be interpreted as a MOM Green function with shifted renormalization point. We derive several ways to compute this shift.

In the remainder, we analyse propagator-type Dyson-Schwinger equations based on three different primitive integral kernels. The first one is the 4-dimensional bubble (=1-loop-multiedge) graph appearing e.g. in the fermion propagator in Yukawa theory. The second one is the same graph in 6 dimensions, contributing to the $\phi^3$-propagator, and the third is a mathematical toy model which has occasionally  been used to study renormalization. In all three cases, we consider the recursive insertion of the Green function into only one place in the kernel graph.  

First, we insert the Green function itself, which amounts to a linear DSE known as the rainbow approximation. In \cref{D4}, the algorithm for linear DSEs is developed for the case of the 4-dimensional bubble integral. Subsequently, this is applied to the $D=6$ case in \cref{D6} and to the toy model in \cref{toy}. As a counter example, we demonstrate in \cref{chains} that the chain-approximation, not arising from a DSE, does not allow for a shift between MS and MOM.

Secondly, we insert the Green function raised to a power $\in \left \lbrace -4, \ldots , +6 \right \rbrace $, producing non-linear DSEs. In \cref{nonlinear_4d}, the algorithm for non-linear DSEs is explained and used for the 4-dimensional model.  The last \cref{nonlinear_6d,nonlinear_toy} contain the results for the non-linear DSEs for the two remaining models, $D=6$ and the toy model. \Cref{conclusion} is a summary of the results.

\section{Theoretical Background}\label{theory}

\subsection{Unrenormalized Dyson-Schwinger equation}
We consider DSEs of the form
\begin{align}\label{dse_general_un}
	G_0(\alpha_0, x) &= 1 + \alpha_0 \int \d y\;  K(x,y) Q(G_0(\alpha_0, y)) G_0(\alpha_0, y) 
\end{align}
where $G_0(\alpha_0,x)$ is the unrenormalized Green function and $K(x,y)$ is an integral kernel, determined by a single primitive Feynman diagram. Restricting the Green function to depend on only one single external scale $x$, means that $G_0(\alpha_0,x)$ represents a 1PI 2-point-function and $K(x,y)$ stems from a propagator-type diagram. For a more general Green function, one needs to include also scale-less \enquote{angle} variables \cite{brown_angles_2011}.   The  variable $x:= p^2 / \mu^2$ is the external momentum scaled to some fixed reference momentum $\mu$.  In the literature, propagator-type DSEs are often written with a minus sign, $G = 1-\alpha \int \ldots$. We will use a plus for all DSEs considered in this work.

In the cases we consider, the invariant charge is a monomial of the Green function,
\begin{align}\label{Qs}
	Q(G_0(\alpha_0, x)) = \big(G_0(\alpha_0,x)\big)^{s}.
\end{align}

\subsection{Renormalization}

Carrying out renormalization on an integral- (rather than integrand-)level requires us to regulate the divergent integrals. 
In dimensional regularization \cite{thooft_regularization_1972,bollini_dimensional_1972}, the space-time dimension is changed by a non-integer shift $\epsilon$, we choose $D=4-2\epsilon$ or $D=6-2\epsilon$. The unrenormalized amplitude of a finite graph is then a Laurent series in the regularization parameter $\epsilon$.

We will only consider DSEs of multiplicatively renormalizable theories. This means that it is possible to eliminate all divergences at order $m$ by two counter terms $Z_G^{(m)}$ and $Z_\alpha^{(m)}$. They act as a rescaling of the Green function and the coupling constant according to
\begin{align}\label{renormalization}
	\alpha_0 =  Z_\alpha (\alpha,\epsilon) \mu^{2\epsilon}\cdot \alpha, \qquad 	G(\alpha,\epsilon,x) &= Z_G (\alpha,\epsilon) \cdot G_0 \left( Z_\alpha(\alpha,\epsilon )  \alpha ,\epsilon, x \right) .
\end{align}
These $Z$-factors are themselves functions of the renormalized coupling $\alpha$ and of the regularization parameter $\epsilon$. 

Renormalization of the coupling constant is only necessary if the theory has a non-vanishing beta function or, equivalently, a running coupling. See \cite{kruger_log_2020} for a recent account. Eventually, $Z_\alpha$ is given by the renormalization of the invariant charge $Q$. In the present case where there is only one Green function $G(\alpha,x)$ which needs renormalization, the invariant charge   \cref{Qs} leads to the identity
\begin{align}\label{ZgZs}
	Z_\alpha(\alpha,\epsilon) &= (Z_G(\alpha,\epsilon))^{s}.
\end{align}

The derivative of the renormalized coupling constant with respect to  the renormalization point, at fixed unrenormalized coupling $\alpha_0$,  is the beta function of the theory,
\begin{align}\label{beta_def}
	\beta\left( \alpha ,\epsilon \right) &:= \mu^2 \frac{\partial}{\partial \mu^2} \ln \alpha(\alpha_0,\mu) +\epsilon=\frac{-\epsilon}{\alpha  \frac{\partial}{\partial \alpha} \ln \left( \alpha  \cdot Z_\alpha\left( \alpha ,\epsilon \right)   \right)} +\epsilon
\end{align}
The anomalous dimension, on the other hand, is defined as the derivative of the renormalized Green function at fixed $\alpha_0$, using \cref{ZgZs} one finds
\begin{align}\label{gamma_def}
	\gamma\left( \alpha ,\epsilon \right) &:= \mu^2 \frac{\partial}{\partial \mu^2}  \ln G \left( \alpha,\epsilon \right) =\left( \beta  ( \alpha ,\epsilon)-\epsilon \right) \alpha  \frac{\partial}{\partial \alpha } \ln Z_G \left( \alpha ,\epsilon \right).
\end{align}
Note that sometimes, $\gamma$ is defined as the derivative of the inverse (i.e. connected, not 1PI) 2-point Green function, or as the derivative with respect to $\mu$. These definitions are equivalent up to overall signs and factors.

Conversely, the renormalization group functions $\beta(\alpha,\epsilon)$ and $\gamma(\alpha,\epsilon)$ uniquely determine the counter terms via 
\begin{align}\label{Z_beta_gamma}
	 Z_\alpha \left( \alpha ,\epsilon \right) &= \exp \left(-\int \limits_0^{\alpha } \frac{\d u}{u} \frac{\beta\left( u,\epsilon \right) }{\epsilon -\beta\left(u,\epsilon \right)  }\right),\qquad
	 Z_G \left( \alpha ,\epsilon  \right) = \exp \left( -\int \limits_0^{\alpha } \frac{\d u}{u} \frac{\gamma \left(u,\epsilon \right)  }{\epsilon -\beta\left(u,\epsilon \right)  } \right).
\end{align}
Such relations have long been known in the traditional formulation (\enquote{Gross-'t Hooft relations}) \cite{thooft_dimensional_1973,collins_new_1974,gross_applications_1981} as well as in the Hopf-algebraic formulation ( \enquote{scattering type formula} )  of quantum field theory  \cite{connes_renormalization_2001},\cite[Sec. 7]{connes_noncommutative_2007}.

It follows from the above definitions that the renormalized Green function $G$ fulfils -- even for $\epsilon \neq 0$ and not only in MOM renormalization -- the Callan-Symanzik equation  \cite{callan_broken_1970,symanzik_small_1970}
 \begin{align}\label{cse}
 	\left( \gamma ( \alpha ,\epsilon)  +  \left(\beta(\alpha ,\epsilon) -\epsilon\right)   \alpha  \partial_\alpha  \right) G  \left( \alpha ,\epsilon,x \right) &= x\partial_x G   ( \alpha ,\epsilon,x  ).
 \end{align}

The renormalization group functions $\beta,\gamma$ as well as the renormalized Green function are generally non-trivial functions of the regularisation parameter $\epsilon$. Assuming that the counter terms \cref{renormalization} are chosen properly, their limit $\epsilon \rightarrow 0$ exists. 
\begin{align}\label{beta_limit}
	\beta(\alpha) &:= \lim_{\epsilon\rightarrow 0} \beta(\alpha,\epsilon), \qquad \gamma(\alpha) := \lim_{\epsilon \rightarrow 0}\gamma(\alpha,\epsilon), \qquad G(\alpha,x) := \lim_{\epsilon \rightarrow 0} G(\alpha,\epsilon,x).
\end{align}
This limit is usually implied when talking about the renormalized quantities.
If the invariant charge has the form \cref{Qs} and consequently \cref{ZgZs} holds, then by \cref{Z_beta_gamma}  
\begin{align}\label{beta_gamma}
	\beta(\alpha,\epsilon)  = s\cdot \gamma(\alpha,\epsilon)\qquad \Rightarrow \qquad \beta(\alpha) &= s \cdot \gamma(\alpha).
\end{align}

To directly compute the renormalization constants and renormalized solution from a DSE, one inserts \cref{dse_general_un} into \cref{renormalization} and uses \cref{Qs,ZgZs} to obtain
\begin{align}\label{dse_general}
	G (\alpha , x) &= Z_G  \left( 1+ Z_\alpha \alpha  \int \d y \; K(x,y) \left( G_0(Z_\alpha \alpha , y) \right)^{s+1}   \right) \nonumber \\
	&= Z_G + \alpha  \int \d y \; K(x,y)  Q\left( G (\alpha ,y) \right)  G(\alpha,y) .
\end{align}
This equation can be solved iteratively by inserting the solution of order $(m-1)$, $G^{(m-1)} (\alpha ,y)$, into the right hand side to obtain the order-$m$-solution $G^{(m)} (\alpha ,x)$. We assume that no IR-divergences appear. The integrand is finite because it is a power of a renormalized Green function, therefore the integral is only superficially divergent.  The so-obtained divergence is of order $\alpha^m$ and can be absorbed by a suitable summand in $Z_G^{(m)}$, producing a finite $G^{(m)} (\alpha ,x)$.  

\subsection{MOM scheme}
The counter terms introduced in \cref{renormalization} are not unique. To fix them, one needs a renormalization condition. In the MOM scheme, this is done by fixing one particular momentum $\delta \cdot \mu^2$, where $\delta \in \mathbb R$ and $\mu$ is an arbitrary but fixed reference momentum. One then demands the renormalized Green function to take the value unity at that momentum. Introducing $x:=p^2/(\delta \mu^2)$, the MOM renormalization condition is
\begin{align}\label{MOM_def}
	G(\alpha,x=1) &=1 \qquad \text{(MOM scheme)}.
\end{align}
This is achieved order by order in \cref{dse_general} if one includes not only the pole term (in $\epsilon$), but  all finite parts of the amplitude  into the counterterm $Z^{(m)} = Z^{(m-1)}-\renop [\ldots]$. The MOM-scheme operator $\renop$ projects the integral to a fixed scale $x=1$. 

 Using \cref{beta_gamma} and the limit \cref{beta_limit}, the Callan-Symanzik equation \cref{cse} becomes
\begin{align}\label{rge}
	\gamma (\alpha)\left( 1+s\alpha \frac{\partial}{\partial \alpha } \right)  G (\alpha ,x)&=x\frac{\partial}{\partial x} G (\alpha ,x) .
\end{align}
At the renormalization point $x=1$, \cref{rge,MOM_def} lead to
\begin{align}\label{gamma_derivative}
	\gamma(\alpha) &= x \partial_x G(\alpha,x) \big|_{x=1}.
\end{align}

\subsection{Expansion in logarithms}\label{sec_logexpand}

The renormalized solution of a 1-scale DSE in MOM-renormalization with renormalization point $x=1$ can be expanded in logarithms according to
\begin{align}\label{Gren_gammak}
	G(\alpha,x) &= 1 + \sum_{k=1}^\infty \gamma_k(\alpha) \left(\ln x\right)^k.
\end{align}
From \cref{gamma_derivative} we identify the anomalous dimension $\gamma_1(\alpha) = \gamma(\alpha)$. The functions $\gamma_{k>1}(\alpha)$ in \cref{Gren_gammak}, with invariant charge \cref{Qs}, can be computed from \cref{rge} \cite{kreimer_etude_2006}:
\begin{align}\label{rge_MOM}
	\gamma(\alpha) \left(  1+s \alpha \partial_\alpha \right) \gamma_{k-1}(\alpha)&=	k \gamma_k(\alpha).
\end{align}
In perturbation theory, all involved functions will be formal power series in $\alpha$. In fact,
\begin{align}\label{gammak_orders}
	\gamma(\alpha) &\in \mathcal{O}\left( \alpha \right) , \qquad \gamma_k(\alpha) \in \mathcal{O}\left( \alpha^k \right) .
\end{align}

For a linear DSE, the exponent in the invariant charge \cref{Qs} is $s=0$,  and consequently one obtains $\gamma_k(\alpha) = \frac{1}{k!} \gamma^k(\alpha)$.  This corresponds to a scaling solution  of \cref{rge},
\begin{align}\label{scaling_solution}
	G(\alpha,x) &=  x^{\gamma(\alpha)}.
\end{align}

The striking advantage of using MOM renormalization conditions for a Dyson-Schwinger equation is that the anomalous dimension $\gamma(\alpha)$ is the only truly unknown function, and it itself can be computed from a non-linear ODE. This ODE is constructed by inserting the renormalization-group differential  operator \cref{rge_MOM} into the Mellin transform of the primitive kernel\cite{kreimer_etude_2006,yeats_growth_2008}: 
\begin{align}\label{mellin_gamma1}
	\frac{1}{-u \cdot M(u)} \Big|_{u \rightarrow -\gamma ( 1+s \alpha \partial_\alpha)} \gamma(\alpha) &= \alpha.
\end{align}
See \cref{mellin} for the Mellin transforms of the kernels used in this paper. A linear DSE with $s=0$ reduces to the algebraic equation $M\left( -\gamma(\alpha) \right)  = \alpha^{-1}$ \cite{kreimer_etude_2008}.

There is a second expansion of the renormalized Green function $G(\alpha,x)$ in terms of logarithms, the leading-log expansion. It is a reordering of \cref{Gren_gammak} in powers of $(\alpha \ln x)$, 
\begin{align}\label{leadinglog_general}
	G(\alpha,x) &= 1 + \sum_{k=1}^\infty H_k \left( \alpha \ln x \right) \alpha^k.
\end{align}
The function $H_1(z)$ is the leading-log contribution to the Green function, and $H_k(z)$ represents the next-to$^k$ leading log part. These expansions have been studied recently \cite{delage_leading_2016,kruger_filtrations_2015,courtiel_nexttok_2020,kruger_log_2020}, one of the results being that for a  DSE \cref{dse_general} with invariant charge \cref{Qs} where $s\neq 0$ one has \cite{kruger_log_2020}
\begin{align}\label{leadinglog_first}
	H_1 (z) &= \left( 1+s c_1 z \right) ^{-\frac{1}{s}}, \qquad H_2(z) = \frac{\left( 1+s c_1 z \right) ^{-\frac{1}{s}-1}}{-s c_1}c_2  \ln (1+s c_1 z) \\
	H_3(z) &=\frac{\left( 1+s c_1 z \right) ^{-\frac 1 s-2}}{  s^2 c_1^2} \left( s^2 c_1 z  (  c_2^2-c_1 c_3)  - s c_2^2 \ln (1+ s c_1 z) + \frac  { c_2^2}2(1 + s ) \ln^2  (1 + s c_1z)  \right). \nonumber 
\end{align}
Here, $c_j=-[\alpha^j]\gamma(\alpha)$ is the $j^\text{th}$ coefficient of the anomalous dimension. These general results will subsequently be used to cross-check our calculation.

\subsection{MS scheme}

In the Minimal Subtraction scheme, the counter term $\hat Z^{(m)}$ is chosen to contain only the pole terms in $\epsilon$, extracted by the operator $\hat \renop$, of the integral at order $m$:
\begin{align}\label{deltaZ_MS}
	 \hat Z^{(m )}(\epsilon) &= \hat Z^{(m-1)}(\epsilon) - \hatrenopk{\alpha \int \d y \; K(x,y) Q\left(G ^{(m-1)}\right) G ^{(m-1)}  }\qquad \text{(MS scheme)}.
\end{align}

In MS, the anomalous dimension $\hat \gamma(\alpha)$ and the beta function $\hat \beta(\alpha)$ are again defined as derivatives of the $\hat Z$-factors, \cref{beta_def,gamma_def}. This implies, using \cref{Z_beta_gamma},  that they do not depend on $\epsilon$ at all: $\hat \beta(\alpha,\epsilon) = \hat \beta(\alpha)$ and $\hat \gamma(\alpha,\epsilon)=\hat \gamma(\alpha)$. Further, the identity \cref{beta_gamma}, $\hat \beta(\alpha) = s\hat \gamma(\alpha)$, still holds if $\hat Q=\hat G^s$. The so-defined renormalization group functions fulfil once more the Callan-Symanzik equation \ref{cse} and its limit for $\epsilon\rightarrow 0$,  \cref{rge}. Note that for any given Feynman graph, the highest order pole in MS coincides with the one in MOM \cite[Sec. 4]{kreimer_chen_1999}, which is also clear from a an induction over the coradical degree of the graph.

The fact that the counter term in MS consists only of pole terms implies that the residues are closely related to the functions $\gamma_j(\alpha)$ in the log expansion \cref{Gren_gammak}. They satisfy a recursion very similar to \cref{rge_MOM}, namely \cite{connes_renormalization_2001,kreimer_renormalization_2013},\cite[Sec. 7]{connes_noncommutative_2007}
\begin{align}\label{MS_Z_poles}
	\hat Z_\alpha (\alpha,\epsilon) &=: 1+\sum_{j=1}^\infty \frac{1}{\epsilon^j} \hat Z_j(\alpha) \nonumber \\
	s \bar \gamma(\alpha) = \bar \beta(\alpha) &= \alpha \partial_\alpha Z_1(\alpha), \qquad \alpha \partial_\alpha Z_j(\alpha) = \beta \partial_\alpha \left( \alpha Z_{j-1}(\alpha) \right) , \quad j>1.
\end{align}

The MS-bar renormalization scheme is a variant of MS where those finite terms which arise from a series expansion of $\frac{e^{\gamma_E}}{4\pi}$ are also subtracted. All quantities computed in Minimal Subtraction are denoted with hat, like $\hat G$.  We denote MS-bar quantities with a bar, like $\bar G$. The undecorated quantities, like $G$, are in the MOM-scheme.

The MS-scheme involves a scale $\mu$ in the definition of $x=p^2/\mu^2$, but no explicit condition of the Green function is imposed. Intuitively, the MS-renormalized Green function $\hat G (\alpha,x)$ will be unity at some other scale $p^2=\hat \mu^2 = \delta(\alpha) \cdot \mu^2$, where the factor $\delta(\alpha)$ is itself a function of $\alpha$. One can view $\delta(\alpha)$ as the renormalization point to be chosen in MOM in order to reproduce the MS Green function. More mathematically, it was shown in \cite{kreimer_renormalization_2013,panzer_renormalization_2015} that in the Hopf algebra formulation of perturbative quantum field theory, MS and MOM are equivalent up to a scaling of the renormalization point. The objective of the present work is to explicitly find this scaling $\delta(\alpha)$ for various Dyson-Schwinger equations.

\section{MS as a shifted MOM scheme}\label{MS_as_shifted}

\subsection{Shifted MOM renormalization point}\label{shifted}

The formulas of \cref{sec_logexpand} are valid for MOM renormalization at $x=1$. Now choose a kinematic renormalization point $ \delta^{-1} \neq 1$. This is equivalent to choosing a reference momentum $\mu' = \sqrt{\delta} \cdot \mu$ instead of $\mu$ and using a new variable $x':= p^2/{\mu'}^2 = \delta^{-1}\cdot x$ such that $x=1$ equals $x'=\delta^{-1}$. This setup is shown in \cref{pic_def_gamma}. The Green function $G'(x')$ is defined by 
\begin{align}\label{G_Gs_def}
	G(\alpha,x) = G(\alpha,\delta \cdot x') =: G'(\alpha,x') = G'(\alpha, \delta^{-1}\cdot x).
\end{align}
It has a log-expansion (in the original variable $x$, not $x'$) similar to \cref{Gren_gammak},
\begin{align}\label{Gren_gammaks}
	G' (x) &= \gamma'_0(\alpha) + \sum_{k=1}^\infty \gamma'_k(\alpha) (\ln x)^k, \qquad 	\gamma'_k = \sum_{j=k}^\infty \binom{j}{k}\gamma_j (\ln\delta)^{j-k}.
\end{align}
The first two of the new coefficients are, explicitly,
\begin{align}\label{gamma1s}
	\gamma'_0(\alpha) =G'(\alpha,1)=  G(\alpha,\delta^{-1}) , \qquad \text{and} \qquad  \gamma'_1(\alpha) = x  \partial_{x } G'(\alpha,x )\big|_{x =1} = x \partial_x G(\alpha,x)\big|_{x=\delta^{-1}}. 
\end{align}
Assume that $\ln(\delta)$ is a power series in $\alpha$ without pole terms. The shifted functions $\gamma'_k(\alpha)$ therefore start with the same coefficients as $\gamma_k(\alpha)$, using \cref{gammak_orders} we have
\begin{align}\label{gammaks_orders}
	\gamma'_k(\alpha) \in \mathcal O \left( \alpha^k \right) , \qquad \gamma'_k(\alpha) = \gamma_k(\alpha) + \mathcal O \left( \alpha^{k+1} \right) .
\end{align}
This means that the leading log function $H_1(z)$ \cref{leadinglog_first} coincides.

\begin{figure}[htbp]
	\includegraphics[width=.7\linewidth]{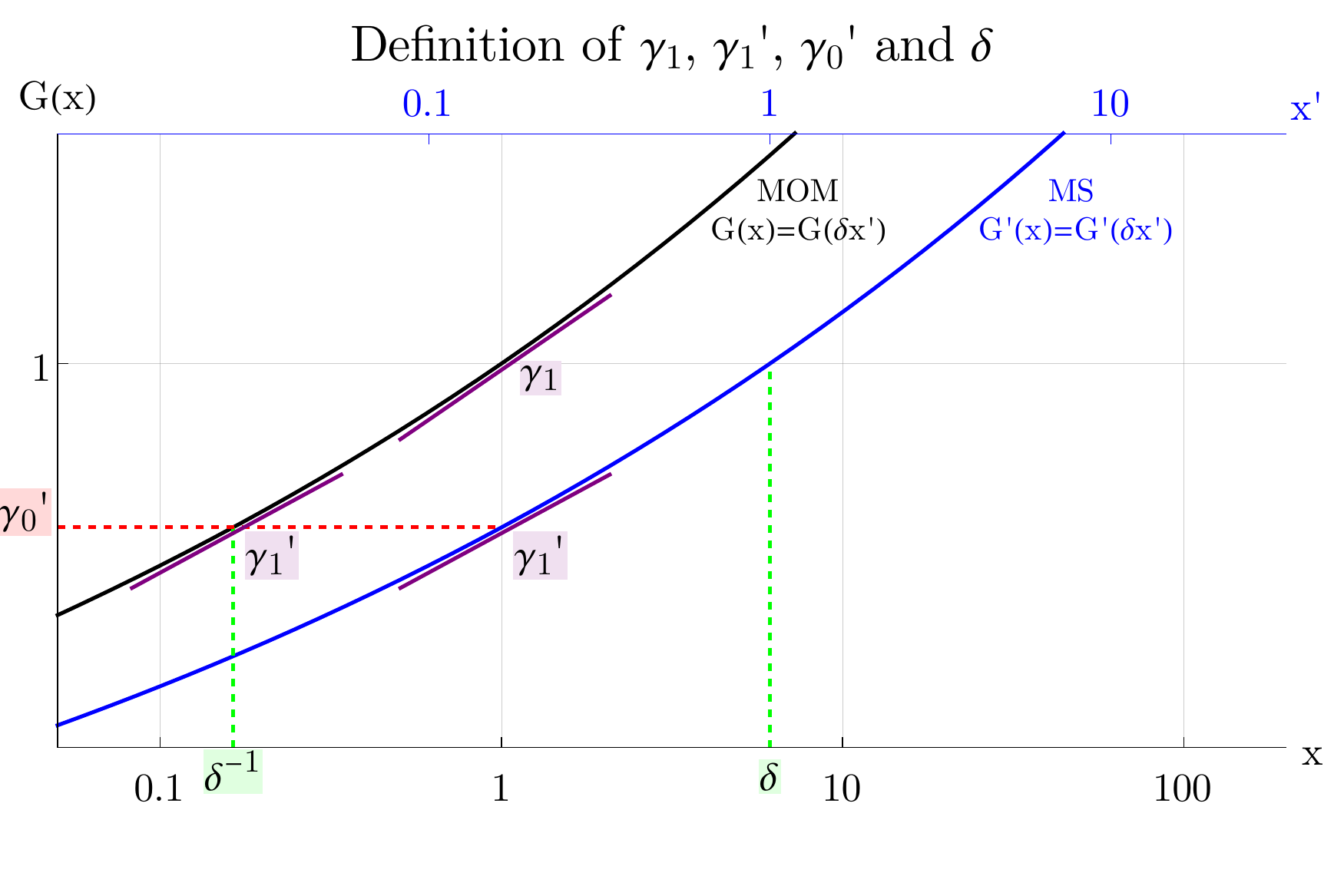}
	\caption{A hypothetical Green function in MOM ($G(x)$, black) and MS ($G'(x)$, blue). Both are initially given as functions of $x$ (lower horizontal axis). We define a rescaled variable $x'=x\cdot \delta^{-1}$ such that $G'(x')=G(x)$. This means that $x=\delta$ (indicated in green) is the would-be kinematic renormalization point $x'=1$. Conversely, $\gamma'_0$ is the value $G'(x=1)$ (red). This equals $G(x=\delta^{-1})$. The derivative $\gamma_1$ of $G(x)$ at $x=1$ (purple) is different from the MS-derivative $\gamma'_1$ at $x=1$, which in turn equals the derivative of $G(x)$ at $x=\delta^{-1}$. It is $\gamma=\gamma_1$, but the new anomalous dimension $\gamma'$ does not have a graphical representation in this plot since it is not defined as a simple derivative of $G'$ with respect to $x$ or $x'$, see \cref{gamma_def}. }
	\label{pic_def_gamma}
\end{figure}

\begin{lem}\label{thm_shifted}
	Assume that the expansion functions $\gamma_k(\alpha)$ from \cref{Gren_gammak} are formal power series and satisfy the Callan-Symanzik equation \cref{rge_MOM}, and the kinematic renormalization point is shifted by a factor  $\delta(\alpha)$ according to \cref{G_Gs_def}, which is a power series in $\alpha$ as well.  Then:
	\begin{enumerate}
		\item The new expansion functions $\gamma'_k(\alpha)$ are given by \cref{Gren_gammaks} and they again fulfil a Callan-Symanzik equation, but with the new anomalous dimension and beta function
		\begin{align}\label{shifted_gammabeta}
			\gamma'(\alpha) &:= \frac{\gamma (\alpha)}{   1+s \gamma(\alpha)\cdot  \alpha  \partial_\alpha \ln\delta(\alpha)   }, \qquad \beta'(\alpha) := s \gamma'(\alpha),
		\end{align}
	\item The shifted anomalous dimension satisfies $\gamma'(\alpha) = \gamma(\alpha) + \mathcal{O} \left( \alpha^3 \right) $.
	\end{enumerate}
\end{lem}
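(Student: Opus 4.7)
My strategy is to show that the original Callan-Symanzik equation \cref{rge}, applied at the shifted point $z = \delta(\alpha)\,x$, produces a Callan-Symanzik equation for $G'(\alpha,x) := G(\alpha,\delta(\alpha)\,x)$ once one accounts for the $\alpha$-dependence of $\delta$ via the chain rule. The formula for $\gamma'_k$ in \cref{Gren_gammaks} is immediate: expanding $(\ln(\delta x))^k = (\ln\delta + \ln x)^k$ in the log series \cref{Gren_gammak} for $G$ and collecting powers of $\ln x$ produces the binomial sum. The real content of part (1) is therefore the new Callan-Symanzik equation with the shifted anomalous dimension.

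Concretely, by the chain rule,
\[
x\,\partial_x G'(\alpha,x) = x\,\delta(\alpha)\,\partial_z G(\alpha,z)\big|_{z=\delta x} = \bigl(z\,\partial_z G\bigr)(\alpha,z)\big|_{z=\delta x},
\]
whereas, differentiating $G'(\alpha,x) = G(\alpha,\delta(\alpha)\,x)$ in $\alpha$ at fixed $x$,
\[
\alpha\,\partial_\alpha G'(\alpha,x) = \bigl(\alpha\,\partial_\alpha G\bigr)(\alpha,z)\big|_{z=\delta x} + \bigl(\alpha\,\partial_\alpha \ln\delta(\alpha)\bigr)\,x\,\partial_x G'(\alpha,x),
\]
the extra term coming from $\partial_\alpha \delta(\alpha) = \delta(\alpha)\,\partial_\alpha \ln\delta(\alpha)$. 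Solving the second identity for $(\alpha\partial_\alpha G)(\alpha,z)|_{z=\delta x}$ and inserting it into \cref{rge} evaluated at $z=\delta x$ gives
\[
\gamma(\alpha)\bigl(1 + s\alpha\partial_\alpha\bigr) G'(\alpha,x) = \bigl(1 + s\gamma(\alpha)\,\alpha\,\partial_\alpha \ln\delta(\alpha)\bigr)\,x\,\partial_x G'(\alpha,x),
\]
which is exactly the Callan-Symanzik equation \cref{rge} for $G'$ with the shifted prefactor $\gamma'(\alpha)$ from \cref{shifted_gammabeta}. The identity $\beta' = s\gamma'$ follows by the same reasoning as \cref{beta_gamma} because the invariant charge is still of the form \cref{Qs}.

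Part (2) is a short expansion argument. The hypothesis that $\ln\delta(\alpha)$ is a formal power series implies $\alpha\,\partial_\alpha \ln\delta(\alpha) \in \mathcal{O}(\alpha)$ since the derivative annihilates the constant term. Combined with $\gamma(\alpha) \in \mathcal{O}(\alpha)$ from \cref{gammak_orders}, the denominator in \cref{shifted_gammabeta} equals $1 + \mathcal{O}(\alpha^2)$, and a geometric expansion yields $\gamma'(\alpha) = \gamma(\alpha) - s\gamma(\alpha)^2\,\alpha\,\partial_\alpha \ln\delta(\alpha) + \cdots = \gamma(\alpha) + \mathcal{O}(\alpha^3)$. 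The main obstacle throughout is purely bookkeeping: one must carefully distinguish the $\alpha$-derivative at fixed $x$ (which acts on $G'$) from the one at fixed $z=\delta(\alpha)\,x$ (which appears in the original Callan-Symanzik equation), and keep track of the correction $(\alpha\,\partial_\alpha \ln\delta)\,x\,\partial_x G'$ that accompanies the change of variable; once this is done, the rearrangement into the new Callan-Symanzik form is pure algebra.
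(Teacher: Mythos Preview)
Your proof is correct, and it takes a genuinely different route from the paper's own argument. The paper works entirely at the level of the coefficient functions: it differentiates the binomial sum $\gamma'_k = \sum_{j\ge k}\binom{j}{k}\gamma_j(\ln\delta)^{j-k}$ with respect to $\alpha$, feeds in the recursion \cref{rge_MOM} for the $\gamma_j$, and regroups the resulting double sums until the new recursion $(k+1)\gamma'_{k+1} = \gamma'(1+s\alpha\partial_\alpha)\gamma'_k$ emerges. You instead lift everything back to the Green function $G'(\alpha,x)=G(\alpha,\delta(\alpha)x)$ and apply the chain rule once to convert the original Callan--Symanzik equation \cref{rge} at $z=\delta x$ into one for $G'$; the coefficient recursion then drops out by reading off powers of $\ln x$.

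Your approach is shorter and conceptually cleaner: the single chain-rule identity replaces all of the paper's index gymnastics, and it makes transparent \emph{why} the correction factor $1+s\gamma\,\alpha\partial_\alpha\ln\delta$ appears (it is exactly the Jacobian of the $\alpha$-dependent rescaling acting on $x\partial_x$). The paper's series manipulation, on the other hand, stays closer to the data one actually has in hand (the $\gamma_k$ themselves rather than a resummed $G$) and verifies the recursion \cref{rge_MOM} for $\gamma'_k$ directly, without passing through the PDE \cref{rge}; this is a minor advantage if one wants to treat the $\gamma_k$ as purely formal power series without invoking convergence of the log-expansion. For part~(2) the two arguments are essentially identical.
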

\begin{proof}
	(1) The series representation of $\gamma'_k(\alpha)$ in \cref{Gren_gammaks} follows algebraically from expanding $\ln (x \cdot \delta(\alpha)) = \ln x + \ln \delta(\alpha)$ in the original log expansion \cref{Gren_gammak}. Compute the derivative of this series, using the fact that $\gamma_j(\alpha)$ satisfy \cref{rge_MOM}, and identify the resulting series to obtain
	\begin{align*}
		\alpha \partial_\alpha    \gamma'_k(\alpha)	&=  	-\frac{\gamma}{s\gamma}\gamma'_k  +\frac 1 {s\gamma}\sum_{j=k+1}^\infty \frac{j!(k+1)\gamma_{j} \left( \ln \delta \right) ^{j-1-k} }{(j-1-k)!(k+1)!} + \alpha \partial_\alpha \ln \delta \cdot \sum_{j=k}^\infty \frac{j!(k+1) \gamma_j   \left( \ln \delta \right) ^{j-k-1}  }{(j-k-1)!(k+1)!} \\
		(k+1)  \gamma'_{k+1} &= \frac{\gamma}{1+s\gamma \alpha \partial_\alpha \ln \delta}\cdot   \gamma'_k + \frac{s\gamma}{1+s\gamma \alpha \partial_\alpha \ln \delta}\cdot \alpha \partial_\alpha   \gamma'_k.
	\end{align*}
This is again the Callan-Symanzik equation, but with a different anomalous dimension and beta function as claimed in \cref{shifted_gammabeta}.

(2) Follows from \cref{gammak_orders,shifted_gammabeta} upon noting that $\gamma(\alpha) \cdot \alpha \partial_\alpha \ln \delta(\alpha) \in \mathcal{O}(\alpha^2)$.
\end{proof}

We observe from \cref{shifted_gammabeta} that for a linear Dyson-Schwinger equation ($s=0$), the anomalous dimensions in MS and MOM agree. On an algebraic level, this was remarked in \cite[Ex.  5.12]{panzer_renormalization_2015}. 

Note further that, if $\delta(\alpha)$ depends on $\alpha$, the anomalous dimension $\gamma'(\alpha)$ of the shifted solution is not equal to $\gamma'_1(\alpha)$, the first derivative of the Green function \cref{gamma1s}. These two functions coincide only in the case of a kinematic renormalization scheme with a fixed ($\alpha$-independent) renormalization point.
If $\delta(\alpha)$  depends on $\alpha$  and $s\neq 0$ then there are two distinct effects happening at the same time: First, $\gamma'_1(\alpha)\neq \gamma_1(\alpha)$ because these derivatives are taken at different points, see \cref{pic_def_gamma}, and second, $\gamma'(\alpha)\neq \gamma(\alpha)$ because the moving renormalization point influences the Callan-Symanzik equation and the definition \cref{gamma_def}. 

In the remainder of the paper, we will be concerned with a reverse situation to \cref{thm_shifted}. We are given two sets of functions, $\lbrace \gamma_j\rbrace$ and $ \lbrace\gamma'_j\rbrace$, for example by expanding two known Green functions according to \cref{Gren_gammak}, and want to find $\delta(\alpha)$. We will mostly consider cases where both $\lbrace \gamma_j\rbrace$ and $ \lbrace\gamma'_j\rbrace$ are solutions of the same Dyson-Schwinger equation and hence both fulfil a Callan-Symanzik equation \ref{rge_MOM}. 
But generally, the CSE is neither necessary nor sufficient for $\delta(\alpha)$ to exist. For example, let $\gamma_k(\alpha) = k \alpha^k$ and $\gamma'_k(\alpha) = \alpha^k (\alpha^2 + k) (1 - \alpha^2)^{-2 - k} $, then neither $\gamma_j$ nor $\gamma'_j$ fulfil a CSE but they are related according to \cref{Gren_gammaks} with $\ln \delta (\alpha) = \alpha$. Conversely, the \emph{chain approximation}  is an example where the CSE is not satisfied and $\delta(\alpha)$ does not exist, see \cref{chains}. Finally, two solutions of \emph{different} Dyson-Schwinger equations do both fulfil a CSE and still they are in general not related by a $\delta(\alpha)$ as they fail to satisfy \cref{thm_shifted} (2).

\subsection{Linear case}\label{linear}

The goal of this paper is to determine the shift $\delta(\alpha)$ between kinematic renormalization (with a fixed renormalization point $x=1$) and MS-renormalization.  If we were to know the exact MS-solution $\hat G(\alpha,\hat x)$ then this amounts to finding the point $\hat x=\delta^{-1}(\alpha)$ where $\hat G(\alpha,\delta^{-1})=1$.

\begin{lem}\label{lem_linear_delta}
	Let $G(\alpha,x)$ and $G'(\alpha,x)$ be perturbative solutions of linear Dyson-Schwinger equations ($s=0$ in both cases) with equal anomalous dimensions $\gamma(\alpha)=\gamma'(\alpha)$. Assume that  $\gamma_0(\alpha), \gamma_0'(\alpha) = 1+ \mathcal O (\alpha)$ are formal power series starting with unity. Then the Green functions coincide in the sense of \cref{G_Gs_def} if one chooses the $\alpha$-dependent renormalization point given by the power series
	\begin{align}\label{linear_delta1} 
		\ln \delta(\alpha) &= \frac{1}{\gamma(\alpha)} \ln \frac{\gamma'_0(\alpha)}{\gamma_0(\alpha)}.
	\end{align}
\end{lem}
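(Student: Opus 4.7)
The plan is to exploit the scaling solution (scaling\_solution) for linear DSEs and reduce the claim to a straightforward algebraic identification of $\delta(\alpha)$.

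First I would observe that both $G$ and $G'$, being solutions of linear DSEs with $s=0$, satisfy the Callan--Symanzik equation (\ref{rge}) in the form $\gamma(\alpha)\, G(\alpha,x) = x\,\partial_x G(\alpha,x)$, and likewise for $G'$ with $\gamma'(\alpha)=\gamma(\alpha)$. However, unlike in the MOM-at-$x=1$ setting of (\ref{scaling_solution}), the hypothesis allows $\gamma_0(\alpha)=G(\alpha,1)$ and $\gamma_0'(\alpha)=G'(\alpha,1)$ to differ from unity. Solving the first-order linear ODE in $x$ (or equivalently iterating the recursion (\ref{rge_MOM}) starting from a nontrivial $\gamma_0$) yields the generalized scaling form
\begin{equation*}
	G(\alpha,x) = \gamma_0(\alpha)\, x^{\gamma(\alpha)}, \qquad G'(\alpha,x) = \gamma_0'(\alpha)\, x^{\gamma(\alpha)}.
\end{equation*}

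Next I would impose the required identification (\ref{G_Gs_def}), namely $G(\alpha,\delta(\alpha)\cdot x') = G'(\alpha,x')$, and substitute the two scaling forms. The powers of $x'$ cancel on both sides (since $\gamma=\gamma'$), leaving the purely $\alpha$-dependent condition
\begin{equation*}
	\gamma_0(\alpha)\, \delta(\alpha)^{\gamma(\alpha)} = \gamma_0'(\alpha).
\end{equation*}
Taking a logarithm and dividing by $\gamma(\alpha)$ produces the claimed formula (\ref{linear_delta1}).

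The only nontrivial step is checking that this division makes sense at the level of formal power series: $\gamma(\alpha) \in \mathcal{O}(\alpha)$ has, generically, a nonvanishing leading coefficient $c_1$, so $1/\gamma(\alpha)$ is a Laurent series with a simple pole at $\alpha=0$. The hypothesis $\gamma_0(\alpha),\gamma_0'(\alpha)=1+\mathcal{O}(\alpha)$ gives $\gamma_0'(\alpha)/\gamma_0(\alpha) = 1+\mathcal{O}(\alpha)$, hence $\ln(\gamma_0'/\gamma_0)\in\mathcal{O}(\alpha)$, and the simple pole is exactly cancelled. Thus $\ln\delta(\alpha)$ is a bona fide formal power series in $\alpha$, consistent with the assumption of \cref{thm_shifted}. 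This last bookkeeping on orders is the only place where anything could go wrong, but it is immediate from the hypotheses.
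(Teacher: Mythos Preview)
Your proof is correct and follows essentially the same route as the paper: both use the scaling form $G(\alpha,x)=\gamma_0(\alpha)\,x^{\gamma(\alpha)}$ of the linear Callan--Symanzik equation, impose the matching condition $\gamma_0\,\delta^{\gamma}=\gamma_0'$, and then verify that the pole of $1/\gamma(\alpha)$ is cancelled by $\ln(\gamma_0'/\gamma_0)\in\mathcal{O}(\alpha)$ so that $\ln\delta(\alpha)$ is a genuine power series.
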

\begin{proof}
	By \cref{thm_shifted} for $s=0$, a necessary condition for a shift $\delta(\alpha)$ to exist is that the two Green functions must have the same anomalous dimension, which is guaranteed by assumption. It remains to show the reverse, that given the sets of functions $\lbrace \gamma_j \rbrace$ and $\lbrace \gamma'_j\rbrace$, it is always possible to find $\delta(\alpha)$.

	In the linear case, $s=0$, the solution of the Callan-Symanzik equation \cref{rge} is a monomial $G(\alpha,x)=\gamma_0(\alpha) x^{\gamma(\alpha)}$. Using \cref{G_Gs_def}, we demand $\gamma_0'  = \gamma_0  \delta^{\gamma}$, which leads to the claimed formula.  The functions $\gamma(\alpha), \gamma_0(\alpha)$ and $\gamma_0'(\alpha)$ are power series and $\gamma'_0(\alpha)/\gamma(\alpha) = 1 + \mathcal O (\alpha)$  by assumption. Therefore $\ln(\gamma'_0/\gamma_0) \in \mathcal{O}(\alpha)$ and the pole $1/\alpha$ of $1/\gamma(\alpha)$ from \cref{gammak_orders} is cancelled. The right hand side of \cref{linear_delta1} is a formal power series indeed.
\end{proof}

In MOM, we have $\gamma_0(\alpha)=1$ by the renormalization condition \cref{MOM_def}, and hence \cref{scaling_solution}. In MS, the solution will have some $\hat \gamma_0(\alpha)\neq 1$, which is the value of $G'(\alpha,1)$, see \cref{gamma1s}. \Cref{lem_linear_delta} thus specializes to
\begin{align}\label{linear_delta}
	  \qquad \ln \delta(\alpha) &= \frac{\ln  \hat \gamma_0}{\gamma(\alpha)}.
\end{align}

It is also possible to infer $\hat \gamma_0(\alpha)$ from the MOM-solution alone. To this end, note that the MS Green function is proportional to the one in MOM. Going back to the definition \cref{renormalization} of the $Z$-factors, this means that $\hat Z G_0 = \hat \gamma_0 \cdot Z G_0+ \mathcal{O}(\epsilon)$. From the integral representation \cref{Z_beta_gamma}, using that in MS $\hat \gamma(\alpha,\epsilon)=\hat \gamma(\alpha)$, we get 
\begin{align*}
	 e^{ -\int \limits_0^{\alpha } \frac{\d u}{u} \frac{\hat \gamma \left(u \right)  }{\epsilon    } }&=\hat \gamma_0(\alpha)\cdot e^{ -\int \limits_0^{\alpha } \frac{\d u}{u} \frac{\gamma \left(u,\epsilon \right)  }{\epsilon    }} + \mathcal{O}(\epsilon) \qquad \Rightarrow \qquad \gamma(\alpha,\epsilon) = \gamma(\alpha) - \epsilon \alpha \partial_\alpha \ln \hat \gamma_0(\alpha) + \mathcal O(\epsilon^2).
\end{align*}
If we know the MOM anomalous dimension for $\epsilon \neq 0$ then we can compute $\hat \gamma_0$ and hence $\ln \delta$ from
\begin{align}\label{linear_delta2}
	\alpha \partial_\alpha \ln \hat \gamma_0(\alpha) &= -\left[ \epsilon^1 \right] \gamma(\alpha,\epsilon).
\end{align}
In fact, for a linear DSE all coefficients of the MOM counter term $Z(\alpha,\epsilon)$ are directly given by  the $\epsilon$-expansion of the MOM anomalous dimension via  \cref{Z_beta_gamma}:
\begin{align}\label{linear_Z_expansion}
	Z(\alpha,\epsilon) &=: \exp \left( -\sum_{n=-1}^\infty \epsilon^n\cdot z_n (\alpha)  \right)  , \qquad \alpha \partial_\alpha z_n(\alpha) =   [\epsilon^{n+1}]\gamma(\alpha ,\epsilon) , \qquad z_0(\alpha) = \ln \hat \gamma_0(\alpha).
\end{align}
For the counter term $\hat Z(\alpha,\epsilon)$ in MS,  all $\hat z_{n\geq 0}(\alpha)$ vanish, as do the $\epsilon$-dependent parts of $\hat \gamma(\alpha,\epsilon)=\hat \gamma(\alpha)$.

\subsection{Non-linear case}\label{methods_nonlinear}

\begin{lem}\label{lem_nonlinear}
	Assume $\gamma_k(\alpha)$ and $\gamma(\alpha)$ are the expansion coefficients resp. the anomalous dimension in kinematic renormalization, and $\gamma'_k(\alpha), \gamma'(\alpha)$ the corresponding quantities in MS, of the perturbative solution of the same  Dyson-Schwinger equation of type \cref{dse_general}. Then the first two orders of the anomalous dimensions coincide, $\gamma'(\alpha) = \gamma(\alpha) + \mathcal O \left( \alpha^3 \right) $.
 
\end{lem}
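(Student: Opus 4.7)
The plan is to reduce the statement to Lemma~\ref{thm_shifted}(2), which already gives exactly the bound $\gamma'(\alpha) = \gamma(\alpha) + \mathcal{O}(\alpha^3)$ whenever the MS and MOM solutions are related by a shift $\delta(\alpha)$ whose logarithm is a formal power series in $\alpha$. So the work lies in producing such a shift for arbitrary non-linear DSEs of the form \cref{dse_general}.

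First, I would invoke the Hopf-algebraic equivalence of MS and MOM established in \cite{panzer_renormalization_2015,kreimer_renormalization_2013}, cited in the introduction, to assert that there is a formal power series $\delta(\alpha)$ such that $G'(\alpha,x) = G(\alpha,\delta(\alpha)\cdot x)$, where $G'$ denotes the MS solution and $G$ the MOM solution of the same DSE. Alternatively, one can construct $\delta(\alpha)$ directly and order-by-order in $\alpha$: at order $\alpha^0$ both Green functions are identically $1$, and at each higher order $m$ the equation $G'(\alpha,1) = G(\alpha,\delta(\alpha))$ can be solved uniquely for the coefficient $[\alpha^m]\delta$ because the derivative $\partial_x G(\alpha,x)\big|_{x=1}$ equals $\gamma(\alpha)$, which is a unit in the completed ring of formal Laurent series in $\alpha$ with $\gamma(\alpha)\in\mathcal{O}(\alpha)$; since both sides are formal power series in $\alpha$, the resulting $\ln\delta(\alpha)$ is itself a formal power series in $\alpha$ (no poles, no $\log\alpha$).

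Second, with $\delta(\alpha)$ in hand, the hypotheses of Lemma~\ref{thm_shifted} are satisfied: the $\gamma_k(\alpha)$ obey the Callan-Symanzik recursion \cref{rge_MOM} by construction in MOM, and $\delta(\alpha)$ is a formal power series. Therefore part (2) of that lemma applies and yields the conclusion $\gamma'(\alpha) = \gamma(\alpha) + \mathcal{O}(\alpha^3)$, noting once more that $\alpha\partial_\alpha\ln\delta(\alpha) \in \mathcal{O}(\alpha)$ regardless of the constant term of $\ln\delta$.

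The main obstacle is the clean justification of the existence of $\delta(\alpha)$ as a formal power series. The algebraic citation disposes of this abstractly, but a self-contained argument requires the order-by-order construction above together with a verification that $\gamma'_0(\alpha) := G'(\alpha,1)$ is a formal power series starting with $1$, so that the equation $G(\alpha,\delta(\alpha)) = \gamma'_0(\alpha)$ is solvable for $\delta(\alpha) = 1 + \mathcal{O}(\alpha)$. A purely direct alternative would be to verify scheme-independence of $c_1$ and $c_2$ separately: $c_1$ is scheme-independent because the leading pole of the primitive kernel is insensitive to finite parts, and for $c_2$ one would compare the $1/\epsilon$ residues of the two-loop nested graph in MS versus MOM and show that the difference arising from the one-loop counterterms cancels between $Z_G$ and the inserted subgraph. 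This second route, however, is substantially longer than the two-line reduction to Lemma~\ref{thm_shifted}(2), so I would favour the shift-based proof.
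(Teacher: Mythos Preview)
Your approach differs substantially from the paper's. The paper proves this lemma by a direct two-loop computation: it expands the kernel as $\sum_n f^{(k)}_n \epsilon^n$, writes out the $\alpha^1$ and $\alpha^2$ coefficients of $\gamma(\alpha)$, $\hat\gamma_1(\alpha)$, $\hat\gamma_0(\alpha)$ and $\alpha\partial_\alpha\hat G|_{x=1}$ explicitly in terms of the $f^{(k)}_n$, and then uses the Callan--Symanzik identity $\hat\gamma = \hat\gamma_1/(\hat\gamma_0 + s\,\alpha\partial_\alpha\hat G|_{x=1})$ to exhibit the cancellation at order $\alpha^2$. No shift $\delta(\alpha)$ enters the argument at all.

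Your reduction to Lemma~\ref{thm_shifted}(2) is neat but runs into a circularity problem within the paper's logic: the statement that the MS solution \emph{is} a shifted MOM solution with $\ln\delta(\alpha)$ a formal power series is exactly Theorem~\ref{thm}, and the proof of Theorem~\ref{thm} \emph{uses} Lemma~\ref{lem_nonlinear} to show that the right-hand side of \cref{rge_dDelta2} is pole-free. Appealing to the Hopf-algebraic references may suffice abstractly, but the paper itself, after citing them, writes ``it remains to show that, in our setup, the shift $\delta(\alpha)$ is a well defined power series'' and supplies the present lemma for precisely that step; so treating the power-series property of $\ln\delta$ as already known is at odds with how the paper is organised.

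Your order-by-order construction has an independent gap: solving $G'(\alpha,1)=G(\alpha,\delta(\alpha))$ for $\delta$ only matches the two Green functions at the single point $x=1$. It does not yield $G'(\alpha,x)=G(\alpha,\delta(\alpha)\,x)$ for all $x$, which is what you need before Lemma~\ref{thm_shifted} can say anything about the \emph{MS} anomalous dimension $\hat\gamma$. The shifted-MOM function $\tilde G(\alpha,x):=G(\alpha,\delta x)$ carries some anomalous dimension $\tilde\gamma$ via Lemma~\ref{thm_shifted}, but nothing in your construction forces $\tilde\gamma=\hat\gamma$. Ironically, the ``purely direct alternative'' you dismiss at the end --- comparing the one- and two-loop contributions in the two schemes --- is precisely the route the paper takes, and it is short: a handful of lines once the $f^{(k)}_n$ are introduced.
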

\begin{proof}
Let $f^{(k)}_n$ be the coefficients of the $\epsilon$-expansion of the kernel graph according to \cref{series}. Then the first coefficients of an explicit perturbative solution of \cref{dse_general} in MOM resp. MS are
	\begin{align*}
			\gamma(\alpha) &= -f^{(0)}_{-1} \alpha + (s+1)\left( -2 f^{(0)}_{-1} f^{(1)}_0 - 2f^{(0)}_0 f^{(1)}_{-1} + 2 f^{(0)}_{-1} f^{(0)}_0 \right) \alpha^2 + \mathcal O \left( \alpha^3 \right), \\
			\hat \gamma_1(\alpha) &= -f^{(0)}_{-1} \alpha + (s+1)\left( -2 f^{(0)}_{-1} f^{(1)}_0 - 2 f^{(0)}_0 f^{(1)}_{-1} +  f^{(0)}_{-1} f^{(0)}_0 \right) \alpha^2 + \mathcal O \left( \alpha^3 \right),  \\
			\hat \gamma_0(\alpha) &= 1+\alpha f^{(0)}_0 + \mathcal{O}(\alpha^2), \qquad \alpha \partial_\alpha \hat G \big|_{x=1} = \alpha f^{(0)}_0 + \mathcal{O}(\alpha^2).
	\end{align*}
	 Using the Callan-Symanzik equation  \cref{cse} and \cref{gamma1s}, the anomalous dimension in MS is
	\begin{align*}
		\hat \gamma(\alpha) &= \frac{\hat \gamma_1(\alpha)}{\hat \gamma_0(\alpha) + s \alpha \partial_\alpha \hat G |_{x=1}} = \frac{\hat \gamma_1(\alpha)}{ 1+ (s+1)\alpha f^{(0)}_0   + \mathcal{O}\left( \alpha^2 \right)   } = \gamma(\alpha) + \mathcal{O}\left( \alpha^3 \right).  
	\end{align*}
\end{proof}

We now assume that we know the MOM- and the MS-solution and their corresponding counter terms, and hence the renormalization group functions, by explicit calculation. The remaining task is then to extract the shift $\delta(\alpha)$ from this data. 

The constant coefficient of the power series $\ln \delta(\alpha)$ can be inferred from \cref{Gren_gammaks}, $\hat \gamma_0 = 1 + \gamma_1 \ln \delta + \mathcal{O}(\alpha^2)$. We insert the series from the proof of \cref{lem_nonlinear} and read off
\begin{align}\label{lndelta_1}
	\ln\delta(\alpha) &= - f^{(0)}_0  (f^{(0)}_{-1})^{-1} + \mathcal O \left( \alpha \right) .
\end{align}
Note that $f^{(0)}_{-1} \neq 0$ in physically sensible kernels. 
Remarkably, $\delta(0)= e^{- f^{(0)}_0 / f^{(0)}_{-1}}\neq 1$ unless $f^{(0)}_0=0$, so the shift does not necessarily vanish  for vanishing coupling. This result does not depend on the invariant charge in the DSE, or whether it is linear or non-linear, but just on the primitive kernel.

\begin{thm}\label{thm}
	Let $G(\alpha,x)$ and $\hat G(\alpha,x)$ be the perturbative solutions of the same propagator-type Dyson-Schwinger equation \cref{dse_general}, where $G$ uses kinematic renormalization and $\hat G$  Minimal Subtraction. Assume that $\gamma(\alpha),\hat \gamma(\alpha)$ are power series with a non-vanishing term $\propto \alpha$. Then there is a unique power series $\delta(\alpha)$ such that $G(\alpha, \delta(\alpha) \cdot x) = \hat G(\alpha,x)$ for all $x$, given by \cref{lndelta_1} and
		\begin{align}\label{rge_dDelta2}
		\frac{\partial}{\partial \alpha} \ln \delta(\alpha) &=  \frac{1}{s \alpha} \left( \frac{1}{\hat \gamma(\alpha)}-\frac{1}{  \gamma(\alpha)}\right) = \frac{\gamma(\alpha)-\hat \gamma(\alpha)}{s \alpha \hat \gamma(\alpha) \gamma(\alpha)} .
	\end{align}
\end{thm}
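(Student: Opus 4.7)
The plan is to reverse the direction of \cref{thm_shifted}: rather than read off the shifted RG functions from a given $\delta(\alpha)$, I would impose that the shifted MOM Green function equal $\hat G$ and then solve for $\delta(\alpha)$. Setting $G'(\alpha,x) := G(\alpha, \delta(\alpha)\cdot x)$, part (1) of \cref{thm_shifted} tells me that $G'$ satisfies the Callan-Symanzik equation with shifted anomalous dimension $\gamma'(\alpha) = \gamma(\alpha)/(1 + s\gamma(\alpha)\cdot \alpha\partial_\alpha \ln\delta(\alpha))$. A necessary condition for $G' = \hat G$ is $\gamma'(\alpha) = \hat\gamma(\alpha)$, which on solving algebraically for $\alpha\partial_\alpha \ln\delta(\alpha)$ yields exactly \cref{rge_dDelta2}.

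Next I would verify that \cref{rge_dDelta2} admits a formal power series solution. By \cref{lem_nonlinear}, $\gamma-\hat\gamma \in \mathcal{O}(\alpha^3)$, while under the hypothesis that both anomalous dimensions start at $\alpha^1$ with nonzero coefficient, $s\alpha\gamma\hat\gamma \in \mathcal{O}(\alpha^3)$; the quotient is thus a genuine power series in $\alpha$ with no negative-order pole, and termwise integration produces $\ln\delta(\alpha)$ up to one integration constant. That constant is fixed by evaluating at $\alpha = 0$: the expansion $\hat\gamma_0(\alpha) = 1 + \gamma_1(\alpha)\ln\delta(\alpha) + \mathcal{O}(\alpha^2)$ used in the derivation of \cref{lndelta_1} gives $\ln\delta(0) = -f^{(0)}_0/f^{(0)}_{-1}$. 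Uniqueness is then automatic, since a formal power series is determined by its derivative together with its value at the origin.

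The hard part will be sufficiency: showing that the $\delta(\alpha)$ constructed above actually delivers $G'(\alpha,x) = \hat G(\alpha,x)$ at all $x$, not merely equality of the anomalous dimensions. The route I envision is through the log expansion \cref{Gren_gammaks}: comparing coefficients of $(\ln x)^k$ in the CSE produces a recursion generalizing \cref{rge_MOM} to the case $\gamma_0 \neq 1$, which determines $\gamma_k(\alpha)$ for $k \geq 1$ as a differential polynomial in the common anomalous dimension and in the constant coefficient $\gamma_0(\alpha)$. Since $G'$ and $\hat G$ now satisfy the same recursion, $G' = \hat G$ reduces to showing $\gamma'_0(\alpha) = \hat\gamma_0(\alpha)$ as power series in $\alpha$. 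I would establish this inductively: the orders $\alpha^0$ and $\alpha^1$ match by \cref{lndelta_1}, and at each higher order, differentiating the identity $\gamma'_0(\alpha) = G(\alpha, \delta(\alpha))$ in $\alpha$ by the chain rule, and substituting the ODE \cref{rge_dDelta2} for $\partial_\alpha \ln\delta(\alpha)$, should reproduce the same first-order differential relation in $\alpha$ satisfied by $\hat\gamma_0$ (obtained from the CSE for $\hat G$ evaluated at $x=1$), propagating the matching to all higher orders and closing the induction.
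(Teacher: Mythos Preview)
Your derivation of \cref{rge_dDelta2} by inverting \cref{thm_shifted}, and your verification via \cref{lem_nonlinear} that the right-hand side is a genuine power series, match the paper's argument exactly. The paper also handles the linear case $s=0$ separately through \cref{lem_linear_delta}, since \cref{rge_dDelta2} becomes singular there; you do not mention this, but it is a minor omission.

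The substantive divergence is in the sufficiency step. The paper does \emph{not} prove directly that the constructed $\delta(\alpha)$ yields $G(\alpha,\delta x)=\hat G(\alpha,x)$. Instead it imports, from the Hopf-algebraic literature \cite{kreimer_chen_1999,kreimer_renormalization_2013,panzer_renormalization_2015}, the fact that MS and MOM \emph{are} related by some shift of the renormalization point; since any such shift must satisfy \cref{rge_dDelta2} with constant term \cref{lndelta_1}, and the solution to that initial-value problem is unique, the constructed $\delta$ is the correct one, and then \cref{thm_shifted} transports $\hat\gamma_0$ to all $\hat\gamma_k$.

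Your attempted self-contained sufficiency argument has a gap. Differentiating $\gamma'_0(\alpha)=G(\alpha,\delta(\alpha))$ and inserting both the MOM Callan--Symanzik equation and the ODE \cref{rge_dDelta2} gives, after simplification,
\[
\hat\gamma(\alpha)\,(1+s\alpha\partial_\alpha)\gamma'_0(\alpha)=\gamma'_1(\alpha),
\]
which is indeed the $k=0$ instance of the CSE recursion for $G'$ with anomalous dimension $\hat\gamma$. But the identical relation for $\hat G$ reads $\hat\gamma(1+s\alpha\partial_\alpha)\hat\gamma_0=\hat\gamma_1$, and this is not a closed first-order ODE for $\hat\gamma_0$: it involves $\hat\gamma_1$. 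The CSE by itself imposes no constraint on $\gamma_0$ --- any power series beginning with $1$ generates a consistent tower $\{\gamma_k\}$ --- so matching anomalous dimensions and the first two orders of $\gamma_0$ cannot propagate to higher orders without additional input. What actually pins down $\hat\gamma_0$ is the DSE together with the MS renormalization condition, and establishing that the shifted MOM solution reproduces this is precisely the content of the cited Hopf-algebraic results. If you want a self-contained proof you must bring the DSE into the induction, not just the CSE.
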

\begin{proof}
	The function $\delta(\alpha)$ for the linear case $s=0$ was constructed explicitly in \cref{lem_linear_delta}. It remains to consider $s\neq 0$.
	
	The fact \emph{that} MS and MOM are related via a change in renormalization point, or equivalently, via a change in the value of the renormalized coupling, is known from a Hopf-algebraic analysis, see \cite{kreimer_chen_1999,kreimer_renormalization_2013,panzer_renormalization_2015}. It remains to show that, in our setup, the shift $\delta(\alpha)$ is a  well defined power series.
	
	From \cref{thm_shifted} we know how shifting the kinematic renormalization point induces a change in the anomalous dimension. Solving  \cref{shifted_gammabeta} for $\delta(\alpha)$ produces \cref{rge_dDelta2}.

By \cref{gammak_orders,gammaks_orders}, and assumption, the denominator of the last fraction is proportional to $\alpha^3$. But, since $\hat \gamma(\alpha)$ is the anomalous dimension in MS, the numerator is $\gamma(\alpha)-\hat \gamma(\alpha) \in \mathcal{O} (\alpha^3)$ by \cref{lem_nonlinear}. Therefore the right hand side of \cref{rge_dDelta2} is a well defined power series in $\alpha$. It uniquely defines the power series $\delta(\alpha)$ up to a constant summand, which is fixed by \cref{lndelta_1}. 

Since we know that MS and MOM are related via a shifted renormalization point, the so-constructed shift $\delta (\alpha)$ necessarily also gives rise to the correct $\hat \gamma_0(\alpha)$, uniquely defined via
\begin{align*}
	\hat \gamma_0(\alpha) &= \sum_{j=0}^\infty \gamma_j(\alpha) \left( \ln \delta(\alpha) \right) ^j.
\end{align*}
By assumption, the Green function $\hat G(\alpha,x)$ satisfies the Callan-Symanzik equation with anomalous dimension $\hat \gamma(\alpha)$, hence by \cref{thm_shifted} from $\hat \gamma_0(\alpha)$ also all other functions $\hat \gamma_j(\alpha)$ are reproduced correctly.
\end{proof}

Given the solutions of a DSE in MOM and MS, there are at least  three approaches to calculate $\delta(\alpha)$ from this data. 
The first approach directly uses \cref{rge_dDelta2}, where the anomalous dimensions $\gamma(\alpha),\hat \gamma(\alpha)$ can be extracted from the corresponding $Z$-factors with the help of \cref{Z_beta_gamma,MS_Z_poles}.

The second approach utilizes the renormalization group equation in MS derived in \cref{thm_shifted},
\begin{align}\label{rge_MS}
	(k+1)  \hat \gamma _{k+1}(\alpha) &= \frac{\gamma(\alpha)}{1+s\gamma(\alpha) \alpha \partial_\alpha \ln \delta (\alpha)}\cdot  \left( 1+s \alpha \partial_\alpha \right)   \hat \gamma_k(\alpha).
\end{align}
If any two of the MS functions $\hat \gamma_k(\alpha)$, together with the MOM anomalous dimension $\gamma(\alpha)$,  are known, then $\delta(\alpha)$ can be computed. For example, using $\hat \gamma_0$ and $\hat \gamma_1$, one has
\begin{align}\label{rge_dDelta}
	  \frac{\partial}{\partial \alpha} \ln  \delta(\alpha ) &= \frac{ \gamma \cdot  \hat \gamma_0 -  \hat \gamma_1 }{ s \alpha \cdot  \hat \gamma_1 }+\frac{  1  }{  \hat \gamma_1  } \frac{\partial}{\partial \alpha} \hat \gamma_0 \qquad (\text{for }s \neq 0) .
\end{align}

The third, and computationally most efficient, approach  is to compute all MS functions $\hat \gamma_j(\alpha)$ up to some desired maximum $j$ and additionally all MOM functions $\gamma_j(\alpha)$. Next, one writes a power series ansatz for $\ln \delta(\alpha)$ and uses this to formally compute the powers $(\ln \delta(\alpha))^k$. Then the right side of \cref{Gren_gammaks} is a linear system for the unknown coefficients of $\ln \delta(\alpha)$ which can be solved.

All three approaches are different ways to solve the same equations, therefore the resulting $\ln \delta(\alpha)$ agree. The third approach does not involve a derivative and therefore it produces one order higher in $\alpha$ compared to the first two, for the same order of input data.

We want to stress that, despite the above considerations, it is in general not possible to recover the MS Green function from the MOM one or vice versa when only the limit $\epsilon\rightarrow 0$ is known.  The shift $\delta(\alpha)$ is a truly unknown function which depends on the particular DSE and on the integral kernel. 
In the remainder of the paper, we will explicitly compute $\delta(\alpha)$ for three different Dyson-Schwinger equations.

\section[Linear DSE in D=4 dimensions]{Linear DSE in $D=4-2\epsilon$ dimensions}\label{D4}

We first consider  a linear Dyson-Schwinger equation of iterated one-loop Feynman graphs, namely
\begin{align}\label{dse_linear_ms}
	\hat G  ( q^2) &= 1 + \lambda \int \frac{\d^D  k}{(2\pi)^D} \frac{\hat G  ( k^2)}{\left(  k + q \right) ^2  k^2} - \hatrenopk{\lambda \int \frac{\d^D  k}{(2\pi)^D} \frac{\hat G ( k^2)}{\left(  k + q \right) ^2  k^2}}.
\end{align}
Here, $\lambda$ is a coupling constant and $D=4-2\epsilon$ is the spacetime dimension. \Cref{dse_linear_ms} is is the linear DSE (rainbow approximation) for the fermion propagator in Yukawa theory, but scaled and projected onto suitable tensors such that the order zero (tree level) solution is
\begin{align}\label{linear_4d_G0}
	\hat G ^{(0)} \left( q^2 \right) &:= 1.
\end{align}

\subsection{Computation of the coefficients}\label{linear_computation}

One can solve the DSE \cref{dse_linear_ms} to first order by   inserting \cref{linear_4d_G0} into it and computing the integrals according to \cref{series}:
\begin{align}\label{linear_4d_gren1}
	\hat G ^{(1)}(  q^2) 
	&= 1 +  \frac{\lambda}{(4\pi)^2}\left( 1-\hatrenop  \right)  \left[ (4 \pi)^{\epsilon} e^{-\gamma_E \epsilon}\left( \frac{q^2}{m^2} \right) ^{ - \epsilon}\sum_{w=-1}^\infty f_w^{(0)} \epsilon^w \right].
\end{align}
Here, $m^2$ is an arbitrary mass scale introduced for dimensional reasons and the coefficients $f^{(k)}_w$ are given in \cref{linear_4d_fkn}. The operator $\hatrenop$ projects the pole part of this equation, which is $f^{(0)}_{-1} \epsilon^{-1}$. This pole part is independent of momenta as expected in a locally renormalizable theory. 

For a systematic treatment of higher orders, it will be advantageous to include the counter term as a summand $\propto 1=(q^2)^0$ into the solution. Define
\begin{align*}
	\bar g_{1,w}^{(1)} &:= f_w^{(0)}\qquad &	\bar g_{0,w}^{(1)} &:= \begin{cases}
		-f_{-1}^{(0)} & w=-1\\
		0 & \text{else}
	\end{cases} 
\end{align*}
then the renormalized solution to order one, \cref{linear_4d_gren1}, is 
\begin{align}\label{linear_4d_gren1_2}
	\hat G^{(1)}(  q^2) 	&= 1 + \frac{\lambda}{(4\pi)^2} \sum_{t=0}^1 (4\pi)^{t\epsilon} e^{-\gamma_E t\epsilon} \left(  \frac{ q^2}{m^2} \right) ^{-t\epsilon} \sum_{w=-1}^\infty \bar g_{t,w}^{(1)} \epsilon^w.
\end{align}
The factor $(4\pi)^{t\epsilon} e^{-\gamma_E t\epsilon}$ eventually produces finite contributions $\propto \gamma_E$ and $\propto \ln(4\pi)$. In MS renormalization, these terms are present in the finite part of $\hat G $ while in MS-bar renormalization they are assigned to the counter term $\bar Z$ and thus absent from $\bar G $. To facilitate computations, we will absorb them into the momentum variable. This way, we effectively obtain the MS-bar Green function of the new momentum variable. On the other hand, our counter term will be  $\hat Z$ in MS as it does not contain the $\ln(4\pi),\gamma_E$ contributions either. This way, we can skip in every intermediate step two additional series expansions, the ones of $e^{-\gamma_E t\epsilon}$ and $e^{\ln(4\pi)t\epsilon}$. Let therefore
\begin{align}\label{MS_MSbar}
	\hat x := \frac{q^2}{m^2},\qquad  \bar x &:= \frac{e^{\gamma_E} q^2  }{4\pi m^2}   \equiv \frac{q^2}{  \bar m^2} , \qquad \hat G \left( \hat x \right)  \equiv \bar G  (\bar x). 
\end{align}
The transition $\hat x \leftrightarrow \bar x $ is a rescaling of the momentum which can also be understood as choosing a suitable value of the reference momentum $m$, as evidenced by the second equal sign where $ \bar m^2 = 4\pi e^{-\gamma_E}m^2$. This is not kinematic renormalization: $m$ is merely the unit used for the momenta, the MS-renormalized Green function $\hat G (q^2)$ does not fulfil any particular condition at the point $q^2=m^2$. Also, $m^2$ is not the mass of a particle, the field is still massless. Finally, let $\alpha := \lambda (4\pi)^{-2}$.

Higher orders of the renormalized Green function are computed iteratively. Assume that we know the solution to order $(m-1)$,
\begin{align*}
	\bar G ^{(m-1)}(\alpha,\bar x) &= \bar G ^{(m-2)}(\alpha, \bar x) +  \alpha^{m-1}  \sum_{t=0}^{m-1}  \left( \bar x \right) ^{-t\epsilon} \sum_{w=-(m-1)}^\infty  \bar g^{(m-1)}_{t,w}\epsilon^w.
\end{align*}
Inserting this into \cref{dse_linear_ms}, the order $m$ solution is
\begin{align}\label{linear_4d_Gren_MSbar_m}
	\bar G ^{(m)} (\alpha,\bar x) 
	 &= \bar G ^{(m-1)} (\alpha,\bar x)   + \alpha^m   \sum_{t=0}^{m}   \left( \bar x \right) ^{-t\epsilon} \sum_{w=-m}^\infty  \bar g_{t,w}^{(m)} \epsilon^w ,
\end{align}
where the new coefficients are determined by
\begin{align}\label{linear_4d_recursion_MSbar}
	\bar g^{(m)}_{u,w} &:= \sum_{n=-m+1}^{w+1} \bar g^{(m-1)}_{u-1,n} f^{(u-1)}_{w-n} \qquad \forall u>0.
\end{align}
The counter term is included in this sum as the $t=0$ summand. 
The coefficient of order $\epsilon^w$ in the MS-counter term at $m$ loops is
\begin{align}\label{linear_4d_recursion_g0_MSbar}
	\bar g_{0,w}^{(m)} &= 	-\sum_{u=1}^m \bar g_{u,w}^{(m)}\quad  \text{ if } w\in \left \lbrace -m,\ldots,-1 \right \rbrace, \text{ and }\bar g_{0,w}^{(m)} =	0 \text{ for all other }w.
\end{align}

The all-order perturbative solution $\bar G(\bar x)$ of \cref{dse_linear_ms} is defined as the limit $m\rightarrow \infty$ in \cref{linear_4d_Gren_MSbar_m}, effectively it is an infinite sum over the orders $\alpha^m$ in the coupling. We exchange the sums to expose the  expansion \cref{Gren_gammak}   and the counter term $\hat Z$:
\begin{align}\label{linear_4d_Gren_MSbar_series}
	\bar G  \left(\alpha,\bar x \right) &= \hat Z(\alpha,\epsilon) + \sum_{k=0}^\infty \ln\left( \bar x \right)^k \sum_{t=1}^{\infty }  \frac{(-t  )^k}{k!}  \sum_{m=t  }^\infty  \alpha^m  \sum_{w=-m}^\infty  \bar g_{t,w}^{(m)} \epsilon^{k+w}  .
\end{align} 
As explained above \cref{MS_MSbar}, we have introduced the  MS (not MS-bar) counter term,
\begin{align}\label{linear_4d_counterterm_MS}
	\hat Z(\alpha,\epsilon) &:= 1+ \sum_{m=1}^\infty \alpha^m \sum_{w=-m}^{-1} \bar g^{(m)}_{0,w} \epsilon^w = 1-\sum_{m=1}^\infty \alpha^m \sum_{w=-m}^{-1} \sum_{t=1}^m \bar g_{t,w}^{(m)} \epsilon^w.
\end{align}
The \cref{linear_4d_Gren_MSbar_series,linear_4d_counterterm_MS} together with the recursion relations \cref{linear_4d_recursion_MSbar,linear_4d_recursion_g0_MSbar} allow us to compute the solution of the DSE \ref{dse_linear_ms} to arbitrary order.

A similar procedure can   be used to obtain the MOM-renormalized finite Green function $G(\bar x)$. One merely has to extend \cref{linear_4d_recursion_MSbar} to include all orders $w$ of $\epsilon$. If we are only interested in the finite (as $\epsilon \rightarrow 0$) part, then, in a linear DSE, it is sufficient to include the $w=0$ term into the counter term. To this end,  instead of \cref{linear_4d_recursion_MSbar,linear_4d_recursion_g0_MSbar} one uses
\begin{align}\label{linear_4d_recursion_MOM}
	g^{(m)}_{u,w} &:= \begin{cases}     \sum_{n=-m+1}^{w+1}   g^{(m-1)}_{u-1,n} f^{(u-1)}_{w-n}  \quad  &\forall u>0 \\
	 	-\sum_{u=1}^m  g_{u,w}^{(m)} &\text{if } u=0 \text{ and } w\in \left \lbrace -m,\ldots,0 \right \rbrace, \text{ and }	0 \text{ else }. 
	 	\end{cases}
\end{align}
We will call this prescription \enquote{pseudo-MOM-scheme}, since the counter term computed by \cref{linear_4d_recursion_MOM} is not the true counter term of kinematic renormalization, it misses higher orders in $\epsilon$. But this prescription produces  finite Green function
\begin{align}\label{linear_4d_Gren_MOM_series}
	 G(\alpha,\bar x) &= Z(\alpha,\epsilon)+   \sum_{k=0}^\infty \ln \left(  \bar x \right)^k \frac{1}{k!} \sum_{m=1  }^\infty  \alpha^m \sum_{t=1}^{m}   (-t  )^k  \  \sum_{w=-m}^\infty   g_{t,w}^{(m)} \epsilon^{k+w}.    
\end{align}
which for $\epsilon \rightarrow 0$ is conventionally MOM-renormalized. It by construction takes the value unity at the renormalization point $\bar x=1$, i.e. at the momentum $q^2 = \mu^2$. Now the interpretation of $\mu^2$ has changed: In the MS- or MS-bar-solutions, it was an arbitrary momentum scale without  particular meaning for the Green function while in MOM it is the momentum where $G(\alpha,q^2/\mu^2)=1$.

\subsection{Renormalized correlation function}

In MOM renormalization, the analytic solution of the this linear DSE has long  been known  \cite{delbourgo_dimensional_1996}. As always for a linear DSE, it is a pure scaling solution where the anomalous dimension $\gamma(\alpha)$ is determinded from  the Mellin transform of the primitive, see \cref{mellin_gamma1,mellin}. With the notation of \cref{Gren_gammak}, the MOM-solution at $\epsilon=0$ reads
\begin{align}\label{linear_4d_Gren_MOM_theory}
	 G(\alpha,\bar x) &=  \bar  x ^{   \gamma(\alpha)}\\
	 \text{ where} \quad \gamma_0(\alpha) =1, \quad \gamma(\alpha) &= \frac{ \sqrt{1-4\alpha}-1}{2} = -\sum_{n=1}^\infty C_{n-1} \alpha^n ,\qquad 	   \gamma_k(\alpha) = \frac{ 1 }{k!}  \left(\gamma(\alpha)\right)^k. \nonumber
\end{align}
Here, $C_n$ are the Catalan numbers. It has been verified symbolically up to order $\alpha^{25}$ that the series  \cref{linear_4d_Gren_MOM_series} indeed coincides with \cref{linear_4d_Gren_MOM_theory} in the limit $\epsilon \rightarrow 0$.

In MS-bar-renormalization, there is a finite remainder term in the $\epsilon^0$-coefficient of each order in $\alpha$, which would have been subtracted in kinematic renormalization. Therefore, the MS-bar-coefficients $\bar \gamma_k(\alpha)$ in \cref{Gren_gammak} are generally different from the MOM-coefficients $\gamma_k(\alpha)$, see \cref{shifted}.
We compute $\delta(\alpha)$ from $\bar \gamma_0(\alpha)$ via \cref{linear_delta}.
From \cref{linear_4d_Gren_MSbar_series} one reads off
\begin{align}\label{linear_4d_deltabar}
	 \bar \gamma_0(\alpha) &=1+\sum_{m=1}^\infty \alpha^m \sum_{t=1}^m   \bar g^{(m)}_{t,0} 
	  =1+2 \alpha + \frac{11}2 \alpha^2 + \frac{51-2\zeta(3)}{3} \alpha^3  + \frac{1341 -80 \zeta(3)}{24} \alpha^4  +\mathcal O \left( \alpha^5 \right)   .
\end{align}
We have used that the $\hat Z$-factor in MS has, at finite order, only terms singular in $\epsilon$ and therefore does not contribute to $\bar \gamma_k$ and the summand $t=0$ can be left out. For the interpretation as a change of renormalization point, the remaining functions $\bar \gamma_k(\alpha)$ have to be consistent with \cref{Gren_gammaks,linear_4d_Gren_MOM_theory}. It has been verified to order $\alpha^{25}$ and for $k \leq 15$ that indeed $\bar \gamma_k(\alpha) = \bar \gamma_0(\alpha) \cdot \gamma_k(\alpha)$.  

Remarkably, for the linear DSE \cref{dse_linear_ms} considered here, it is possible to find a closed formula for $\bar \gamma_0(\alpha)$. To do this, one computes the logarithm of the series \cref{linear_4d_deltabar} and repeatedly uses the OEIS \cite{oeis}. One can first identify the rational coefficients and their generating function. Subtracting that part one is left with
\begin{align*}
	  \ln\bar \gamma_0 - \ln \left( \frac{1-\sqrt{1-4\alpha}}{2\alpha (1-4\alpha)^{\frac 1 4}} \right)   
	  &= -\zeta(3) \left( 2 \frac{\alpha^3}{3}  + 8 \frac{\alpha^4}{4} +30\frac{\alpha^5}{5} + 112 \frac{\alpha^6}{6} + 420 \frac{\alpha^7}{7}  + 1584 \frac{\alpha^8}{8} + \ldots  \right) \\
	&\quad - \zeta (5) \left( 2\frac{\alpha^5}{5}+ 12 \frac{\alpha^6}{6} + 56 \frac{\alpha^7}{7} + 240 \frac{\alpha^8}{8} + 990 \frac{\alpha^9}{9}   +\ldots \right)-\zeta(7)\ldots .
\end{align*}
At least up to $\zeta(11)$ and $\alpha^{25}$, the coefficients of $\frac{\alpha^{j+m-1}}{j+m-1}$ in the term proportional to $\zeta(m)$ are given by the binomial coefficient $2 \binom{2j +m-3}{j-1}$. Assuming again that this holds universally, all series over $\alpha$ and then the remaining series in $\zeta(m)$ can be summed and yield known functions. The result is
\begin{align}\label{linear_4d_gammabar0}
	\bar \gamma_0 (\alpha) &= e^{\gamma_E(1-\sqrt{1-4\alpha})}   \frac{1-\sqrt{1-4\alpha}}{2\alpha\left( 1-4\alpha \right) ^{\frac 14}}    \frac{\Gamma \left( \frac 3 2 -\frac 12 \sqrt{1-4\alpha}    \right)  } {\Gamma \left(  \frac 12 + \frac 12  \sqrt{1-4\alpha} \right)  }
	= \frac{-\gamma}{\alpha }\sqrt{\frac{\d \; (-\gamma) }{\d \alpha}  }    e^{-2\gamma \gamma_E}    \frac{\Gamma \left( 1-\gamma   \right)  } {\Gamma \left(  1+\gamma \right)  }   .
\end{align}
In the latter form, $\gamma \equiv \gamma(\alpha)$ is the anomalous dimension from \cref{linear_4d_Gren_MOM_theory}. We remark that such  fraction of Euler gamma functions is not uncommon in the computation of multiedge Feynman graphs, compare for example \cite{bonisch_analytic_2021}.

With this function $\bar \gamma_0(\alpha)$ and \cref{linear_4d_Gren_MOM_theory}, the MS-bar-renormalized solution $\bar G(\bar x)$ in the limit $\epsilon \rightarrow 0$ reads explicitly
\begin{align}\label{linear_4d_Gren_MSbar}
	\bar G(\alpha,\bar x) &= \bar \gamma_0(\alpha) \cdot G(\alpha,\bar x) = \bar \gamma_0(\alpha)  \cdot \bar x   ^{\gamma(\alpha)}.
\end{align}
From \cref{MS_MSbar} one can reconstruct the MS-renormalized function $\hat G(\hat p^2)$:
\begin{align}\label{linear_4d_Gren_MS}
	\hat G(\alpha,\hat x) &= \bar G \left( \alpha,\frac{\hat x}{4\pi e^{-\gamma_E}} \right)  = \hat \gamma_0(\alpha)\cdot  \hat x  ^{\gamma(\alpha)}   \quad 
	\text{where} \quad \hat \gamma_0(\alpha) :=   (4\pi e^{-\gamma_E})^{-\gamma(\alpha)}   \cdot \bar \gamma_0(\alpha) 
\end{align}

Following \cref{shifted}, the correspondence between MS-, MS-bar- and MOM-renormalized Green functions can equivalently be expressed by their respective renormalization points. Let $\mu$ be the mass scale in MS- and MS-bar-renormalization. Then the Green function is unity at $x=\hat \delta^{-1}$ and $x=\bar \delta^{-1}$, respectively. Equivalently, $\hat \mu^2  = \hat \delta \cdot \mu^2$ and $\bar \mu^2=\bar \delta \cdot \mu^2$ are the  mass scales one needs to choose for MS and MS-bar, in order to reproduce kinematic renormalization at the scale $\mu^2$:
\begin{align*}
	G \left( \alpha,\frac{q^2}{\mu^2} \right) \equiv \bar G  \left(\alpha, \frac{q^2}{\bar \mu^2} \right) \equiv \hat G  \left(\alpha, \frac{q^2}{\hat \mu^2} \right) .
\end{align*}
By \cref{linear_delta,linear_4d_gammabar0,linear_4d_Gren_MOM_theory}, these scales are related via
\begin{align}\label{linear_4d_barmu_hatmu}
	\bar \delta (\alpha) &=  \bar \gamma_0^{\frac 1 {\gamma}} =   e^{-2} \left( 1- \frac 32 \alpha - \left( \frac{49}{24} -\frac 2 3 \zeta(3)\right) \alpha^2 + \mathcal O \left( \alpha^3 \right)   \right),\qquad   
	\hat \delta (\alpha)=   \hat \gamma_0^{\frac{1}{\gamma}} = \frac{\bar \delta(\alpha)}{\sqrt{4\pi e^{-\gamma_E}}}  .
\end{align}
The latter of course reproduces the transformation $m \leftrightarrow \bar m$ in \cref{MS_MSbar}. The zeroth order coefficient is $e^{-2}$ as expected from \cref{lndelta_1} where $f_{-1} = f^{(0)}_{-1}=1, f_0=f^{(0)}_0=2$. 
In any case, the shift between renormalization schemes  is a finite  function as long as $\alpha <\frac 1 4$, shown in \cref{linear_4d_functions_fig}.

\begin{figure}[htbp]
	\centering
	 \begin{subfigure}{.48\textwidth}
	 	\includegraphics[width=\linewidth]{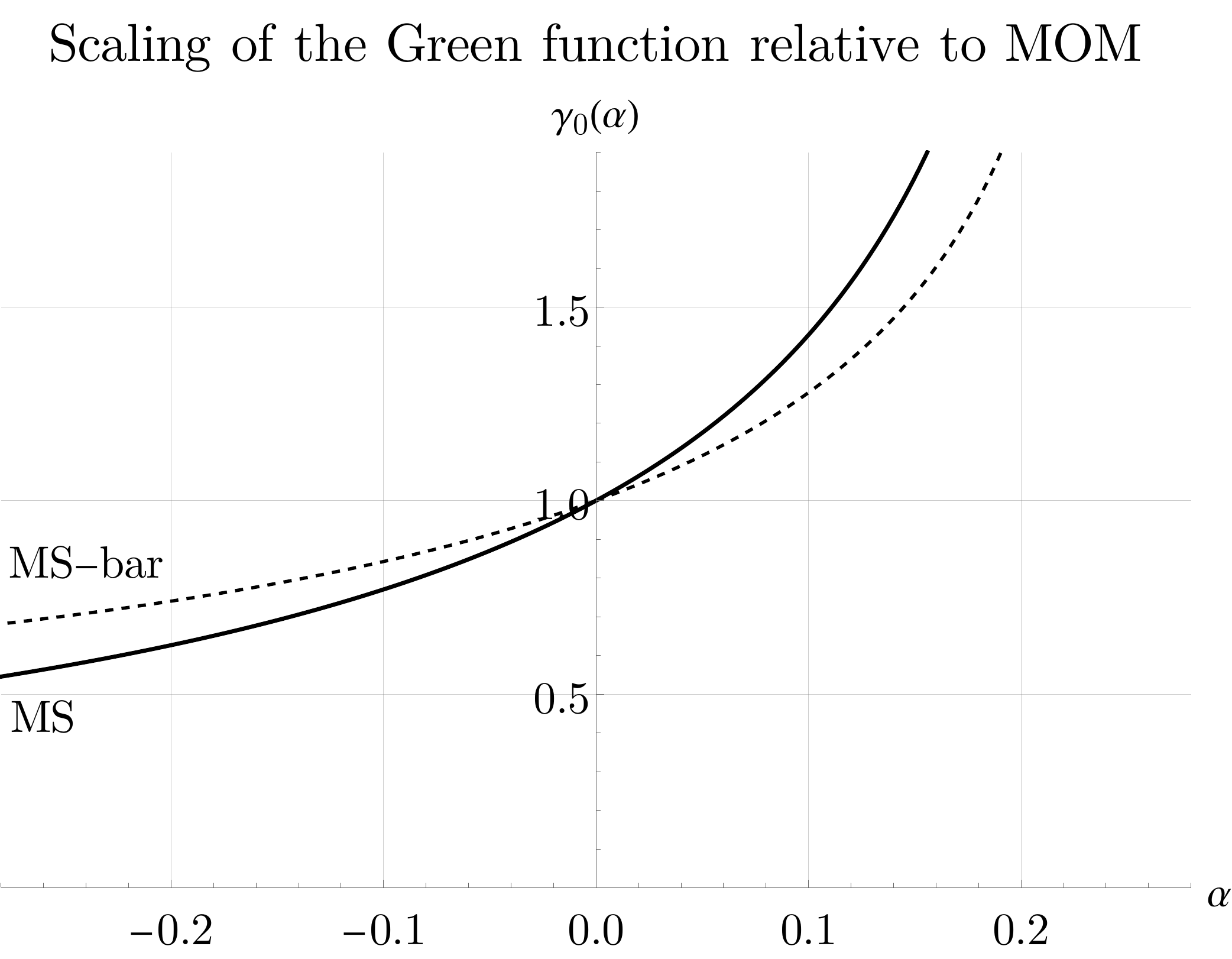}
	 	\caption{}
	 \end{subfigure}
 \hfill 
 \begin{subfigure}{.48\textwidth}
 	\includegraphics[width=\linewidth]{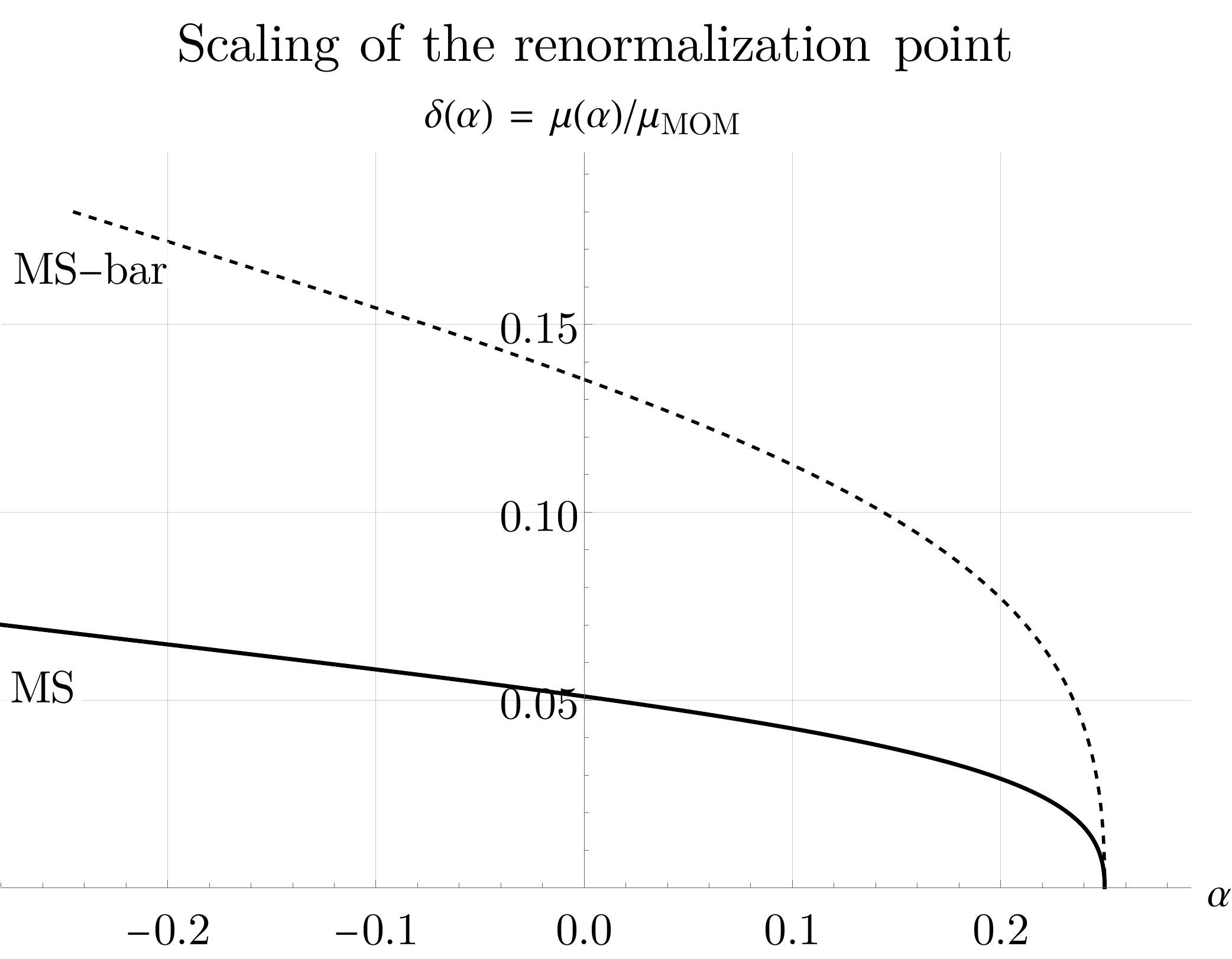}
 	\caption{}
 \end{subfigure}
\caption{(A) Behaviour of the MS scaling-function $\hat \gamma_0(\alpha)$ (thick) and the MS-bar scaling function $\bar \gamma_0(\alpha)$ (dashed) as functions of the renormalized coupling $\alpha$.  Both are unity at $\alpha=0$ and diverge at $\alpha = \frac 1 4$. (B) Rescaling factor $\delta(\alpha)$ according to \cref{shifted} of the reference momentum in MS-renormalization $\hat \mu$ (thick) and MS-bar-renormalization $\bar \mu$ (dashed)  relative to the MOM-renormalization-point $\mu$. The  functions are equal up to the factor $\sqrt{4\pi e^{-\gamma_E}}$. They are not unity for $\alpha=0$. }
\label{linear_4d_functions_fig}
\end{figure}

\subsection[Counter term and epsilon-dependence]{Counter term and $\epsilon$-dependence}
Our calculation (up to order $\alpha^{25}$) delivers  for the counter term  in MS the series coefficients
\begin{align*}
	\ln \left(  \hat Z(\alpha,\epsilon) \right)  &= -\frac{1}{\epsilon} \left( \alpha + \frac{\alpha^2}{2} + 2 \frac{\alpha^3}{3} + 5 \frac{\alpha^4}{4} + 14 \frac{\alpha^5}{5} + 42 \frac{\alpha^6}{6}+\ldots \right) .
\end{align*}
Once more we recognize the Catalan numbers and introduce $\gamma = \gamma(\alpha)$ from \cref{linear_4d_Gren_MOM_theory},
\begin{align}\label{linear_4d_Zhat}
	\hat Z(\alpha,\epsilon) &= \exp \left( -\frac 1 \epsilon \sum_{m=1}^\infty C_{n-1}\frac{\alpha^n}{n} \right)   = e^{-\frac 1 \epsilon \left( 1-\sqrt{1-4\alpha} + \ln \left( 1-\frac{1-\sqrt{1-4\alpha}}{2} \right)   \right) }   = e^{ \frac 1 \epsilon \left(  2\gamma  - \ln (1+\gamma ) \right)  }  .
\end{align}
This expression is the integral of $\gamma(\alpha)$, as expected from \cref{Z_beta_gamma} for a linear DSE. 
As long as $\alpha$ and $\epsilon$ have the same sign, this function has the limit $\hat Z(\alpha, 0^+) = 0$ when $\epsilon \rightarrow 0$, see \cref{linear_4d_Z_fig}. With \cref{linear_4d_Zhat}, the counter term turns out to be a remarkably well-behaved function of $\epsilon$, compared to its perturbative expansion, where every single term diverges as $\epsilon \rightarrow 0$. 
This is in line with \cite{mack_conformalinvariant_1973} and a comment made in \cite{kreimer_etude_2008}: The all-order-solution \cref{linear_4d_Gren_MOM_theory} \enquote{regulates itself} by its anomalous dimension. The integral in the DSE \cref{dse_linear_ms} is not divergent and the remaining finite counter term  is set to zero in MOM by choice of the renormalization point.  \Cref{linear_4d_Z_fig} shows how $\hat Z$ approaches zero as the scaling solution $\epsilon=0$ is reached.

Using the expansion \cref{linear_Z_expansion}, it has been verified to order $\alpha^{20}$ that $z_{-1}(\alpha) = \bar z_{-1}(\alpha)$ is given by $\gamma(\alpha)=\bar \gamma(\alpha)$ and that the MOM-coefficient $z_0(\alpha)$ fulfils $z_0(\alpha) = \ln \bar \gamma_0(\alpha)$ as expected. 
In our simplified MOM-scheme \cref{linear_4d_recursion_MOM}, all other $z_{n>0}$ vanish. In true kinematic renormalization, they are present. The author computed the coefficients $z_{n\leq 9}$ up to order   $\alpha^{10}$  but did not succeed in finding generating functions. However it turned out that all $z_{n\leq 9}(\alpha)$ for $0<\alpha< \frac 1 4$ are, within the computed order, strictly positive. This implies that the exponent is strictly negative for all $\epsilon>0$ and hence $Z(\alpha,\epsilon)\in [0,1]$. The classical interpretation of the $Z$-factor as a probability requires these bounds. Compare \cite[Sec 8]{lutz} for the various interpretations of $Z$ and their relations.

\begin{figure}[htbp]
	\centering
	\begin{subfigure}{.48\textwidth}
		\includegraphics[width=\linewidth]{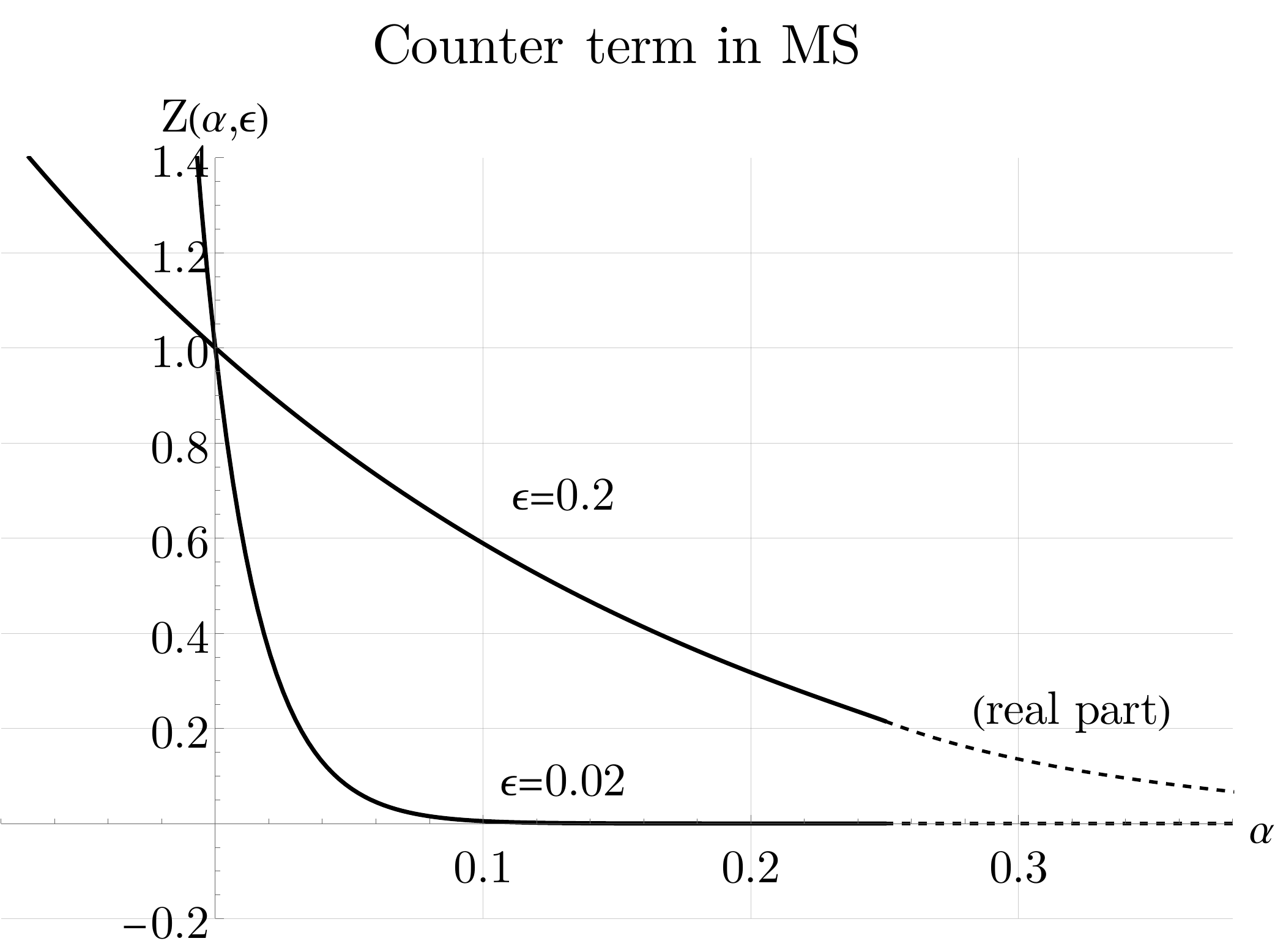}
		\caption{}
	\end{subfigure}
	\hfill 
	\begin{subfigure}{.48\textwidth}
		\includegraphics[width=\linewidth]{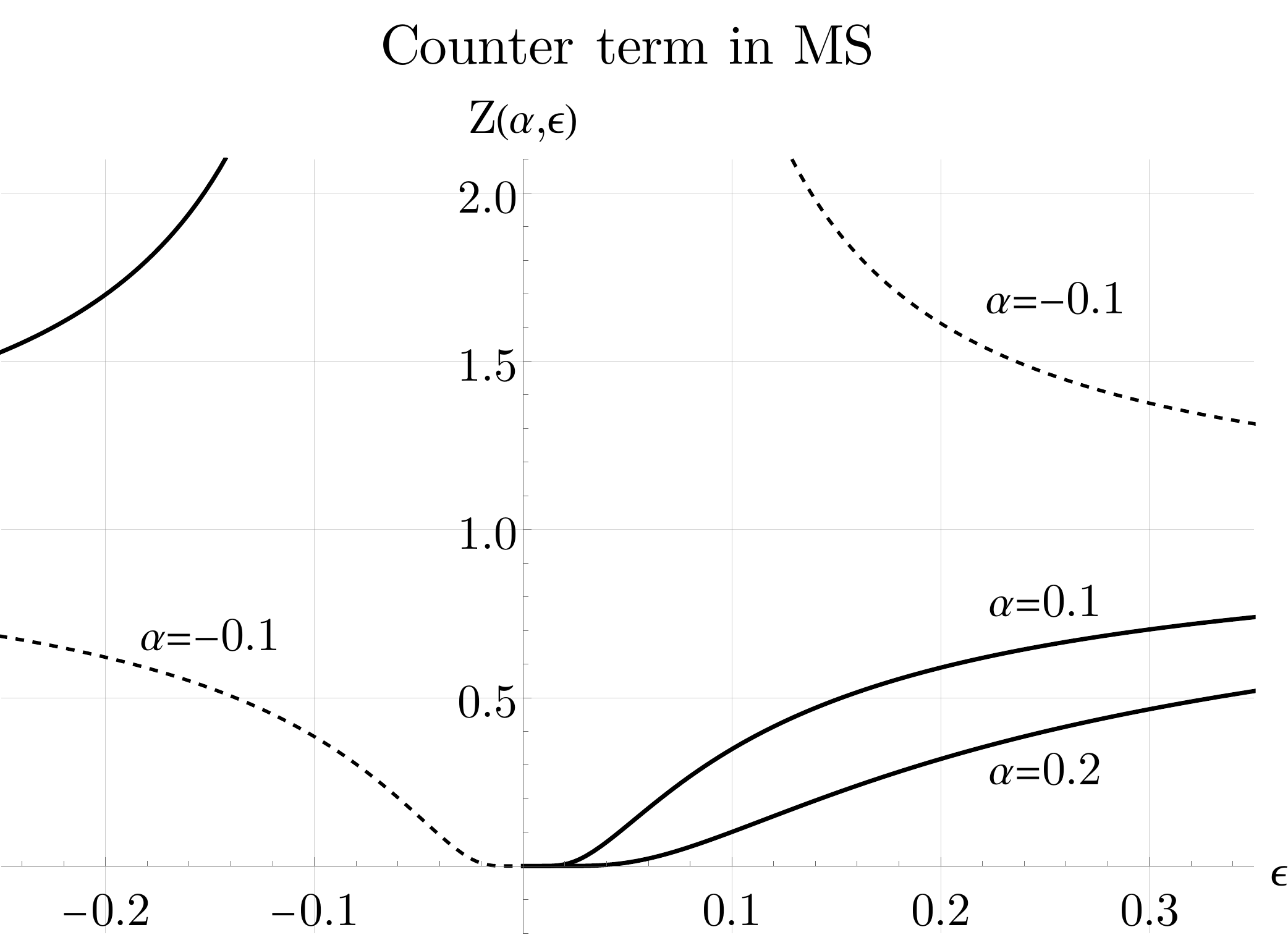}
		\caption{}
	\end{subfigure}
	\caption{(A) Counter term $\hat Z(\alpha,\epsilon)$ in MS as a function of the renormalized coupling $\alpha$ for different values of $\epsilon$. As $\epsilon \rightarrow 0^+$, the function approaches zero for   $\alpha>0$ and diverges for   $\alpha<0$. For $\alpha >\frac 1 4$, the counter term acquires an imaginary part which is not shown. (B) The same counter term as a function of $\epsilon$ for fixed values of $\alpha$. For positive $\alpha$, the counter term smoothly approaches the value zero as $\epsilon \rightarrow 0^+$.  }
	
	\label{linear_4d_Z_fig}
\end{figure}

\section[Linear DSE in D=6 dimensions]{Linear DSE in $D=6-2\epsilon$ dimensions}\label{D6}

For the 6-dimensional case, the procedure is completely analogous to the one described in the previous section. We will use the same symbols as in the $D=4$ case in order to not clutter notation.

The Dyson-Schwinger equation is once more \cref{dse_linear_ms}, only now with $D=6-2\epsilon$.
This time, we define $\alpha := \lambda (4\pi)^{-3}$ to account for the additional factor $4\pi$ produced by the integration. Moreover, the 1-loop-integral is now proportional to $q^2$ but this factor is absorbed by projection onto the basis tensor  such that again the tree level solution is $G^{(0)} (q^2)=1$. Then, the coefficients of the renormalized Green function and the counter term are given by the same recursion relations as above, namely \cref{linear_4d_recursion_MSbar,linear_4d_recursion_g0_MSbar,linear_4d_recursion_MOM}. The crucial difference is that for $f^{(k)}_n$ one now takes the value of the 6-dimensional primitive integral, as given in \cref{linear_6d_fkn}. 

Once more, the anomalous dimension can be computed analytically from the Mellin transform of the 1-loop integral \cref{mellin}. The Green function in kinematic renormalization is \cite{delbourgo_dimensional_1997}
\begin{align*}
	G (\bar x) &=  \bar x  ^{\gamma(\alpha)} , \qquad \gamma(\alpha) = \frac{\sqrt{5+4\sqrt{1+\alpha}} -3}{2}.
\end{align*}

The functions  $\bar \gamma_k(\alpha)$ are again computed from \cref{linear_4d_Gren_MSbar_series} using the appropriate $\bar g^{(m)}_{t,s}$. Like above, all $\bar \gamma_k(\alpha)$ are proportional to the corresponding $\gamma_k(\alpha)$  at least up to $\alpha^{25}$ and $k=20$. 

The series coefficients of $\bar \gamma_0(\alpha)$ can no longer be identified from tables right away but the result \cref{linear_4d_gammabar0}, expressed in terms of $\gamma(\alpha)$, is a helpful starting point. Eventually, one arrives at
\begin{align}\label{linear_6d_gammabar0}
	\bar \gamma_0(\alpha) &=   3\sqrt 3\frac{ e^{\frac 12 \left( \sqrt{5+4\sqrt{1+\alpha}}-3 \right)(1-2\gamma_E)  } \left( \sqrt{5+4\sqrt{1+\alpha}}-3 \right) \Gamma \left( \frac 5 2 -\frac 12 \sqrt{5+4\sqrt{1+\alpha}} \right)   }{\alpha \left( (1+\alpha) \left(5+4\sqrt{1+\alpha} \right) \right)^{\frac 1 4} \Gamma \left( -\frac 12 + \frac 12 \sqrt{5+4\sqrt{1+\alpha}} \right)  }  \\
	&=  \frac{6 \gamma}{\alpha} \sqrt{\frac{\d \; (6 \gamma)}{\d \alpha}}  e^{\gamma(1-2\gamma_E)  }  \frac{     \Gamma \left( 1-\gamma \right)   }{ \  \Gamma \left( 1+\gamma \right)  } . \nonumber 
\end{align}
This has been verified symbolically to order $\alpha^{25}$. Knowing $\bar \gamma_0(\alpha)$,   the shifts between MS-bar- resp. MS-  and MOM-renormalization are $\bar \delta(\alpha)=\bar \gamma_0^{\frac{1}{\gamma}}$ and $\hat \delta(\alpha)  = (4\pi e^{-\gamma})^{-\frac 12} \bar \delta(\alpha)$. 

The counter term in MS for the 6-dimensional theory is
\begin{align*}
	\hat Z(\alpha,\epsilon) &= e^{\frac 1 \epsilon \left( 2 \sqrt{5+4\sqrt{1+\alpha}} -6 -\frac{3}{2}\ln\alpha  + \ln (108) + \frac 12 \ln \frac{\sqrt{5 + 4\sqrt{1+\alpha}} -1}{\sqrt{5+4\sqrt{1+\alpha}} +1} + \frac 3 2 \ln \frac{  \sqrt{5+4\sqrt{1+\alpha}}  -3}{   \sqrt{5+4\sqrt{1+\alpha}} +3}\right)}.
\end{align*}
This function fulfills \cref{Z_beta_gamma}. Furthermore, it has been checked to order $\alpha^{25}$ resp. $\alpha^{10}$ that $Z(\alpha,\epsilon)$ reproduces $\bar \gamma_0$ via \cref{linear_Z_expansion} for the pseudo-MOM and the true MOM scheme, respectively.

\section{Kreimer's linear toy model}\label{toy}
Both Dyson-Schwinger equations considered above were based on the same 1-loop primitive Feynman graph as an integral kernel. Our formalism is not restricted to that particular integral, for comparison we here examine the linear Dyson-Schwinger equation in a  toy model of renormalization proposed by Dirk Kreimer \cite{panzer_hopf_2012}.  It reads
\begin{align}\label{toy_dse}
	\hat G  (\alpha, x) &= 1+\left( 1-\hat \renop \right)  \alpha \int \limits_0^\infty \frac{\d y \; (xy)^{-\epsilon}}{1+y} \hat G (\alpha,xy),
\end{align}
where again $\epsilon$ is a regularization parameter. 
There is no distinction between MS- and MS-bar schemes in the toy model. We retreat to a comparison between MS and pseudo-MOM.

The anomalous dimension, computed from the Mellin transform \cref{mellin_gamma1,mellin}, is
\begin{align*}
	\gamma(\alpha) &= -\frac 1 \pi \arcsin \left( \pi \alpha \right)  = -\alpha-\frac{\pi^2 }{6}\alpha^3- \frac{3 \pi^4}{40} \alpha^5-\frac{5\pi^6 }{112}\alpha^7-\ldots
\end{align*}
It has been verified to order $\alpha^{30}$ that the symbolic calculation of $G(\alpha, x)$ in pseudo-MOM renormalization produces, in the limit $\epsilon \rightarrow 0$, the expansion coefficients of $x^{\gamma(\alpha)}$ from \cref{scaling_solution}.

For MS, using \cite[A034255]{oeis}, the scaling factor is
\begin{align}\label{toy_gamma0}
	  \hat \gamma_0(\alpha) &= 1+ \sum_{m=1}^\infty \alpha^m \sum_{t=1}^m \bar g^{(m)}_{t,0} = 1 +  \left(\frac{\pi^2 \alpha^2}{4}\right)  + \frac 5 2 \left( \frac{\pi^2 \alpha^2}{4} \right) ^2  +  \ldots=   \left( 1-\pi^2 \alpha^2 \right) ^{-\frac 1 4} = \sqrt{\frac{\d \; \gamma}{\d \alpha}}.
\end{align}
From this, the change of the renormalization point can be computed using \cref{linear_delta}. One finds 
\begin{align*}
	\ln \hat \delta (\alpha) &= \frac{\pi\ln \left( 1-\alpha^2 \pi^2 \right) }{4\arcsin(\alpha \pi)}=-\frac{\pi^2}{4}\alpha -\frac{\pi^4}{12}\alpha^3 -\frac{73 \pi^6}{1440}\alpha^5 -\mathcal O(\alpha^7),
\end{align*}
where the constant coefficient vanishes since for the toy model $f_0 = f^{(0)}_0=0$.

The series coefficients of the toy model are somewhat easier than for the physical theory. This entails that  in the expansion  
\begin{align}\label{toymodel_Z}
	Z(\alpha,\epsilon) &\equiv \exp \left( -\sum_{n=-1}^\infty z_n (\alpha) \cdot \epsilon^n \right)  \equiv \exp \left( \sum_{n=1}^\infty z'_n(\epsilon) \cdot \alpha^n \right) .
\end{align}
the first functions $z_n(\alpha)$ and $z'_n(\alpha)$ can be found symbolically. Some of them are quoted in \cref{app_toymodel} for the interested reader. They suggest that $Z(\alpha,\epsilon)\in [0,1]$.  As expected from \cref{linear_Z_expansion}, also the linear toy model fulfils 
\begin{align*}
	z_{-1} &= -	\left[ \epsilon^{-1} \right] \ln Z =- \int \limits_0^\alpha \frac{\d u}{u}\gamma(u)  ,\qquad 
	z_0  = -	\left[ \epsilon^0 \right] \ln Z = -\frac 1 4 \ln \left( 1-\alpha^2\pi^2 \right) = \ln \bar \gamma_0(\alpha).
\end{align*}

The upshot from the three linear Dyson-Schwinger equations is that the series coefficients of the shift  $\delta(\alpha)$ between MS and MOM are sufficiently tame that one can recognize the functional form. These functions are convergent power series for small couplings $\alpha$.

\section{The Chain approximation in D=4}\label{chains}
As an example of a situation where MS and MOM can not be related by a shift $\delta(\alpha)$, we consider the chain approximation. In it, the only allowed graphs consist of a chain of one-loop subgraphs inserted into one single primitive, but no further nestings. 
It is sometimes viewed as an intermediate step between the linear  and the full recursive DSE, see for example \cite{borinsky_nonperturbative_2020}. We restrict here to the $D=4, \epsilon=0$ case. The first function of the log-expansion in MOM is known to be
\begin{align}
\gamma_1(\alpha) &= -\sum_{n=1}^\infty (n-1)!\alpha^n = e^{-\frac 1 \alpha} \int_{-\frac 1 \alpha}^\infty \frac{\d t}{t}e^{-t},
\end{align}
where the resummed series is the incomplete Euler gamma function. Explicit computation of the higher $\gamma_t (\alpha)$ in MOM produces coefficients which can again be identified, 
\begin{align}\label{chain_rge}
\gamma_{t \geq 1} &= (-1)^t \frac{1}{t!} \sum_{n=t}^\infty (n-1)! \alpha^n, \qquad 
k \gamma_k(\alpha) = -\alpha \cdot \alpha \partial_\alpha \gamma_{k-1}(\alpha) \nonumber \\
\Rightarrow \qquad x \partial_x G(\alpha,x) &= \gamma_1(\alpha)+ (-\alpha) \alpha \partial_\alpha G(\alpha,x).
\end{align}
Although the last equation reminds us of Callan-Symanzik equation \cref{rge} for a beta function $\beta(\alpha)=-\alpha$, it is structurally different. The function $\gamma_1(\alpha)$ is not the anomalous dimension of this model in the conventional physical sense because it does not multiply $G(\alpha,x)$. 

In Minimal Subtraction we find
\begin{align*}
	\bar\gamma_0(\alpha) &= 1 + 2 a + \frac {11}2 \alpha^2 + \left(\frac{37}3+ \frac 2 3 \zeta(3)\right) a^3 + \left(\frac{169}4 - \frac{1}{120}\pi^4 +\frac 12 \zeta(3)\right) a^4+ \ldots =: \sum_{r=0}^\infty r_k \alpha^k.
\end{align*}
The coefficients grow approximately $r_k \sim (k-1)!$. The higher expansion functions $\bar \gamma_{t>0}(\alpha)$ are purely rational. Empirically, the coefficients agree with [A010842] \cite{oeis}, $	c_n = (n-1)![x^{n-1}] \frac{e^{2x}}{x-1}$,
\begin{align}
	\bar\gamma_1(\alpha)&= -\alpha - 3 \alpha^2 - 10 \alpha^3 - 38\alpha^4 - 168 \alpha^5 - 872 \alpha^6 - 5296 \alpha^7 - 
	37200 \alpha^8  -\ldots = \sum_{n=1}^\infty c_n \alpha^n.
\end{align}

The higher $\bar \gamma_j$, but not $\bar \gamma_0$, satisfy the recursion $	k \bar \gamma_k(\alpha) = -\alpha \cdot \alpha \partial_\alpha \bar \gamma_{k-1}(\alpha).$ 
As remarked below \cref{thm_shifted}, it is possible for $\delta(\alpha)$ to exist even if both $\left \lbrace \gamma_j \right \rbrace $ and $\left \lbrace \bar \gamma_j \right \rbrace $ do not fulfil a Callan-Symanzik equation. But in the present case, explicit calculation using \cref{Gren_gammaks} shows that no $\delta(\alpha)$ exists.  The fact that MOM and MS are not related by any $\delta(\alpha)$ means that the chain-approximation is unphysical in the sense that, for different renormalization schemes, it gives rise to measurably different Green functions.

\section[Non-linear DSE in D=4]{Non-linear DSE in $D=4$}\label{nonlinear_4d}

In the remainder of the paper we repeat the above analysis of the two physical models and the toy model for the case that the Dyson-Schwinger equation is non-linear. Namely, instead of $Q\equiv 1$ we insert the invariant charge \cref{Qs},
\begin{align}\label{Qs2}
	Q(G(\alpha, x)) = \left(G(\alpha,x)\right)^{s},
\end{align}
where $s\in \left \lbrace -5,\ldots,+5 \right \rbrace $. 
It seems that the literature so far has mostly concentrated on the physically most relevant cases $s=0$ (linear approximation, e.g. \cite{delbourgo_dimensional_1996,delbourgo_dimensional_1997,kreimer_etude_2008}) and $s=-2$ (one inverse Green function inserted into the kernel, e.g. \cite{broadhurst_combinatoric_2000,broadhurst_exact_2001,bellon_renormalization_2008,bellon_efficient_2010,bellon_approximate_2010,borinsky_nonperturbative_2020,borinsky_semiclassical_2021}). 
In our case, the corresponding power of $G$ is inserted into \emph{only one} edge of the primitive. The setup discussed in \cite{bellon_renormalization_2008,bellon_efficient_2010,bellon_approximate_2010} is conceptually different from our $s=-3$ since it amounts to inserting one $G^{-1}$ into each of the two internal edges. Compare our result \cref{nonlinear_4d_gamma1_sm3} with \cite[Table 1]{bellon_renormalization_2008}.  Also see \cite[Sec. 5]{yeats_growth_2008} for a discussion how insertion into only a subset of the available edges is equivalent to including additional primitive kernels. 

\subsection{Computation of the coefficients}\label{nonlinear_computation}

The MS-renormalized Dyson-Schwinger equation for the $D=4$ model reads
\begin{align}\label{nonlinear_4d_dse}
	\hat G(\alpha, q^2/\mu^2) &= 1 +  \alpha (4\pi)^2 \left( 1-\hatrenop \right)   \int \frac{\d^D k}{(2\pi)^D} \frac{\left( \hat G (\alpha, k^2/\mu^2) \right) ^{s+1}}{(k+q)^2 k^2}.
\end{align}
Note that as above we choose a  sign $+\alpha$ in front of the integral. 
For $s=-2$, this is the model examined in \cite{broadhurst_combinatoric_2000,broadhurst_exact_2001,borinsky_nonperturbative_2020}, up to a factor -2 in the definition of $\alpha$, see \cite[eq. (12)]{broadhurst_exact_2001}.

For a recursive computation of the coefficients, we once more introduce $\hat x, \bar x$ according to \cref{MS_MSbar} and thereby switch from MS- to MS-bar-renormalization.
The first order solution coincides with the one of the linear DSE, \cref{linear_4d_gren1_2},
\begin{align*}
	\bar G^{(1)} (\alpha, \bar x) &= 1+ \alpha \sum_{t=0}^1 \bar x^{-t\epsilon} \sum_{w=-1}^\infty \bar g^{(1)}_{t,w} \epsilon^w.
\end{align*}
The index $^{(m)}$ in the linear case counts both the coradical degree (= number of recursive iterations of the solution) and the order in $\alpha$ (= loop number of the involved graphs). In the non-linear DSE, we truncate the series expansion of $ (\hat G (\alpha,\bar x))^{s+1}$ at order $\alpha^{m-1}$ so that $\hat G ^{(m)}(\alpha,\bar x)$ again involves graphs with at most $m$ loops. This choice is arbitrary but convenient because it saves one index.

The recursion formula for the next order is more complicated than in the linear case \cref{linear_computation}. This is because the non-linear DSE \cref{nonlinear_4d_dse} involves a non-trivial power of the Green function inside the integral which needs to be expanded both in $\alpha$ and in $\bar x^{-\epsilon}$.  Assume we know the order-$m$-solution in the form
\begin{align*}
	\bar G^{(m)} (\alpha, \bar x) &= 1+\sum_{n=1}^m \alpha^n \sum_{u=0}^n \bar x^{-u\epsilon} \sum_{s=-n}^\infty \bar g^{(n)}_{u,w} \epsilon^w =: 1+ \sum_{n=1}^m \alpha^n \bar G_n (\bar x, \epsilon),
\end{align*}
where we defined functions $\bar G_n(\bar x, \epsilon)$. They are universal for all $m$. Next, we write a generic expansion of the invariant charge \cref{Qs2}  according to 
\begin{align}\label{nonlinear_4d_Q_expansion} 
	\bar G^{(m)}(\alpha,\bar x) \cdot Q(\bar G^{(m)}(\alpha, \bar x)) &\equiv \left(\bar G^{(m)} (\alpha, \bar x)\right)^{s+1}  =: 1 + \sum_{n=1}^m \alpha^n \sum_{t=0}^n \bar x^{-t\epsilon} \bar h^{(n)}_t(\epsilon) .
\end{align}
The helper functions $\bar h^{(n)}_t(\epsilon)$ are Laurent series in $\epsilon$ with the highest pole order $\epsilon^{-n}$. They are given by Faa di Bruno's formula and the Binomial theorem,   $B_{n,k}$ are  Bell polynomials\cite{bell_exponential_1934} \cite[p 134]{comtet_advanced_1974}:
\begin{align}\label{faadibruno}
	\frac{1}{\left( \bar G^{(m)}(\bar x) \right) ^{-s-1}} &= \frac{1}{(-s-2)!}\sum_{n=0}^\infty \alpha^n \frac{1}{n!} \sum_{k=1}^n  (-s-2+k)!  B_{n,k} \left( 1! \bar G _1, 2!\bar G_2, \ldots \right), \quad  s <-1 \nonumber \\
	\left( \bar G^{(m)}(\bar  x) \right) ^{s+1} &= (s+1)! \sum_{n=0}^{\infty } \alpha ^n \frac 1 {n!}  \sum_{k=0}^{s+1}  \frac{1}{(s+1-k)!}     B_{n,k} \left( 1! \bar G_1, 2! \bar G_2, \ldots  \right) , \quad s>-1 .
\end{align}
Knowing the functions $\bar h^{(n)}_t(\epsilon)$, one integrates the sum \cref{nonlinear_4d_Q_expansion} term-wise like in the linear case \cref{linear_computation} and obtains the next-order coefficients 
\begin{align*}
	\bar g^{(1)}_{1,w} &= f^{(0)}_w, \qquad \bar g^{(n)}_{u,w} = \sum_{r=-1}^{n+w-1} \left( [\epsilon^{w-r}] \bar h^{(n-1)}_{u-1} \right)   f^{(u-1)}_r.
\end{align*}
Finally, one obtains the next order solution of the DSE,
\begin{align}\label{nonlinear_G_coefficients}
	\bar G^{(m+1)} ( \alpha, \bar x) &= \bar G^{(1)} (\alpha , \bar x) + (1-\hat \renop) \sum_{n=2}^{m+1} \alpha^n \sum_{u=1}^n \bar x^{-u \epsilon} \sum_{w=-n}^\infty \bar g^{(n)}_{u,w} \epsilon^w.
\end{align}
The  MS-counter term is included via coefficients $g^{(n)}_{0,s}$ as in the linear case \cref{linear_4d_recursion_g0_MSbar}. For the non-linear DSE, there is no simple pseudo-MOM scheme. In practice, it is sufficient to include terms $\propto \epsilon^m$ if one is interested in the finite part of the Green function $G ^{(m)}(\alpha, \bar x)$ since every iteration potentially multiplies the result with $\epsilon^{-1}$.

Of course, the established methods  \cite{broadhurst_combinatoric_2000,kreimer_etude_2006,bellon_efficient_2010} are tremendously more efficient in computing the anomalous dimension in MOM. A power-series solution of the ODE  \cref{nonlinear_4d_rge} to order $\alpha^{100}$ can be obtained within seconds while the brute-force algorithm merely reaches $\alpha^{10}$ symbolically after several hours. But the computation of $\gamma(\alpha)$ is only a side effect of our  algorithm since we are actually interested in  $\ln \delta(\alpha)$. 

All computations, also the extraction of series coefficients in the computation of $h^{(n)}_t$ and $g^{(n)}_{u,w}$, have been done with the computer algebra system Mathematica 12.3. 
In practice, the computation is entirely limited by CPU time due to an explosion of series coefficients: In order to reach $\epsilon^0$ at order $\alpha^{10}$, we have to include terms up to $\epsilon^{10}$ in the intermediate steps. Further we produce pole terms up to $\epsilon^{-10}$ and contributions up to $x^{-10\epsilon}$,   each $g^{(10)}_{t,s}$ requires series reversion and -multiplication. If we were to go to $\alpha^{20}$, we would have to include $\epsilon^{20}$ from the start, dramatically slowing down every intermediate step. 

The higher the order, the higher powers of $\pi^2$ and the more different zeta values appear. Algebraic operations with these expressions are increasingly slow.
This second problem can be circumvented by working with floating point numbers, but it turns out that each iteration loses several decimal digits of precision. We computed the first orders symbolically and then continued numerically. 

Thirdly, the expansions \cref{faadibruno} are, for large $n$, much harder for negative $s$ than for small positive $s$ due to the summation boundaries.  Therefore we  reach higher order   for the positive $s$.

\subsection{Results}\label{nonlinear_4d_results}

The coefficients were computed symbolically up to $\alpha^{10}$ for an invariant charge \cref{Qs2} where $s \in \left \lbrace -5,  \ldots, +5 \right \rbrace $. The results are extended up to at least $\alpha^{20}$ numerically with at least 30 valid decimal digits.
It was verified in all cases that the first three orders of the leading-log expansion fulfil \cref{leadinglog_first} and that \cref{rge_MS} holds (for some function $\bar \gamma(\alpha)$).  

The anomalous dimensions up to order $\alpha^8$ are reported in \cref{d4_s_gamma1} in \cref{tables}.
Let the anomalous dimension be $\gamma(\alpha) = \sum_{n=0}^\infty c_n \alpha^n$ then the empirical values of \cref{d4_s_gamma1} suggest $c_0=0, c_1=-1, c_2=-(s+1), c_3=-(1+s)(2+3s), c_4 = -(s+1)(2s+1)(7s+5)$. 
Further, for $s=1$ the sequence is \cite[A177384]{oeis}. The case $s=-3$ produces
\begin{align}\label{nonlinear_4d_gamma1_sm3}
	\gamma(\alpha)=-\alpha + 2 \alpha^2 - 14 \alpha^3 + 160 \alpha^4 - 2444 \alpha^5 + 45792 \alpha^6 - 1005480 \alpha^7 + 		25169760 \alpha^8\mp \ldots.
\end{align}
Compare this to  \cite[Table 1]{bellon_renormalization_2008}, in which $G(\alpha,x)$ is inserted into both the internal edges of the primitive. Our result \cref{nonlinear_4d_gamma1_sm3} reproduces the purely rational part of the latter, but not the terms proportional to $\zeta(j)$. 

The anomalous dimension considered so far,  $\gamma(\alpha) =:\gamma^{\text{pert}}(\alpha)$, is the perturbative solution to the differential equation \ref{mellin_gamma1},
\begin{align}\label{nonlinear_4d_rge}
	-\left(1 + \gamma(\alpha) \left(  s \alpha \partial_\alpha  +1 \right) \right)\gamma(\alpha)  &=\alpha.
\end{align} 
This ODE has also non-perturbative solutions \cite{borinsky_semiclassical_2021,borinsky_nonperturbative_2020} of the form
\begin{align}\label{nonlinear_4d_nonpert}
	\gamma^{\text{non-pert}}(\alpha) &= \alpha^\beta \exp \left(-\frac{\lambda}{\alpha}\right) \left( 1+ b^{(1)} \alpha + b^{(2)} \alpha^2 + \ldots \right) . 
\end{align}
We use the method of \cite[V. A.]{borinsky_semiclassical_2021} to determine the unknown coefficients\footnote{The author thanks Gerald Dunne for suggesting the method.}. The ansatz $\gamma (\alpha) = \gamma ^{\text{pert}} (\alpha)+ \gamma ^{\text{non-pert}}(\alpha)$ is inserted into \cref{nonlinear_4d_rge} and the above coefficients $c_j$ are used for $\gamma^{\text{pert}}$. The  equation is  then linearized in $\gamma^{\text{non-pert}}$. The resulting series in $\alpha$  has to vanish, this leads to the expressions listed in \cref{nonlinear_4d_nonpert_parameters}, especially $\lambda(s)=1/s, \beta(s) = -(3+2s)/s$.
For $s=-2$ we reproduce \cite[(25)]{borinsky_nonperturbative_2020} up to different sign conventions regarding $\alpha$, mentioned below \cref{nonlinear_4d_dse}.

The  coefficients $c_n$ of the perturbative solution of the non-linear DSE  \cref{nonlinear_4d_rge}, grow factorially, which has been studied repeatedly \cite{borinsky_generating_2018,borinsky_nonperturbative_2020,broadhurst_exact_2001,bellon_approximate_2010}. The asymptotic behaviour of $c_n$ is dictated by the non-perturbative solution \cref{nonlinear_4d_nonpert}, \cite{aniceto_primer_2019} (alternatively, use the methods of \cite{bellon_approximate_2010}),  namely for $n \rightarrow \infty$
\begin{align}\label{nonlinear_4d_asymptotic}
	c_n \sim S(s)\cdot \frac 1 {\lambda(s) ^n} \cdot \Gamma \left( n-\beta(s) \right) \left( 1 + \frac { \lambda(s) \cdot  b^{(1)}(s)} {(n-\beta(s) -1)} + \frac{\lambda(s)^2\cdot  b^{(2)}(s)}{(n-\beta(s)-1)(n-\beta(s)-2)}+ \ldots  \right)   .
\end{align}
We computed 500 series coefficients of $\gamma^{\text{pert}}(\alpha)$ and extracted their asymptotic behaviour using order-70 Richardson extrapolation. This produced at least 50 significant digits and confirmed the expressions $\lambda(s), \beta(s), b^{(1)}(s) ,b^{(2)}(s)$ and $b^{(3)}(s)$ listed in \cref{nonlinear_4d_nonpert_parameters}. The Stokes constant $S(s)$ is reported in \cref{d4_s_stokes} in \cref{tables}. One recognizes \cite{borinsky_nonperturbative_2020}, $S(-2) = 1/(\sqrt \pi e)$ and also $S(-3) = 3/(\pi e^2)$, all other Stokes constants appear unfamiliar\footnote{The Stokes constant $S(s)$ was also computed for non-integer $s$. It appears to be a fairly smooth function of $s$, with zeros, as expected, at the points $s=-1$ and $s=0$. Remarkably, inside the interval $(-1,0)$ the   function is oscillating  and has additional zeros, accumulating near $s=0$. This could be worth further study. }.

To visualize the asymptotic behaviour, we  consider the ratio  
\begin{align}\label{other_cn_ratio}
	\frac{c_{n+1}/\Gamma(n+1-\beta(s))}{c_n /\Gamma(n-\beta(s))} \equiv \frac{c_{n+1}}{(n+\frac{3+2s}{s})c_n} = s-b^{(1)}(s) \frac{1}{n^2}+\mathcal O \left( \frac 1 {n^3} \right)  \qquad (\text{for }s \neq 0,-1).
\end{align} 
There is no $1/n$ correction to this quantity, hence it converges quickly, as shown in \Cref{nonlinear_4d_fig} (A).

\begin{figure}[htbp]
	\centering
	\begin{subfigure}{.48\textwidth}
		\includegraphics[width=\linewidth]{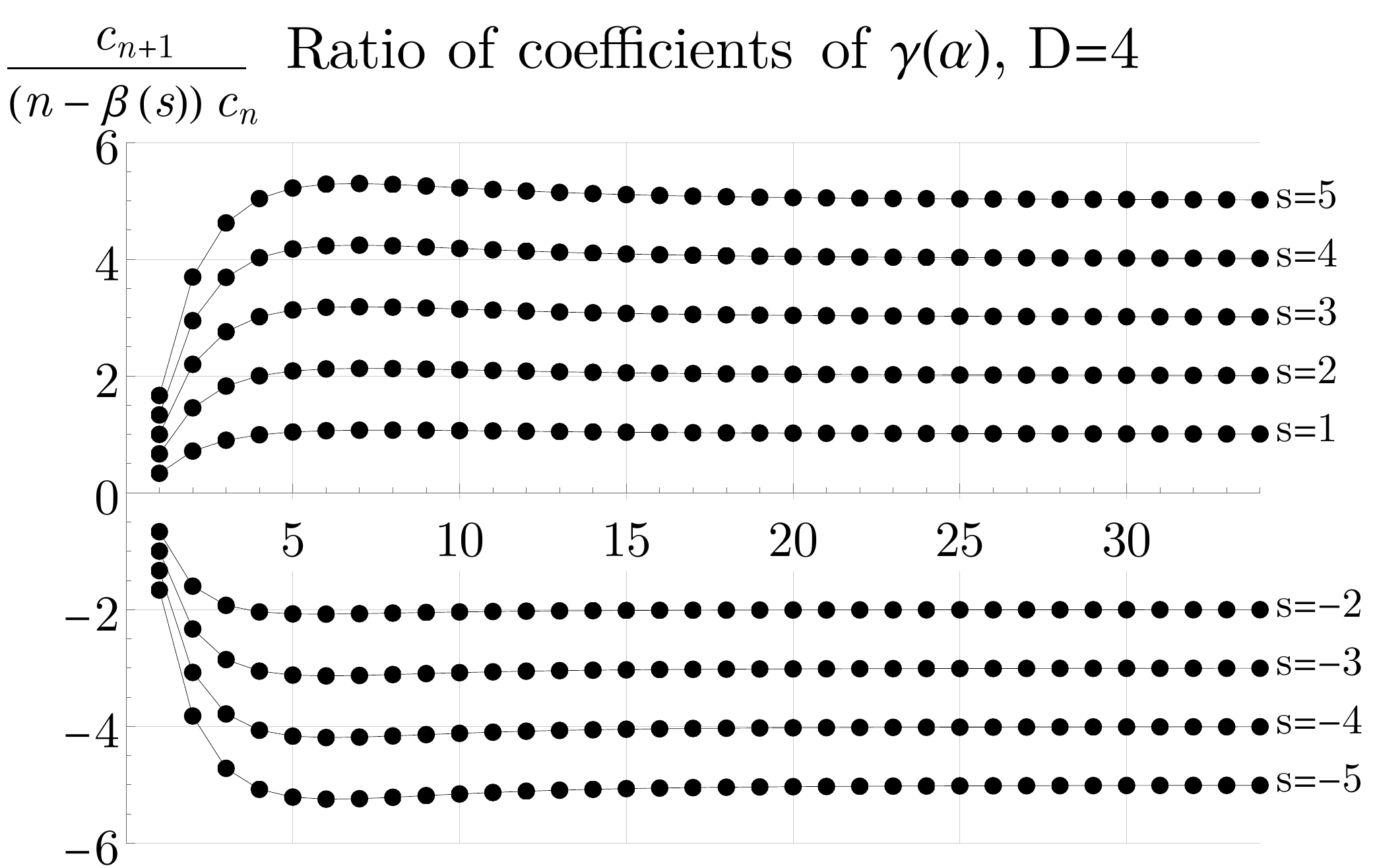}
		\caption{}
	\end{subfigure}
	\hfill 
	\begin{subfigure}{.48\textwidth}
		\includegraphics[width=\linewidth]{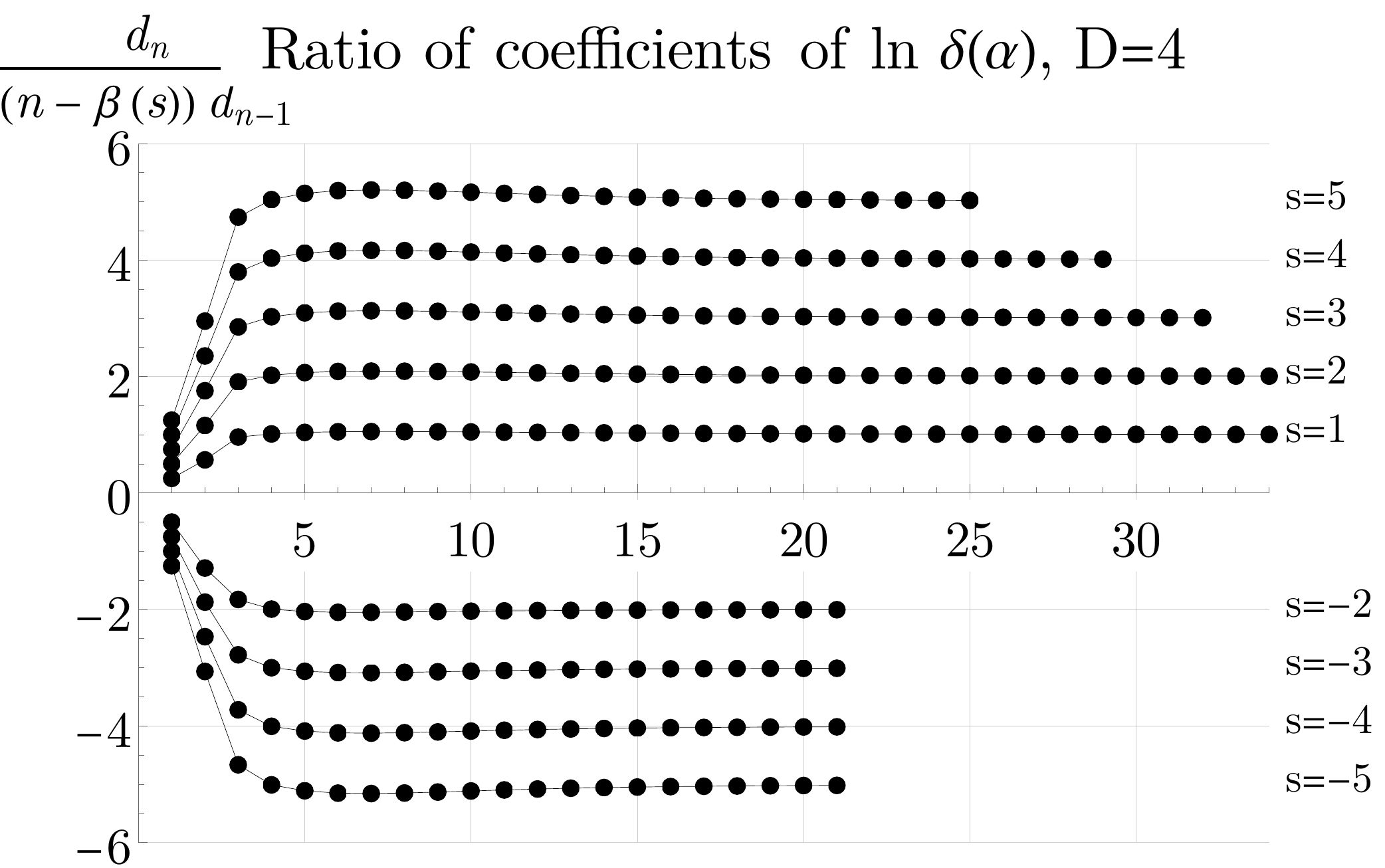}
		\caption{}
	\end{subfigure}
	\caption{(A) Ratio of successive coefficients $c_n$ of $\gamma(\alpha)=\sum  c_n  \alpha^n$   for the physical model in $D=4$ dimensions. The denominator $(n-\beta(s))$ is chosen to match the known asymptotics \cref{nonlinear_4d_asymptotic}. The ratio quickly converges towards the limit $s$, see \cref{other_cn_ratio}.
	(B) Ratio of successive coefficients $d_n$ of $\ln\delta(\alpha)=\sum_n d_n\alpha^n $. Seemingly, it converges to the same limit $s$ as the ratio in (A). The computation is much harder for negative $s$, therefore only a lower order $n$ is available.  }
	
	\label{nonlinear_4d_fig}
\end{figure}

The shift from MOM- to MS-bar-renormalization is computed as discussed in \cref{shifted}. The first coefficients are reported in \cref{d4_s_ldelta}, for example, for $s=-2$ one obtains
\begin{align*}
	\bar \gamma_0(\alpha) &= 1+2\alpha -\frac{11}{2}\alpha^2 + \frac{88}{3}\alpha^3 - \left( \frac{1781}{8} + \frac{\zeta(3)}{3}  \right) \alpha^4  + \left( \frac{42613}{20} + \frac{\pi^2}{150} -\frac{12 \zeta(3)}{5} \right) \alpha^5 - \mathcal O(\alpha^6)\\
	\ln\bar \delta(\alpha) &= -2 + \frac 3 2 \alpha -\frac{29}{6}\alpha^2 + \frac{94-\zeta(3)}{3} \alpha^3 -\left( \frac{5573}{20} + \frac{\pi^4}{150}-\frac{7\zeta(3)}{5} \right) \alpha^4 \mp \ldots. 
\end{align*}

We write $\ln\bar \delta(\alpha) = \sum  d_n \alpha^n$, where the coefficients $d_n$ were computed up to order $\alpha^{10}$ symbolically and to at least order $\alpha^{20}$ numerically. 
As expected from \cref{lndelta_1}, the constant coefficient   is $d_0=-2$ for all $s$.  Similarly to \cref{other_cn_ratio}, we examine the ratio of successive $d_n$, the result is shown in \cref{nonlinear_4d_fig} (B). We extract numerical estimates for the growth parameters in the ansatz
\begin{align}\label{dn_asymptotic_ansatz}
	d_n \sim \tilde S(s) \cdot \tilde F(s)^n \cdot \Gamma \left( n-\tilde \beta(s) \right) \left( 1+ \frac{\tilde b^{(1)}(s)}{(n-\tilde \beta(s) -1)} + \ldots  \right)  
\end{align}
by the following method: We use Richardson extrapolation\cite{richardson_ix_1911,aniceto_primer_2019} of orders 2,3,4 and 5 and take their mean as the estimation and the largest absolute difference between any of these as uncertainty. Experiments with the  coefficients $c_n$ of $\gamma(\alpha)$ show that this procedure likely overestimates the uncertainties.

\begin{table}[htbp]
	\begin{tabular}{|c|c|c|c|c|c|}
		\hline
		$s$ & $n_\text{max}$& {\qquad  $ \tilde S(s) / s$} & {\qquad $\tilde F(s)$}&{\qquad $\tilde \beta(s)$} & \qquad $\tilde b^{(1)}(s)$ \\
		\hline
		5 & 24 & $-0.02532 \pm 0.00037$ & $ 4.987 \pm 0.062$ & $-3.59 \pm 0.12$ & $-2.61 \pm 0.18$\\
		\hline
		4 & 27 & $-0.02709 \pm 0.00019$ & $ 3.993 \pm 0.036$ & $-3.74 \pm 0.08$ & $-2.79 \pm 0.11$\\
		\hline
		3 & 32 & $-0.02749 \pm 0.00011$ & $ 2.997 \pm 0.017$ & $-3.99 \pm 0.04$ & $-3.10 \pm 0.06$\\
		\hline
		2 & 38 & $-0.02272 \pm 0.00010$ & $ 1.999 \pm 0.009$ & $-4.50 \pm 0.03$ & $-3.74 \pm 0.05$\\
		\hline
		1 & 38 & $-0.00541 \pm 0.00009$ & $ 0.999 \pm 0.007$ & $-6.00 \pm 0.04$ & $-5.97 \pm 0.12$ \\
		\hline
		-2 & 21 & $ 0.2080 \pm 0.0018$ & $-1.998 \pm 0.012$ & $-1.49 \pm 0.05$ & $-0.74 \pm 0.08$\\
		\hline
		-3 & 21 & $ 0.1295 \pm 0.0014$ & $-2.995 \pm 0.026$ & $-1.99 \pm 0.07$ & $-1.10 \pm 0.11$ \\
		\hline
		-4 & 21 & $ 0.0882 \pm 0.0011$ & $-3.993 \pm 0.040$ & $-2.24 \pm 0.09$ & $-1.30 \pm 0.14$\\
		\hline
		-5 & 21 & $ 0.0655 \pm 0.0009$ & $-4.991 \pm 0.054$ & $-2.40 \pm 0.10$ & $-1.43 \pm 0.15$\\
		\hline 
	\end{tabular}
	\caption{Numerical findings of the growth parameters of $\ln\bar \delta(\alpha)$ in the $D=4$ model \cref{nonlinear_4d}, according to \cref{dn_asymptotic_ansatz}. They are consistent with   \cref{d4_s_stokes,nonlinear_4d_nonpert_parameters} in the appendix. $\ln\bar \delta(\alpha)$ was computed including order $\alpha^{n_\text{max}}$.  }
	\label{d4_s_parameters}
\end{table}

The results are reported in \cref{d4_s_parameters}. They are consistent with $\tilde S(s) = s\cdot S(s), \tilde F(s)=s$   and $\tilde \beta(s) = \beta(s)-1$  within around $1 \%$ relative uncertainty. Unlike the above analysis of $\gamma$, at this level of uncertainty our findings of \enquote{rational numbers} are to be understood as educated guesswork rather than numerical proofs. The estimates obtained for $\tilde b^{(1)}(s)$ are too imprecise to deduce a formula at this point.

It turns out to be very useful to compare the coefficients of $\ln \delta(\alpha)$ to those of  $\gamma(\alpha)$. To this end we compute the ratio $d_{n}/( s \lambda  c_{n+1})$, where $s \cdot  \lambda(s) =1$ in our case. Using \cref{dn_asymptotic_ansatz} with \cref{d4_s_parameters}, we expect  that $d_n/c_{n+1} \rightarrow 1$. As before, we use Richardson extrapolation of order 2,3,4 and 5 to determine the parameters in
\begin{align}\label{4d_ratio}
	\frac{d_{n}}{ s\lambda c_{n+1}} &\sim r(s) + r_1(s) \frac{1}{n} +r_2(s) \frac{1}{n^2}+ \ldots  , \qquad n \rightarrow\infty .
\end{align}

\begin{table}[htbp]
	\begin{tabular}{|c|c|c|c|c|c|}
		\hline
		$s$ & $r_1(s)$ & $r_2(s)$ & $r_3(s)$ & $r_4(s)$ & $r_5(s)$\\
		\hline
		5 & $1.20002 \pm 0.00012$ & $0.0003 \pm 0.0019$ & $0.005 \pm 0.031$ & $0.09 \pm 0.50$ & $1.3 \pm 7.9$\\
		\hline
		4 & $1.25000 \pm 0.00001$ & $0.0000 \pm 0.0002$ & $0.000 \pm 0.003$ & $0.01 \pm 0.05$ & $0.18 \pm 0.95$\\
		\hline
		3 & $1.33333 \pm 0.00001$ & $0.0000 \pm 0.0001$ & $0.000 \pm 0.001$ & $0.01 \pm 0.01$ & $0.00 \pm 0.02$\\
		\hline
		2 & $1.50000 \pm 0.00001$ & $0.0000 \pm 0.0001$ & $0.000 \pm 0.001$ & $0.00 \pm 0.01$ & $0.00 \pm 0.01$\\
		\hline
		1 &  $2.00000 \pm 0.00001$ & $0.0000 \pm 0.0001$ & $0.000 \pm 0.001$ & $0.00 \pm 0.01$ & $0.00 \pm 0.01$\\
		\hline
		-2 & $0.50000 \pm 0.00001$ & $0.0000 \pm 0.0001$ & $0.000 \pm 0.001$ & $0.00 \pm 0.01$ & $0.00 \pm 0.02$ \\
		\hline
		-3 & $0.66667 \pm 0.00001$ & $0.0000 \pm 0.0002$ & $0.000 \pm 0.002$ & $0.01 \pm 0.03$ & $0.07 \pm 0.39$\\
		\hline
		-4 & $0.75001 \pm 0.00004$ & $0.0001 \pm 0.0005$ & $0.001 \pm 0.007$ & $0.02 \pm 0.09$ & $0.20 \pm 1.24$\\
		\hline
		-5 & $0.80001 \pm 0.00006$ & $0.0002 \pm 0.0009$ & $0.002\pm 0.012$ & $0.03 \pm 0.17$ & $0.4 \pm 2.3$\\
		\hline 
	\end{tabular}
	\caption{Parameters of the ratio $d_n/c_{n+1}$ for $D=4$  from \cref{4d_ratio}.  $r_{\geq 2}$ is consistent with zero.}
	\label{d4_s_ratio}
\end{table}

We find $r(s)=1$ with uncertainty smaller $10^{-5}$. 
The numerical results for the corrections $r_j(s)$ are reported in \cref{d4_s_ratio}. The relative uncertainties are much smaller than for the above parameters of \cref{dn_asymptotic_ansatz}. They suggest the simple formula $r(s) =1$ and $r_1(s) = (s+1)/s$. Inserting this, surprisingly, the higher order corrections $r_{j\geq 2}(s)$ seem to vanish. 

\begin{figure}[htbp]
	\centering
	\centering
	\begin{subfigure}{.48\textwidth}
		\includegraphics[width=\linewidth]{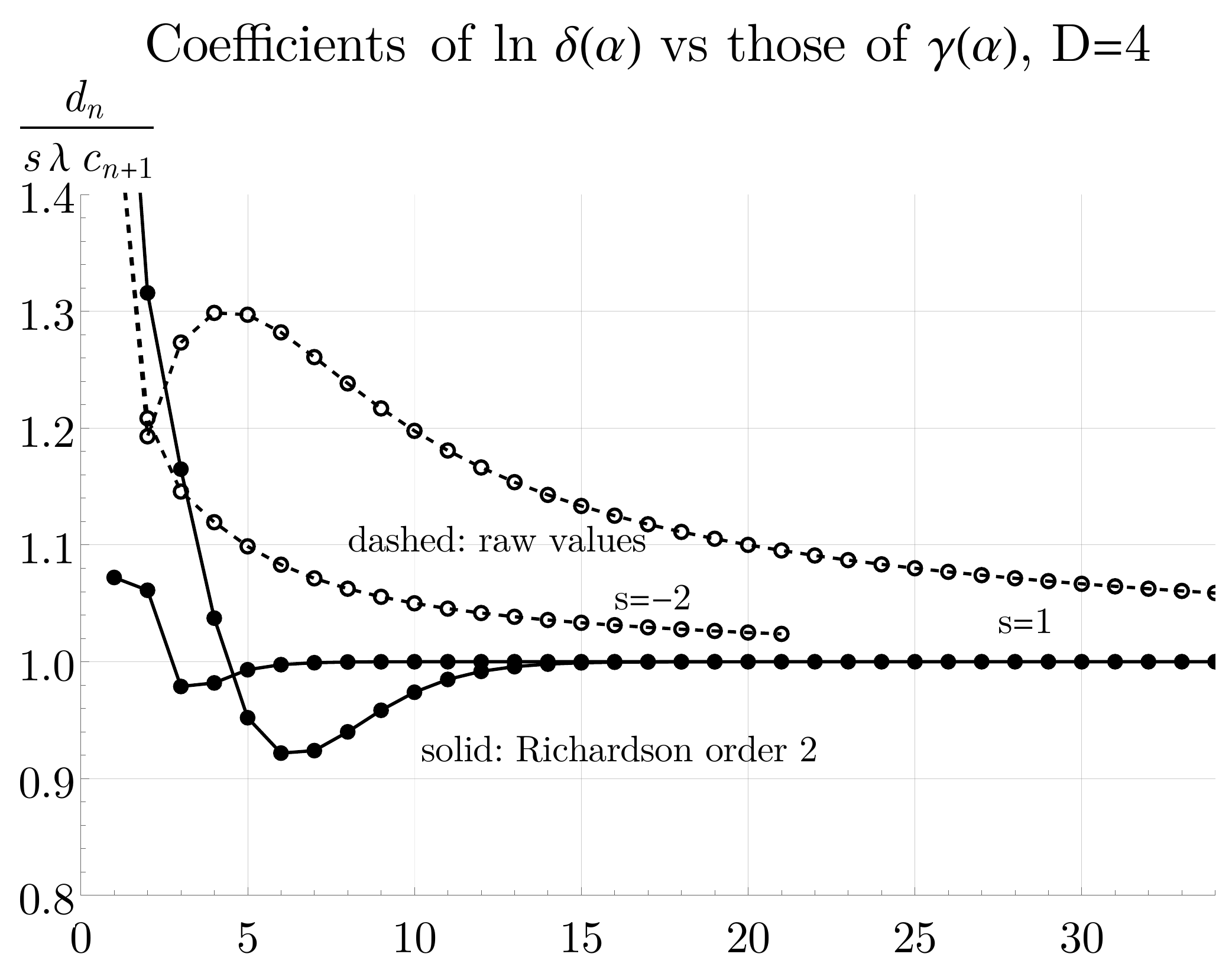}
		\caption{}
	\end{subfigure}
	\hfill 
	\begin{subfigure}{.48\textwidth}
		\includegraphics[width=\linewidth]{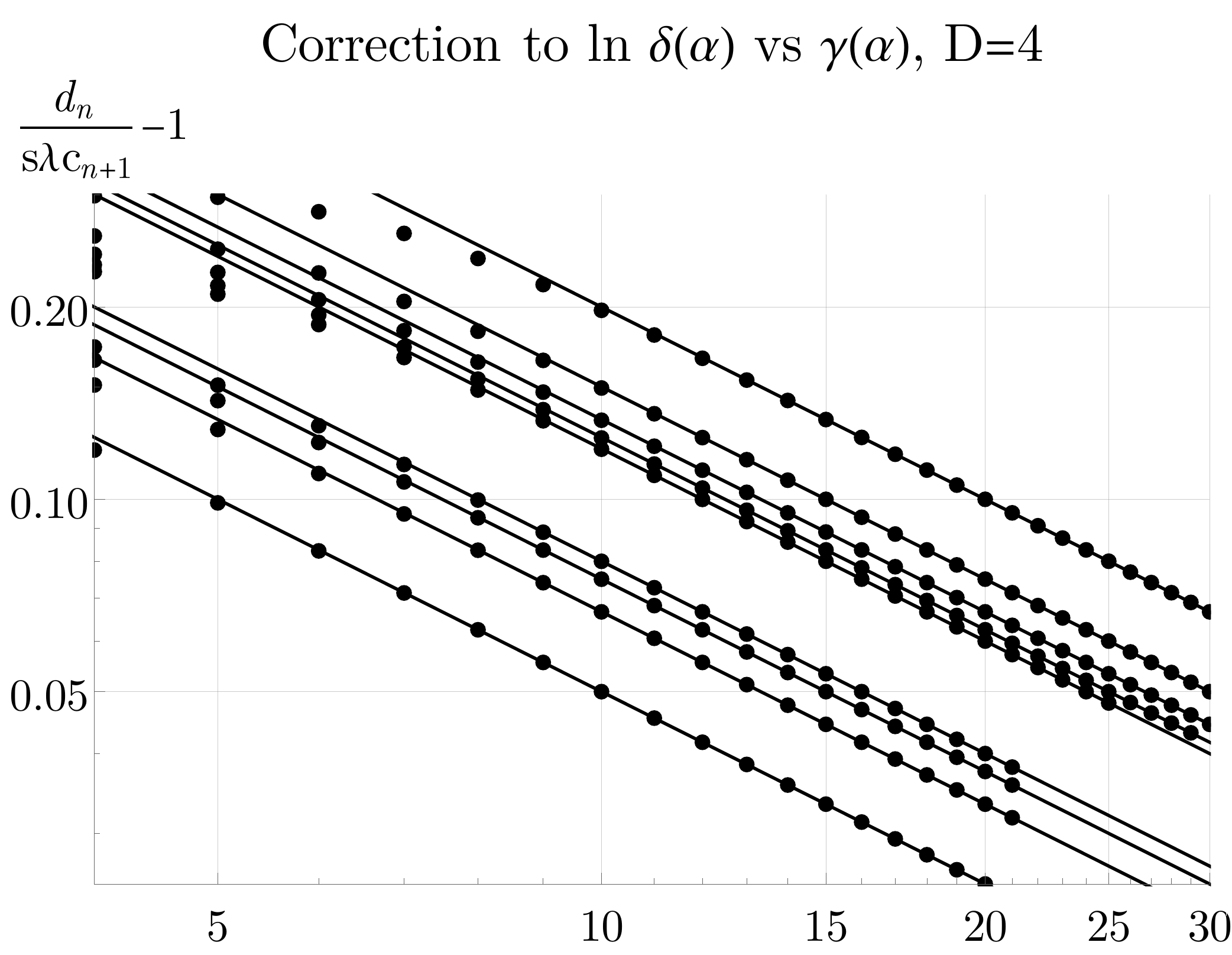}
		\caption{}
	\end{subfigure}
	\caption{(A) Ratio of the coefficients of $\ln\delta(\alpha) = \sum d_n \alpha^n$ and $\gamma(\alpha) = \sum c_n \alpha^n$. Shown are  two representative sequences, namely $s=-2$ and $s=1$. For each of them, the dashed line indicates the raw values while the solid line is the order-2 Richardson extrapolation. The ratio $d_n/c_{n+1}$ approaches the limit unity. (B) Correction to this ratio. The points are the values of $\frac{d_{n}}{c_{n+1}}-1$ for the different $s$.   Solid lines are the functions $\frac{s+1}{sn}$, they match the points surprisingly well even for low orders. The remaining difference falls of faster than $1/n^2$. }
	
	\label{nonlinear_4d_fig_ratio}
\end{figure}

Together with the known behaviour of $c_{n+1}$, \cref{nonlinear_4d_asymptotic} and $\beta(s) = -(3+2s)/s$, we conclude
\begin{align}\label{nonlinear_4d_dn}
	d_n \sim  S(s)  s^{n+1}  \Gamma \left( n-\beta(s)+1 \right) \left( 1 + \frac { -\frac{1+3s+2s^2}{s^2}} {n-\beta(s)} + \frac{\frac{ 1+4s+4s^2-6s^3-7 s^4}{2 s^4}}{(n-\beta(s) )(n-\beta(s)-1)}+ \mathcal O \left( \frac{1}{n^3} \right)  \right) .  
\end{align}
The subleading coefficient is consistent with the value $\tilde b^{(1)}(s)$ which we found in \cref{d4_s_parameters}.

For the higher order corrections in \cref{d4_s_ratio}, the uncertainties are increasing. If we nonetheless speculate that their vanishing is a general pattern, then we obtain  
\begin{align}\label{4d_dn_cn}
	d_n =   \left(1+\frac{s+1}{sn}\right)\cdot c_{n+1} + e_n.
\end{align} 
The numerical values suggest that the remainder $e_n$ falls off faster than geometrically, see \cref{nonlinear_4d_f_plot} (A). Using the ring of factorially divergent power series\cite{borinsky_generating_2018}, this asymptotic statement can be translated to a relation between the corresponding generating functions\footnote{The author thanks Michael Borinsky for pointing out this implication of \cref{4d_dn_cn}.}:
\begin{align}\label{4d_lndelta_gamma1}
	\ln \bar \delta(\alpha) &= \frac{\gamma (\alpha)}{\alpha} + \frac{s+1}{s}\int^\alpha \frac{\d a}{a }\frac{\gamma (a)}{a} + f(\alpha), \qquad s \neq 0.
\end{align}
\Cref{nonlinear_4d_f_plot} (B) shows the coefficients of the function $f(\alpha):=\sum_{n=1}^\infty f_n \alpha^n$, they seemingly grow geometrically, not factorially. This indicates that $f(\alpha)$ is  a convergent power series around $\alpha=0$. The growth rate is reported in \cref{d4_s_ldelta} in \cref{tables}. The coefficients of the function $f(\alpha)$ contain zeta values, which appear in $\ln\bar \delta(\alpha)$ but not in $\gamma(\alpha)$. The author has not succeeded in finding a closed formula.

A conclusion of \cref{nonlinear_4d} is that the factorial growth, or, equivalently, the leading non-perturbative contribution, of the shift function $\delta(\alpha)$ is surprisingly similar to the behaviour of the anomalous dimension $\gamma(\alpha)$.

\begin{figure}[htbp]
	\centering
	\centering
	\begin{subfigure}{.48\textwidth}
		\includegraphics[width=\linewidth]{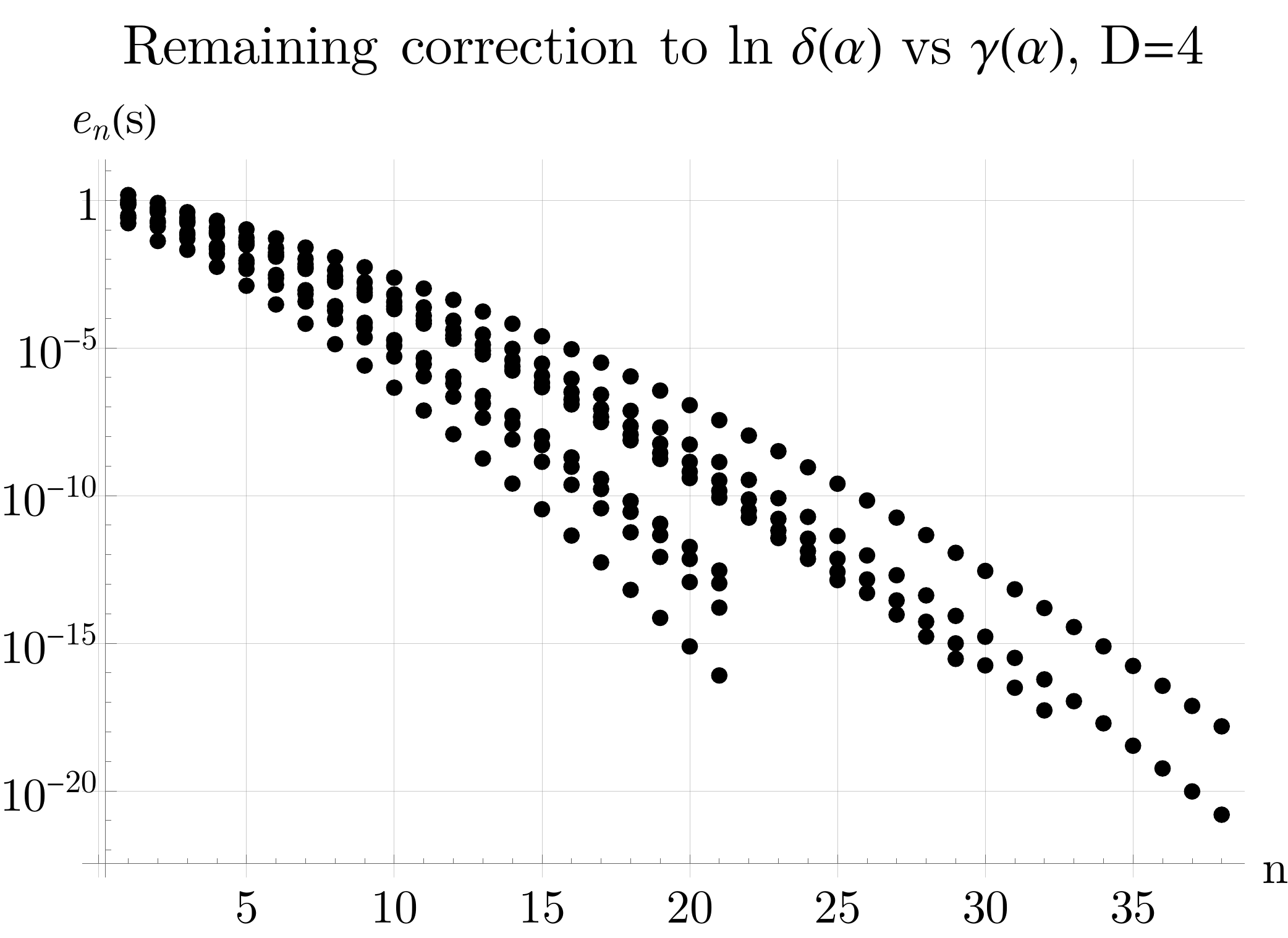}
		\caption{}
	\end{subfigure}
	\hfill 
	\begin{subfigure}{.48\textwidth}
		\includegraphics[width=\linewidth]{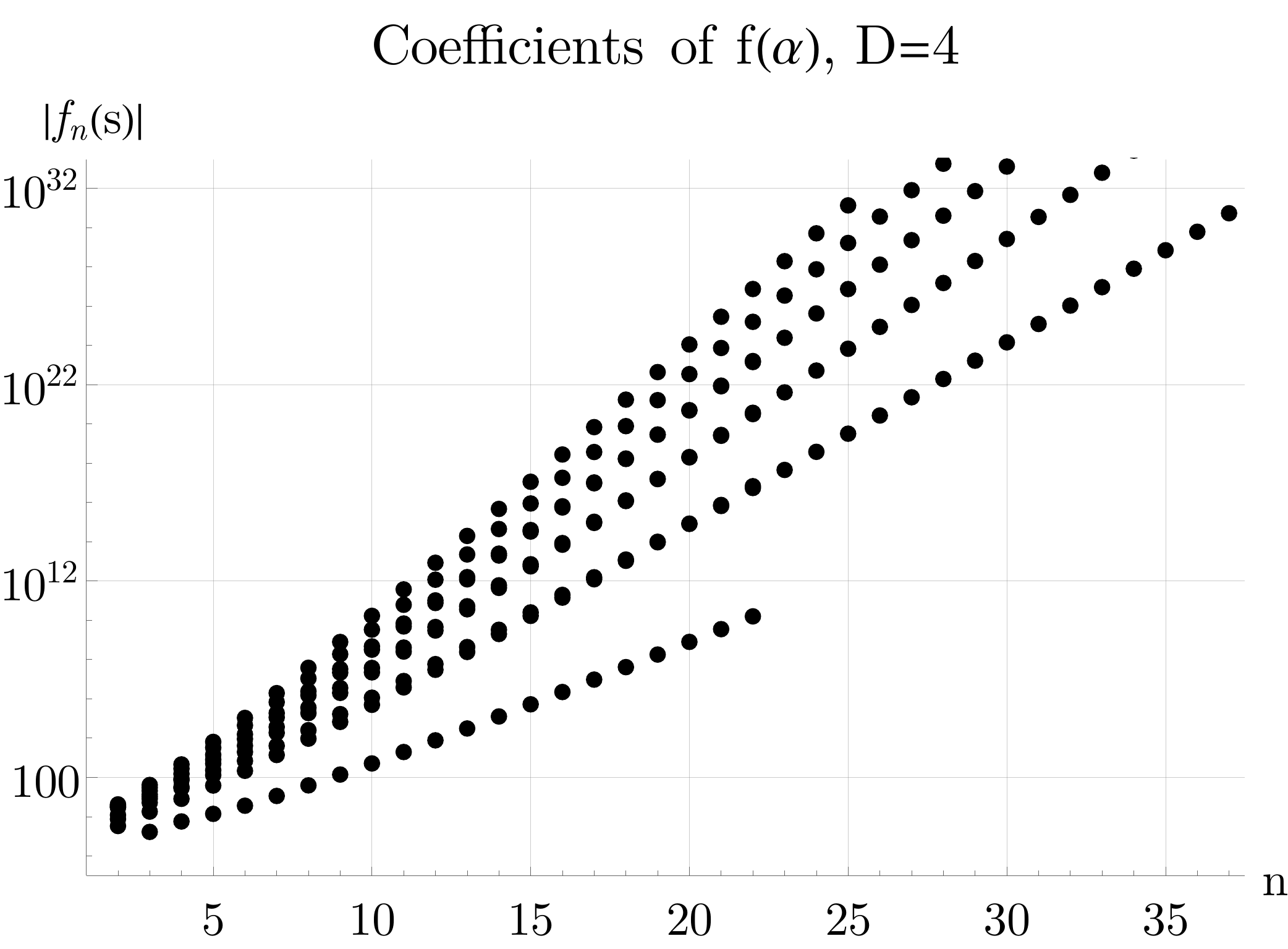}
		\caption{}
	\end{subfigure}
	\caption{(A) Remainder coefficients $e_n$ from \cref{4d_dn_cn}, for the different values of $s$. This is a logarithmic plot,  they decay faster than exponentially. (B) Coefficients of the function $f(\alpha)$ in \cref{4d_lndelta_gamma1}. The data points overlap for different $s$. The coefficients grow exponentially, which suggests that $f(\alpha)$ is an analytic function. }
	
	\label{nonlinear_4d_f_plot}
\end{figure}

\section[Non-linear DSE in D=6]{Non-linear DSE in $D=6$}\label{nonlinear_6d}

Following the procedure of \cref{nonlinear_computation}, we also evaluate the 6-dimensional model of \cref{D6} for various powers $s\in \left \lbrace -5, \ldots, +5 \right \rbrace $ in the invariant charge \cref{Qs2}. 

The leading-log functions $H_1,H_2,H_3$ are as expected from \cref{leadinglog_first}. The anomalous dimension $\gamma(\alpha)$  contains only rational coefficients for all values of $s$. Up to the computed symbolic precision $\alpha^{12}$, they fulfil  the ODE \cref{mellin_gamma1}, which here takes the form
\begin{align}\label{nonlinear_6d_rge}
	\left( 3  + \gamma \left( s \alpha \partial_\alpha  +1 \right)  \right) \left( 2 + \gamma \left(  s \alpha \partial_\alpha  +1 \right)   \right) \left( 1+\gamma \left(  s \alpha \partial_\alpha+1  \right)   \right) \gamma  &=\alpha.
\end{align}
The perturbative solution $\gamma^{\text{pert}}(\alpha) = \sum_{j=0}^\infty c_j\alpha^j$ can be computed to high order from this ODE, the first coefficients are
\begin{align*}
	c_1 = \frac{1}{6}, \quad c_2 = -\frac{11(s+1)}{216}, \quad c_3 = \frac{(s+1)(206+291s)}{7776}, \quad c_4 = -\frac{(s+1)(4711+14887s+11326 s^2)}{279936}.
\end{align*}
We use these coefficients, insert the non-perturbative ansatz \cref{nonlinear_4d_nonpert} into \cref{nonlinear_6d_rge}, linarize,  and solve for the parameters $\beta,\lambda,b^{(1)}, b^{(2)}, b^{(3)}$. \Cref{nonlinear_6d_rge} is of third order, unlike in the case $D=4$, we find three linearly independent solutions \cref{nonlinear_6d_nonpert_parameters}.
In the ansatz \cref{nonlinear_4d_nonpert}, the solution with smallest absolute $\lambda$ is dominant, this is the first entry of the vectors \cref{nonlinear_6d_nonpert_parameters}. 

With these parameters, the series coefficients $c_n$ of the perturbative solution grow according to \cref{nonlinear_4d_asymptotic}. 
We have confirmed this behaviour numerically from the first 500 coefficients $c_n$ for $s\in \left \lbrace -5,\ldots,+5 \right \rbrace $. The Stokes constant $S(s)$ is reported in \cref{d6_s_stokes}, we reproduce the value \cite[(15)]{borinsky_semiclassical_2021} for $s=-2$.
Like  \cref{other_cn_ratio}, the ratio of successive coefficients $c_{n+1}/((n-\beta_1(s))c_n)$ is constant up to quadratic corrections.

The power series coefficients of the shift $\ln\hat \delta(\alpha)$ have been computed symbolically up to order $\alpha^{10}$, the leading ones are reported in \cref{d6_s_ldelta} in \cref{tables}. The numerical computation extends further, depending on $s$. The ratio of  successive coefficients, with the same normalization as for $\gamma(\alpha)$, is shown in \cref{nonlinear_6d_fig}(A). The plot suggests that $c_n$ grow at a similar rate as $d_n$.

\begin{figure}[htbp]
	\centering
	\begin{subfigure}{.48\textwidth}
		\includegraphics[width=\linewidth]{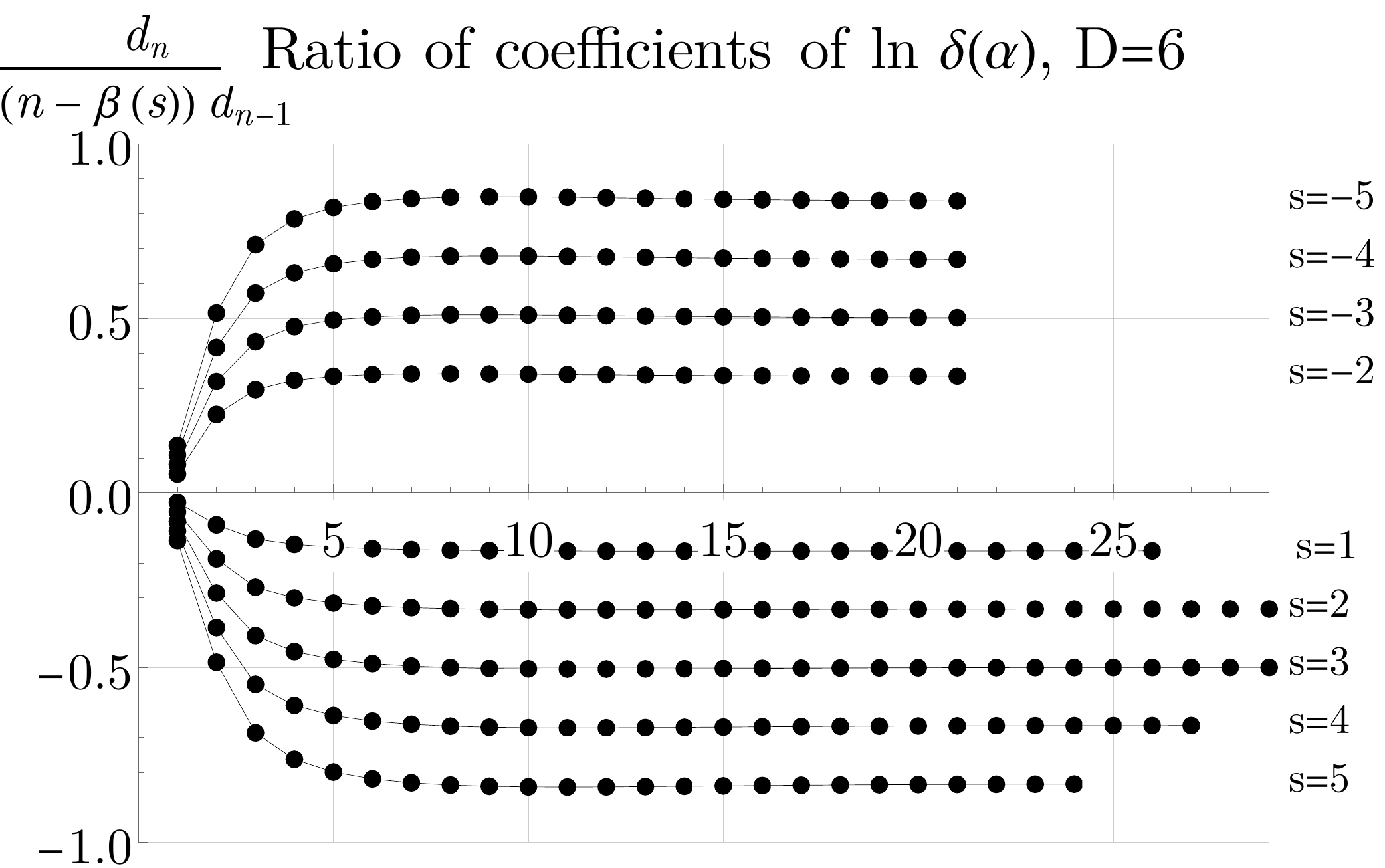}
		\caption{}
	\end{subfigure}
	\hfill 
	\begin{subfigure}{.48\textwidth}
		\includegraphics[width=\linewidth]{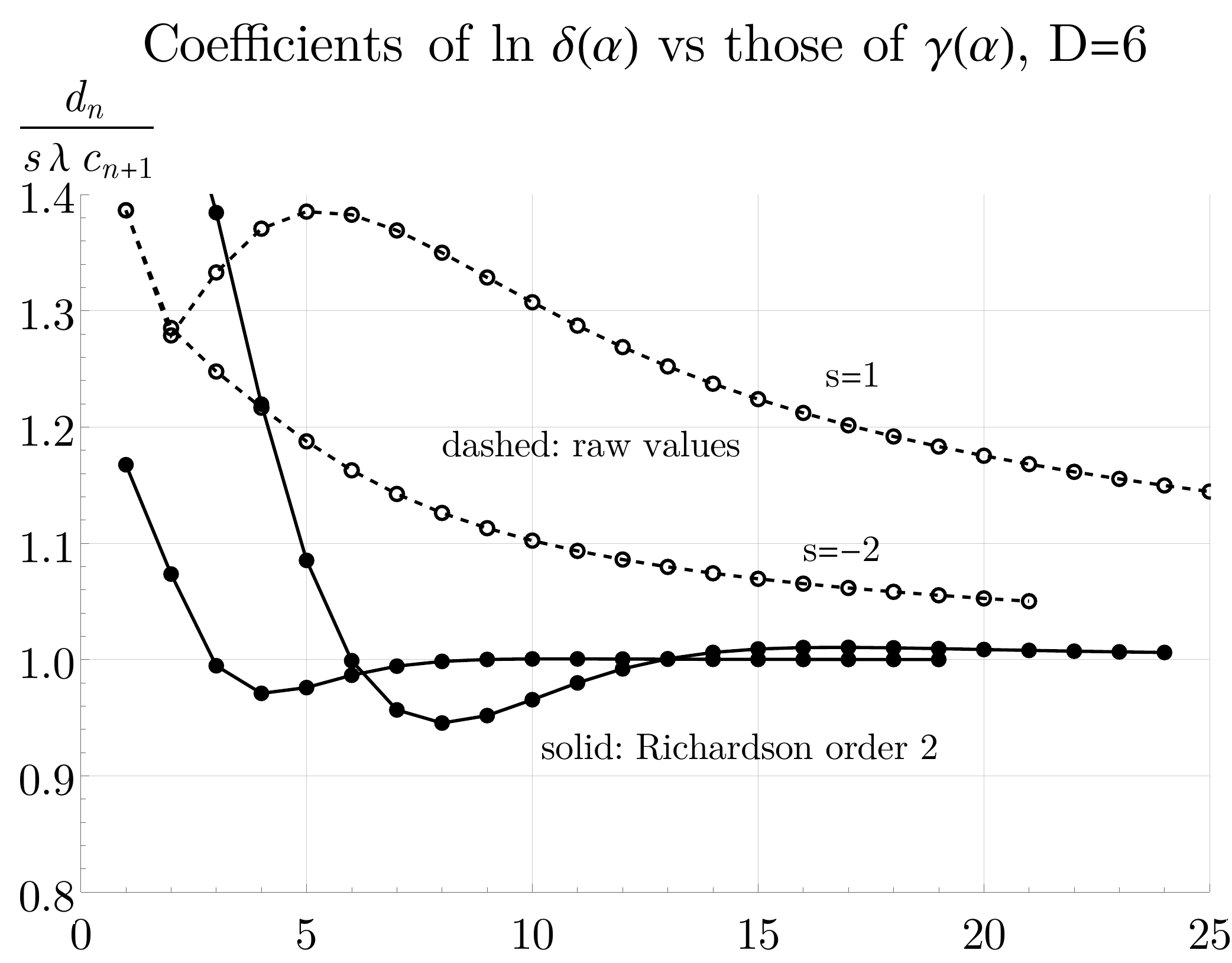}
		\caption{}
	\end{subfigure}
	\caption{(A) Ratio   of successive coefficients $d_n$ of $\ln\delta(\alpha)$, divided by the assumed leading asymptotic behaviour from \cref{nonlinear_6d_nonpert_parameters}, for $D=6$. This quantity quickly approaches the limit $-s/6$.
	(B) Ratio between the coefficients $d_n$ of $\ln\delta(\alpha)$ and the coefficients $c_{n+1}$ of $\gamma(\alpha)$. Compared to \cref{nonlinear_4d_fig_ratio} (A), the Richardson extrapolation converges slower, indicating a significant $1/n^2$-correction, see \cref{d6_s_ratio}. }
	
	\label{nonlinear_6d_fig}
\end{figure}

As explained in \cref{nonlinear_4d_results}, we extracted numerical estimates of the parameters in \cref{dn_asymptotic_ansatz}. They are given in \cref{d6_s_parameters} and are consistent with   $\tilde S(s) = s\cdot S(s), \tilde F(s) = -s/6$ and $\tilde \beta(s) = \beta_1(s)-1$.   Once more, the relative uncertainty of about 1\% of these values is too large to rigorously identify the rational values.

\begin{table}[htbp]
	\begin{tabular}{|c|c|c|c|c|}
		\hline
		$s$ & $n_\text{max}$& {\qquad  $10^6 \cdot \tilde S(s) / s$} & {\qquad $\tilde F(s)$}&{\qquad $\tilde \beta(s)$} \\
		\hline
		5 &24& $-50.1 \pm 4.5$ & $-0.833 \pm 0.018$ & $-7.00 \pm 0.37$ \\
		\hline
		4 &27& $-34.5 \pm 2.4$ & $-0.666 \pm 0.012$ & $-7.25 \pm 0.23$\\
		\hline
		3 &29& $-16.6 \pm 1.2$ & $-0.500 \pm 0.007$ & $-7.72 \pm 0.19$\\
		\hline
		2 &29& $-2.97 \pm 0.28$ & $-0.333 \pm 0.006$ & $-8.68 \pm 0.26$\\
		\hline
		1 &26& $ -0.0054 \pm 0.0015$ & $-0.167 \pm 0.006$ & $-11.64 \pm 0.75$\\
		\hline
		-2 &21& $ 87900 \pm 1600$ & $0.333 \pm 0.005$ & $-2.92 \pm 0.12$\\
		\hline
		-3 &21& $18000 \pm 560$ & $0.500 \pm 0.009$ & $-3.90 \pm 0.21$\\
		\hline
		-4 &21& $6690 \pm 290$ & $0.666 \pm 0.013$ & $-4.39 \pm 0.26$\\
		\hline
		-5 &21& $3410 \pm 180$ & $0.833\pm 0.018$ & $-4.69 \pm 0.29$ \\
		\hline 
	\end{tabular}
	\caption{Numerical findings of the growth parameters of $\ln\bar \delta(\alpha)$ in the $D=6$ model \cref{nonlinear_6d}, according to \cref{dn_asymptotic_ansatz}.  They are consistent with  \cref{nonlinear_6d_nonpert_parameters,d6_s_stokes} in the appendix. }
	\label{d6_s_parameters}
\end{table}

\begin{table}[htbp]
	\begin{tabular}{|c|c|c|c|}
		\hline
		$s$ &   $r(s)$ & $r_1(s)$ & $r_2(s)$ \\
		\hline
		5 &  $1.0010 \pm 0.0017$  & $2.573 \pm 0.072$ & $-6.91 \pm 0.84$ \\
		\hline
		4 &  $1.0007 \pm 0.0019$  & $2.685 \pm 0.072$ & $-7.3 \pm 1.4$ \\
		\hline
		3 &  $1.0007 \pm 0.0024$  & $2.863 \pm 0.078$ & $-8.4 \pm 1.8$ \\
		\hline
		2 &  $1.0010 \pm 0.0026$  & $3.21 \pm 0.11$ & $-11.2 \pm 1.8$ \\
		\hline
		1 &  $1.0032 \pm 0.0048$  & $4.22 \pm 0.23$ & $-22.3\pm 1.3$ \\
		\hline
		-2 & $1.0000 \pm 0.0002$  & $1.083 \pm 0.003$ & $-1.10 \pm 0.34$ \\
		\hline
		-3 & $1.0002 \pm 0.0004$  & $1.441 \pm 0.012$ & $-1.80 \pm 0.07$ \\
		\hline
		-4 & $1.0003 \pm 0.0007$  & $1.619 \pm 0.018$ & $-2.33\pm 0.16$ \\
		\hline
		-5 & $1.0004 \pm 0.0008$  & $1.726 \pm 0.022$ & $-2.71 \pm 0.22$ \\
		\hline 
	\end{tabular}
	\caption{Parameters of the ratio $d_n/(-6c_{n+1})$ for $D=6$  from \cref{4d_ratio}.  }
	\label{d6_s_ratio}
\end{table}

The ratio $d_n / (s \lambda c_{n+1}) = d_n / (-6 c_{n+1})$ is depicted in \cref{nonlinear_6d_fig} (B) for two values of $s$. The asymptotic parameters, according to \cref{4d_ratio}, are given in \cref{d6_s_ratio}. Unlike for $D=4$, this ratio does not converge particularly quickly. A fit suggests that $r_1(s) = (2.12 + 2.15 s)/s$, but the uncertainties are too large to identify the  numbers as rational. This is reflected by the large absolute values we obtain for the $1/n^2$-correction $r_2(s)$, see \cref{d6_s_ratio}. In $D=4$, this correction vanished and therefore allowed us to  extract $r_1(s)$ precisely. 

All in all, we can not clearly identify the subleading corrections of $d_n$ in the $D=6$ model, but the findings at least  suggest that the leading growth coincides with the one of $c_{n+1}$, that is
\begin{align}\label{nonlinear_6d_dn}
	d_n \sim S(s) s  \left( -\frac{s}{6} \right) ^{n} \Gamma \left( n+1+\frac{35+29s}{6s} \right) .
\end{align}

\FloatBarrier

\section{Non-linear toy model}\label{nonlinear_toy}

We solved the non-linear toy model DSE for $s\in \left \lbrace -5,\ldots,+4 \right \rbrace $ symbolically to order $\alpha^{16}$. The leading-log functions $H_1, H_2$ and $H_3$ agree with the general formula \cref{leadinglog_first} of \cite{kruger_log_2020} for the appropriate choice $c_1=1, c_2=0, c_3= \pi^2/2$ and for all values of $s$. Especially, for $s=-2$, we confirm $H_1$ from \cite[Cor. 3.6.4]{panzer_hopf_2012} and $H_2=H_4=H_6=0$.

The symbolic results for the anomalous dimension fulfil \cref{mellin_gamma1},
\begin{align}\label{toy_rge}
	-\frac{\sin(u)}{u}\Big|_{u \rightarrow -\gamma  (1+s \alpha \partial_\alpha)} \gamma(\alpha)  &=\alpha.
\end{align}
Unlike the ODEs \cref{nonlinear_4d_rge,nonlinear_6d_rge}, \cref{toy_rge} contains a pseudo differential operator.   

We computed a symbolic perturbative power series solution $\gamma(\alpha) = \sum c_n \alpha^n$ of \cref{toy_rge} to order 450 and extracted the asymptotic behaviour. The result has the form \cref{nonlinear_4d_asymptotic} for $n$ odd. We find $\beta(s) = -(2+s)/s$, numerical values of the constants $S(s), b^{(1)}(s)$ and $b^{(2)}(s)$ are given in \cref{toy_s_stokes} in \cref{tables}. We did not recognize these numbers apart from the Stokes constant $S(-2)=2/\pi$.

By \cref{lndelta_1}, the shift $\ln\delta(\alpha)=\sum d_n \alpha^n$ does not have a constant term in the toy model.
The first coefficients for the shift are reported in \cref{toy_s_ldelta} in \cref{tables}, while \cref{toy_s_parameters} contains the numerical estimates for their growth parameters.

 \begin{table}[htbp]
	\begin{tabular}{|c|c|c|c|c|}
		\hline
		$s$ & $n_\text{max}$& {\qquad  $ \tilde S(s) / s$} & {\qquad $\tilde F(s)$}&{\qquad $\tilde \beta(s)$}   \\
		\hline
		
		4 & 23 & $-0.389 \pm 0.010$ & $ 3.985 \pm 0.081$ & $-2.50 \pm 0.21$  \\
		\hline
		3 & 23 & $-0.485 \pm 0.015$ & $ 2.988 \pm 0.068$ & $-2.68 \pm 0.24$  \\
		\hline
		2 & 23 & $-0.612 \pm 0.023$ & $ 1.991 \pm 0.059$ & $-3.02 \pm 0.31$  \\
		\hline
		1 & 23 & $-0.572 \pm 0.037$ & $ 0.996 \pm 0.056$ & $-4.09 \pm 0.56$   \\
		\hline
		-2 & 23 & $ 0.6382 \pm 0.0067$ & $1.997 \pm 0.012$ & $-0.991 \pm 0.050$  \\
		\hline
		-3 & 23 & $ 0.5275 \pm 0.0076$ & $2.994 \pm 0.024$ & $-1.322 \pm 0.071$  \\
		\hline
		-4 & 23 & $ 0.4202 \pm 0.0069$ & $3.991 \pm 0.037$ & $-1.488 \pm 0.082$  \\
		\hline
		-5 & 23 & $ 0.3443 \pm 0.0060$ & $4.988 \pm 0.051$ & $-1.588 \pm 0.088$ \\
		\hline 
	\end{tabular}
	\caption{Numerical findings of the growth parameters of $\ln\bar \delta(\alpha)$ in the toy model, according to \cref{dn_asymptotic_ansatz}. $\tilde S(s)$ is  consistent with   \cref{toy_s_stokes}.  }
	\label{toy_s_parameters}
\end{table}

The toy model has the property that both $c_n$ and $d_n$ vanish for even $n$.  This is problematic for two reasons: Firstly, although we computed numerically the order $\alpha^{23}$, we only get 12 non-vanishing coefficients of $\ln\delta(\alpha)$. Secondly, we can not compute the ratio $d_n/c_{n+1}$ and therefore not use the trick which allowed us to extract the behaviour of $d_n$ in the $D=4$ physical model \cref{nonlinear_4d}.

To visualize the coefficients $d_n$ of $\ln\delta(\alpha)$, we consider  the following two ratios for odd $n$:
\begin{align}\label{toy_R}
  R^{(\delta)} &:= \sqrt{\frac{d_{n+2}}{(n-\beta(s)+1)(n-\beta(s)+2) d_n}}, \qquad R^{(\delta/\gamma)}:= \frac{d_n}{s\cdot (n-\beta(s)+1) c_n}.
\end{align} 
\Cref{nonlinear_toy_fig} (A) shows that $R^{(\delta)}$ approaches the limit $\abs s$, which suggests that  $d_n$ scale asymptotically  $ \sim s^n \Gamma(n-\beta(s)+1)$, with (approximately) the same $\beta(s)$ as the coefficients $c_n$ of $\gamma(\alpha)$. The quantity $R^{(\delta/\gamma)}$ allows us to fix the Stokes constant, it is shown in  \cref{nonlinear_toy_fig} (B). The limit of $R^{(\delta/\gamma)}$  is $1.00\pm 0.02$, suggesting that the Stokes constant agrees with the one of $\gamma(\alpha)$. In \cref{toy_s_parameters}, the direct estimates of the asymptotic growth are reported. All in all, it seems that the coefficients of $\ln\delta(\alpha)$ grow factorially according to 
\begin{align}\label{nonlinear_toy_dn}
	d_n \sim S(s) s^{n+1} \Gamma(n+(2+2s)/s).
\end{align}
 We did not try to determine subleading corrections.

\begin{figure}[htbp]
	\centering
	\begin{subfigure}{.48\textwidth}
		\includegraphics[width=\linewidth]{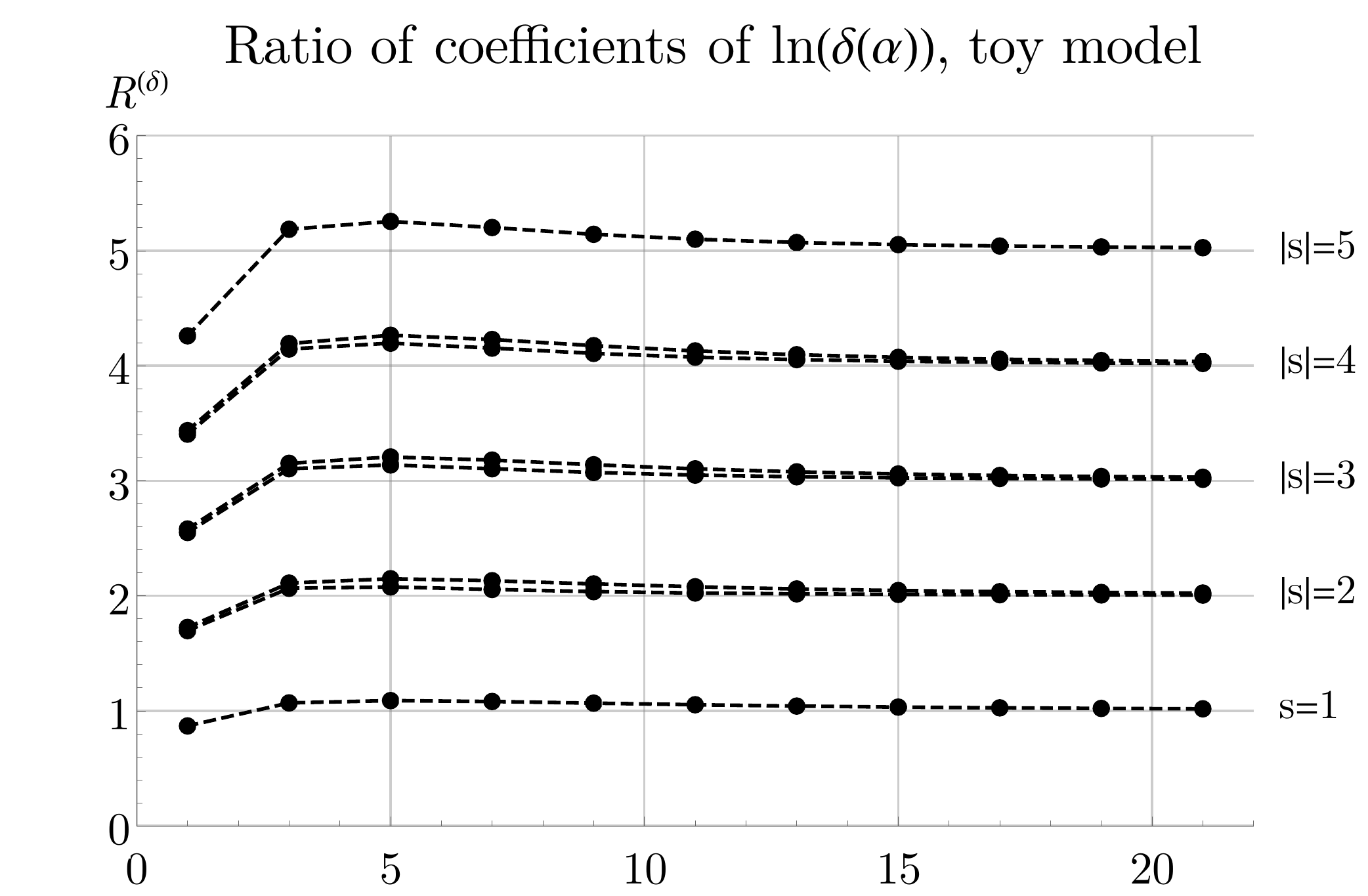}
		\caption{}
	\end{subfigure}
	\hfill 
	\begin{subfigure}{.48\textwidth}
		\includegraphics[width=\linewidth]{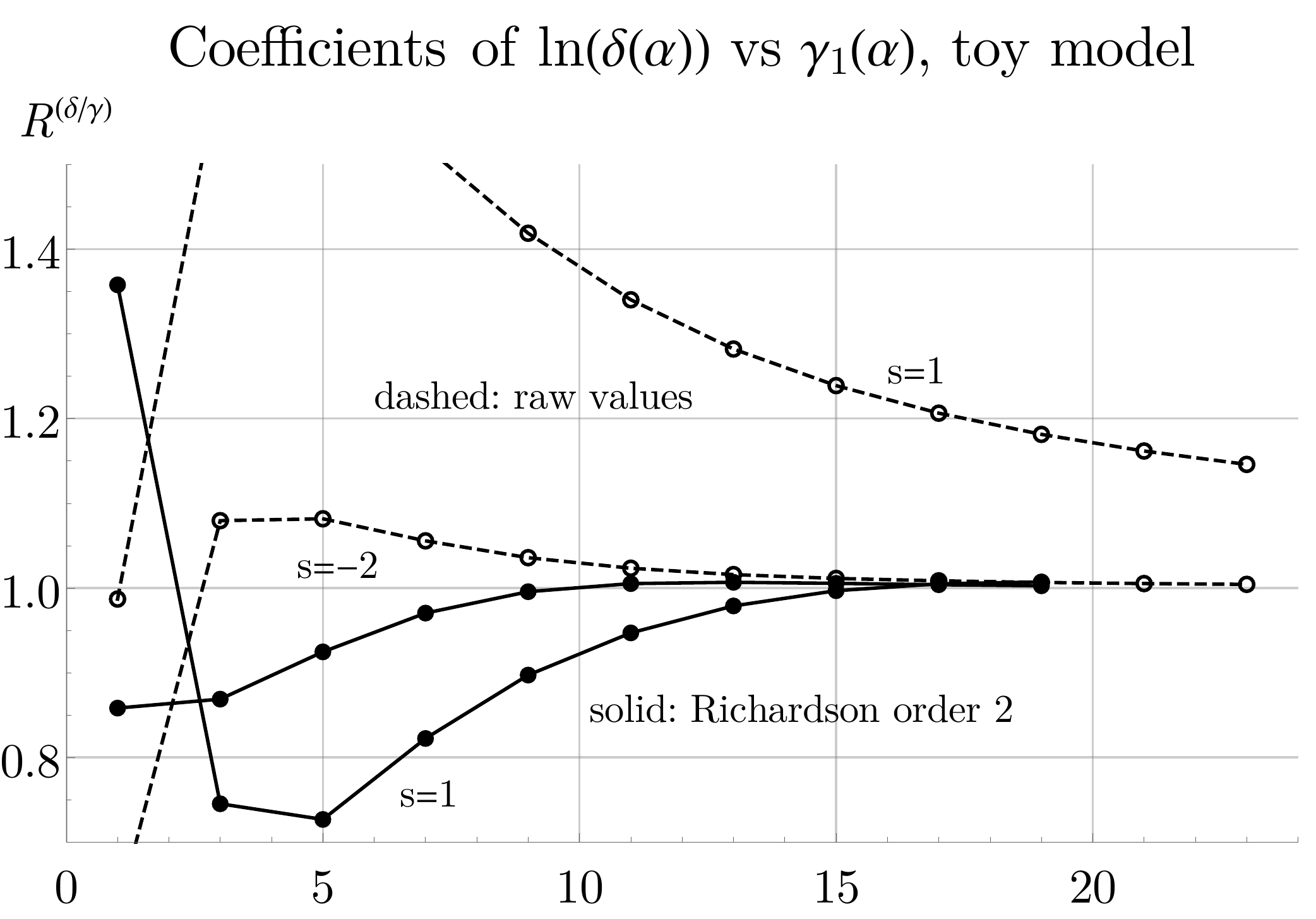}
		\caption{}
	\end{subfigure}
	\caption{(A) Ratio \cref{toy_R} of successive coefficients  of $\ln\delta(\alpha)$.  The ratio visibly approaches $\abs s$, Richardson extrapolation (not shown) confirms this. 
	(B) Ratio $R^{(\delta/\gamma)}$ between coefficients of $\ln\delta(\alpha)$ and $\gamma(\alpha)$ and its order-2-Richardson extrapolation. The limit seems to be unity, but knowing only 12 terms the uncertainty is large. Compare the first 12 terms of \cref{nonlinear_4d_fig_ratio} (A).   }
	
	\label{nonlinear_toy_fig}
\end{figure}

\subsection{Exact solutions}\label{toy_exact}

We end this paper with a curious empirical observation. First, for $s=-\frac 12$, the perturbative anomalous dimension in MOM for $\epsilon \rightarrow 0$ turns out to be $\gamma(\alpha) = -\alpha$, which was checked up to $\mathcal O(\alpha^{500})$. Moreover, in MS for $s=-2$, we find $\bar \gamma(\alpha)=-\alpha$ at least up to order $\alpha^{18}$. By construction, the latter is true even for $\epsilon \neq 0$. If we assume that there are indeed no higher order terms in $\alpha$, then  a particularly simple Callan-Symanzik equation \ref{rge} follows. 

Firstly, consider the case $s=-1$ in MOM which clearly has $\gamma(\alpha)=-\alpha$ since the DSE is not even recursive. The beta function is $\beta(\alpha) = s\gamma(\alpha) = +\alpha$ and the  Callan-Symanzik equation becomes
\begin{align*}
\partial_{\ln x} G(\alpha,x) &= -\alpha G(\alpha,x) + \alpha^2 \partial_\alpha G(\alpha,x).
\end{align*}
The general solution of this partial differential equation is 
\begin{align*}
G(\alpha,x) &= \alpha F_{-1} \left( \ln(x)-\frac 1 \alpha \right) ,
\end{align*}
where $F_{-1}$ is an arbitrary function. The requirement $\gamma(\alpha) = -\alpha$ together with the boundary condition $G(\alpha,1) = 1$ fixes $F_{-1}(u) = -u$. 

The Callan-Symanzik-equation for  $s=-\frac 12$ has the general solution
\begin{align*}
G(\alpha,x) &= \alpha^2 F_{-\frac 12} \left( \ln (x)-\frac 2 \alpha \right) .
\end{align*}
The condition $\gamma(\alpha) =-\alpha$ translates to $\partial_u F_{-\frac 12 }(u) = \frac 12 u$ and we find
\begin{align*}
G(\alpha,x) &= \frac 1 4 \alpha^2 \ln(x)^2 -\alpha \ln(x) +1.
\end{align*}
Both MOM-results are consistent with \cref{rge_MOM}.

The case $s=-2$ in MS leads to a similar general solution,
\begin{align*}
\bar G(\alpha,x) &= \sqrt{\alpha} \bar F_{-2} \left( \ln(x) - \frac{1}{2\alpha}. \right) , 
\end{align*}
This time we can not fix the function $\bar F_{-2}$ because the anomalous dimension $\bar \gamma(\alpha)$ is not simply the derivative of $\bar G(\alpha,x)$, see \cref{shifted_gammabeta}. The shift $\delta(\alpha)$ between MS and MOM is given by the inverse function,
\begin{align*}
\ln \delta(\alpha) &= -(\bar F_{-2})^{-1} \left( \frac{1}{\sqrt \alpha} \right) -\frac{1}{2\alpha}.
\end{align*}
These two non-linear DSEs illustrate that it can be worth trying to solve a DSE both in MOM and in MS, but also that going from one scheme to another requires a truly new, independent calculation and is not trivial even if one happens to know an exact solution in one of the schemes.

\FloatBarrier

\section{Conclusion}\label{conclusion}

We have discussed how a Green function in Minimal Subtraction (MS) is related to its corresponding Green function in kinematic renormalization (MOM). To this end, we have examined and used various relations between the renormalization group functions and $Z$-factors to find the shift of the renormalization point (\cref{MS_as_shifted}).

We have computed the series coefficients of  $\delta(\alpha)$ symbolically and numerically for propagator-type Dyson-Schwinger equations with three different kernels. In some cases, we identified the coefficients' algebraic formulas. Whenever there is an overlap with earlier literature, our results agree with the ones reported.  The key outcomes of the present work are:
\begin{enumerate}
	\item We have shown for single, propagator-type Dyson-Schwinger equations, that  their solutions in MS, MS-bar and MOM schemes agree to all orders in perturbation theory if one chooses a suitable kinematic renormalization point $\delta(\alpha)$, which is a power series in $\alpha$.  (\cref{thm}).
	\item In the linear examples, the factors $\delta(\alpha)$ between the renormalization points have been deduced in closed form as $  \delta(\alpha) =   \gamma_0^{1/\gamma}$, see  \cref{linear_4d_gammabar0,linear_6d_gammabar0,toy_gamma0}. The result is finite in perturbation theory and proportional to $\sqrt{\partial_\alpha \gamma(\alpha)}$. It also encodes information about the MOM-solution  for $\epsilon\neq 0$, see \cref{linear_delta2}. 
	\item For non-linear DSEs, series coefficients for $\gamma(\alpha)$ and $\ln\delta(\alpha)$ have been computed for several different exponents $s$ in the invariant charge $Q=G^s$. The results are highly regular in $s$.  The first symbolic  coefficients of $\ln\delta(\alpha)$ are collected in \cref{tables}. 
	\item  The coefficients $d_n$ of $\ln \delta(\alpha)$ seem to grow factorially. In all cases, we find $d_n \sim  S\cdot s \cdot s^n\lambda^{-n}  \cdot  \Gamma(n-\beta+1)$, where $S,\lambda$ and $\beta$ are the growth parameters of the corresponding anomalous dimension $\gamma(\alpha)$,  \cref{nonlinear_4d_dn,nonlinear_6d_dn,nonlinear_toy_dn}.	This suggests that  $\delta(\alpha)$ receives non-perturbative contributions of the same type as does the anomalous dimension $\gamma (\alpha)$  \cite{borinsky_nonperturbative_2020,borinsky_semiclassical_2021}. The resemblance is particularly striking in the $D=4$ model, \cref{4d_lndelta_gamma1}.
		\item The chain approximation \cref{chains} is an example of a Green function which does not originate from a DSE and can not be transformed between MS and MOM. This calls into question the physical validity of this approximation since different renormalization schemes will produce truly different renormalized Green functions.
	\item Our numerical data suggests some new tentative exact results:  The Stokes constant for $s=-3$ in the $D=4$ model seems to be $S(-3) = 3/(\pi e^2)$, see \cref{nonlinear_4d_results}. And for the non-linear toy model at $s=-2$ in MS and for $s=-\frac 12$ in MOM, the anomalous dimension appears to be $ \gamma(\alpha)=-\alpha$. This allows one to solve the Callan-Symanzik equation up to one unknown function, which in the MOM-case can be determined uniquely (\cref{toy_exact}).
\end{enumerate}
All examples indicate that there is a significant shift  factor $\delta(\alpha)$ between the mass scale $\mu$ of MS-renormalization and the corresponding kinematic renormalization point. By \cref{lndelta_1}, the shift does not vanish in the limit of vanishing coupling.  Consequently one should be careful not to confuse the mass scale $\mu$ of MS-renormalization with a kinematic renormalization point, even in the most well-behaved cases and even for \enquote{small coupling}.

For the linear DSEs, we have found an explicit map $\gamma(\alpha) \mapsto \ln\delta(\alpha)$ in \cref{linear_4d_gammabar0,linear_6d_gammabar0,toy_gamma0}. For the non-linear DSEs, this connection is not quite so simple. Intuitively, the identical asymptotic growth hints at the possibility to find an explicit map as well. Indeed, for the $D=4$ case, assuming that our empirical findings hold to all orders, \cref{4d_lndelta_gamma1} reproduces the factorial growth of $d_n$ and therefore the non-perturbative behaviour. The analytic remainder function $f(\alpha)$ in this case remains to be identified.

\appendix

\section{Mellin transforms}\label{mellin}
The Mellin transform is by definition the value of the primitive integral where one of the propagators is raised to a power $1+\rho$, evaluated at unity external momentum. Factors of $4\pi$ from the Fourier transform are implicitly absorbed into the mass scale in the main text, therefore they are left out from the Mellin transform. For the 4-dimensional resp. 6-dimensional propagator,
\begin{align*}
	M(\rho) = \int \frac{\d^4 k  \;(k^2)^{-\rho}}{(k+q)^2 k^2} \Big|_{q^2=1} = \frac{1}{\rho(1-\rho)},\qquad 	M(\rho)  &= \int  \frac{\d^6 k \;  (k^2)^{-\rho}}{(k+q)^2 k^2} \Big|_{q^2=1} =-\frac{ 1}{ \rho (1-\rho)(2-\rho) (3-\rho) }.
\end{align*}
The Mellin transform in the toy model is
\begin{align*}
	M(\rho) &= \int \limits_0^\infty \frac{\d y \; y^{-\rho}}{x+y} \Big|_{x=1} = \frac{\pi}{\sin (\pi \rho)}.
\end{align*}

\section{Series expansion of the primitive graphs}\label{series}
We are interested in the series expansion in $\epsilon$ of the integral
\begin{align*}
	I_D^{(k)} (q) &:= \int \frac{\d^D  p }{(2\pi)^D} \frac 1 {\left(  p + q \right) ^2 \left( p^2 \right) ^{1+k\epsilon}} \\
	&= (4\pi)^{-\frac D 2} (q^2)^{\frac D 2-2-k\epsilon}\frac{\Gamma\left( -\frac D 2+2+k\epsilon  \right) \Gamma \left( \frac D 2 -1 \right) \Gamma \left( \frac D 2 -1-k\epsilon  \right)  }{\Gamma \left( 1+k\epsilon  \right) \Gamma \left( D-2-k\epsilon  \right)  } \\
	&=: (4\pi)^{-\frac D 2} (q^2)^{\frac D 2-2-k\epsilon} e^{-\gamma_E \epsilon}\sum_n f_n^{(k)} \epsilon^n .
\end{align*}
The factors of $q^2$ must not be expanded into logarithms, in order to be integrated in the next iteration. Furthermore, we do not expand the $(4\pi)$   and factor out $e^{-\gamma_E \epsilon}$ because both can conveniently be absorbed into the momentum. It remains to expand the gamma functions using
\begin{align*}
	\Gamma(x+1) = x \Gamma(x) ,\qquad \Gamma\left( 1+\epsilon \right) &= \exp \left( -\gamma_E \epsilon + \sum_{m=2}^\infty \frac{(-\epsilon)^m}{m}\zeta(m) \right) .
\end{align*}
In $D=4-2\epsilon$ dimensions one obtains
\begin{align}\label{Tkn}
	\Gamma:=& \frac{\Gamma((k+1)\epsilon) \Gamma(1-(k+1)\epsilon) \Gamma(1-\epsilon)}{\Gamma(1+k\epsilon) \Gamma(2-(k+2)\epsilon)}
	= \frac{1}{(k+1) \left( 1-(k+2)\epsilon \right)\epsilon  }\exp \left( -\gamma_E \epsilon + \sum_{m=2}^\infty T_m^{(k)}  \epsilon^m \right) , \nonumber \\
	&\text{where} \qquad 	T^{(k)}_m := (m-1)!\left( (-1)^m (k+1)^m + (k+1)^m +1- (-k)^m-(k+2)^m \right)\zeta(m).
\end{align}
Expanding the prefactor in a geometric series and leaving out  $e^{-\gamma_E \epsilon}$,
\begin{align}\label{linear_4d_fkn}
	f_n^{(k)}&= \sum_{t=-1}^n \frac{(k+2)^{t+1}}{k+1} \frac{1}{(n-t)!}\sum_{m=0}^{n-t} B_{n-t,m} \left( 0, T^{(k)}_2, T^{(k)}_3, \ldots, T^{(k)}_{n-t+1-m} \right) .
\end{align}
Here $B_{n,k}$ are incomplete Bell polynomials \cite{bell_exponential_1934} \cite[p 134]{comtet_advanced_1974}. 
For $D=6-2\epsilon$ dimensions, observe
\begin{align*}
	\frac{\Gamma\left( -1+(k+1)\epsilon  \right) \Gamma \left( 2-\epsilon \right) \Gamma \left( 2-(k+1)\epsilon  \right)  }{\Gamma \left( 1+k\epsilon  \right) \Gamma \left( 4-(k+2)\epsilon  \right)  }  &= \frac{\epsilon-1 }{ (3-(k+2)\epsilon)(2-(k+2)\epsilon) }\cdot \Gamma.
\end{align*}
The gamma functions on the right hand side are the same as in $D=4-2\epsilon$, consequently their series expansion is again given by the polynomials $T^{(k)}_m$ from \cref{Tkn}.  
\begin{align}\label{linear_6d_fkn}
	f^{(k)}_n &=   \sum_{t=-1}^n  \left( -  (k+1)   +\frac{k}{2^{t+1}} -\frac{k-1}{  3^{t+2}} \right)    \frac{(k+2)^t}{2(k+1)}  \frac{1}{(n-t)!}\sum_{m=0}^{n-t} B_{n-t,m} \left( 0, T^{(k)}_2, \ldots, T^{(k)}_{n-t+1-m} \right).
\end{align}

In the toy model, the relevant integral and its series expansion are
\begin{align*}
	&\int \limits_0^\infty \frac{\d y \; y^{-(k+1)\epsilon } }{1+y} = \frac{\pi}{\sin (\pi (k+1)\epsilon)}= \Gamma((k+1)\epsilon)\Gamma(1-(k+1)\epsilon)  =: \sum_{n=-1}^\infty f^{(k)}_n \epsilon^n.
\end{align*}
The Bernoulli numbers $B_n$ vanish when $n>1$ is odd, therefore we can write
\begin{align}\label{linear_toymodel_fkn}
	f^{(k)}_n &:= \frac{1}{(k+1)}\sum_{m=0}^{n+1} \frac{1}{(n+1)!}B_{n+1,m} \left( 0, T^{(k)}_2, \ldots, T^{(k)}_{n+2-m}  \right) ,\qquad  T^{(k)}_n  :=		 \Big(2\pi (k+1)\Big)^n \frac{ \abs{ B_{n}}}{n}.	 
\end{align}

\section{Kinematic counter Term of the linear toy model}\label{app_toymodel}

These are the first coefficients of \cref{toymodel_Z} for the toy model. Define $A := \alpha^2 \pi^2$. 
 \begin{align*}
	z_1  &= \frac{ \alpha \pi^2 \left( 4+  A  \right)  }{24 \left( 1- A \right) ^{\frac 3 2}}, \qquad  
	z_2   = \frac{\alpha^2 \pi^4 \left(3+2A \right)  }{16 \left( 1 - A \right) ^3},\qquad 
	z_3  = \frac{\alpha \pi^4 \left( 112 +2240 A + 2919 A^2 + 254 A^3 \right)  }{ 5760 \left( 1-A\right) ^{\frac 9 2}},  \\
	z_4  &= \frac{\alpha^2 \pi^6 \left( 36 + 515A + 900 A^2 +240 A^3+ 4 A^4 \right)  }{384 \left( 1-A \right)^6 },\\
	z_5  &= \frac{\alpha \pi^6 \left( 1984+522152 A + 6074220 A^2 + 12882535 A^3 + 6095260 A^4 + 511956 A^5 + 1768A^6 \right)  }{967680 \left( 1-A\right) ^{\frac {15} 2}},\\
	z_6  &= \frac{\alpha^2 \pi^8 \left( 471 + 42058 A + 428661 A^2 + 1041030 A^3 + 715270 A^4 + 129414 A^5 + 4092 A^6 + 4 A^7 \right)  }{11520 \left( 1-A\right) ^{9}}.
\end{align*}
All explicitly determined functions $z_n(\alpha)$ for $n>0$ behave qualitatively similar: They diverge like $(1-A)^{-\frac 3 2 n}$ as $\alpha \pi \rightarrow 1$ and are positive for $0\leq \alpha <\frac 1 \pi$. In this interval, they give rise to a finite $Z$-factor. The other coefficients in \cref{toymodel_Z} are
\begin{align*}
	z'_1  &=  \frac{1}{\epsilon} +\frac{\pi^2}{6}\epsilon +\frac{7\pi^4}{360} \epsilon^3  +\ldots
	= -\frac 1 \epsilon \sum_{n=0}^\infty \frac{(-1)^n(4^n-2)B_{2n}}{(2n)!}\pi^{2n} \epsilon^{2n}= \pi \left( \cot \left( \frac \pi 2 \epsilon \right) -\cot (\epsilon \pi)   \right), \\
	z'_2 &= \frac {\pi^2}4 \frac{1}{\cos^2\left( \frac \pi 2 \epsilon  \right) \cos(\epsilon \pi)},\qquad 	z'_3 = \frac{\pi^3}{12}\frac{ 1-2 \cos(2 \pi \epsilon) +2 \cos(\pi \epsilon)  }{  \left( 2 \cos(\pi \epsilon) + 3 \cos(3 \pi \epsilon) \right) \sin\left( \frac   \pi2 \epsilon  \right) \cos^3 \left( \frac   \pi 2 \epsilon \right)  },\\
	z'_4 &= -\frac{\pi^4}{4} \frac{\cos(\pi \epsilon) + 2 \sin^2 (\pi \epsilon)}{\cos^4(\pi \epsilon) \left( \cos(\pi \epsilon) -4 \cos^2 (\pi \epsilon) \cos(2\pi \epsilon)  + \cos(3\pi \epsilon) \right)  }.
\end{align*}
The functions $z'_n(\epsilon)$ are positive for small positive $\epsilon$. They change sign at their poles but probably, there are other continuations of the series expansion around $\epsilon=0$ beyond the poles, which stay positive. For example
\begin{align*}
	z'_2(\epsilon) = \frac{\pi^2}{2} \left( \frac{1}{\sin^2( \pi \epsilon)} -\frac{\left|\cot(\pi \epsilon)-\cot \left( \frac \pi 2 \epsilon \right) \right|^3 }{|\cot (\pi \epsilon)|} \right)  
\end{align*}
is always positive and reduces to the above form of $z'_2$ for $|\epsilon|<0.5$. If this holds for all $z'_n$ then $Z(\alpha,\epsilon)\in [0,1]$, which allows us to interpret $Z$ as a probability.

\section{Asymptotic growth of the anomalous dimension}
For the 4-dimensional physical model, the ansatz \cref{nonlinear_4d_nonpert} delivers the growth parameters
\begin{align}\label{nonlinear_4d_nonpert_parameters}
	\lambda &= \frac 1 s ,\quad \beta(s) = -\frac{3+2s}{s}, \quad b^{(1)}(s) = -\frac{1+4s+3 s^2}{s}, \quad b^{(2)}(s) = \frac{1+6s+8s^2-2s^3-5s^4}{2s^2}\\
	&\qquad b^{(3)}(s) = \frac{-1-6s-s^2+24s^3-25s^4-126 s^5 -81 s^6}{6s^3}. \nonumber
\end{align}
They match \cite[(14)]{borinsky_nonperturbative_2020} for $s=-2$. 
For the 6-dimensional physical model, there are three solutions:
\begin{align}\label{nonlinear_6d_nonpert_parameters}
	\vec \lambda (s) &= \left(  -\frac{6}{s} , -\frac{12}{s}, -\frac{18}{s}\right),\quad 	\vec \beta (s) = \left(  -\frac{35+29s}{6s} ,  -\frac{5+2s}{3s},-\frac{15+13 s}{2s} \right) \\
	\vec b^{(1)}(s) &= \left(  \frac{-275-267s+8s^2}{216s} , \frac{265 + 624 s + 359s^2}{108s},\frac{85+241 s + 156 s^2}{72 s}\right)  \nonumber \\
	\vec b^{(2)}(s) &=  \left( {\scriptstyle \frac{75625 + 83790 s - 101849 s^2 - 177828 s^3 - 67814 s^4 }{93312 s^2}, } \right.  \nonumber \\
	&\qquad \left. {\scriptstyle  \frac{ 70225 + 339690 s + 602764 s^2 + 465258 s^3 + 131959 s^4}{23328 s^2}, \frac{ 7225 + 37950 s + 69779 s^2 + 51628 s^3 + 12574 s^4}{10368 s^2}  } \right) \nonumber 
\end{align}
\begin{align*}
	\vec b^{(3)}(s) &= \left(   {\scriptstyle \frac{ -20796875 - 8551125 s + 107422197 s^2 + 206297091 s^3 + 
		177713418 s^4 + 90251478 s^5 + 23658704 s^6}{60466176 s^3}}, \right. \\
	&\qquad {\scriptstyle  \frac{18609625 + 138592350 s + 424432473 s^2 + 687305592 s^3 + 
		624311121 s^4 + 303609366 s^5 + 62154089 s^6}{7558272 s^3} },\\
	& \qquad \left.  {\scriptstyle \frac{614125 + 4453575 s + 12499453 s^2 + 16989843 s^3 + 11830354 s^4 + 
			4259034 s^5 + 758520 s^6}{2239488 s^3}} \right)
\end{align*}
Including  order $1/n^3$, the large-order growth of $c_n$ is determined entirely by the first component of these vectors.
In order to match \cite[eqs. (41)-(43)]{borinsky_semiclassical_2021}, $b^{(1)}$ has to be multiplied with 3, $b^{(2)}$ with 9 and $b^{(3)}$ with 27. Parameters for the toy model are given in \cref{toy_s_stokes} in \cref{tables}.

\section{Tables}\label{tables}

\begin{table}[htbp]
	\begin{tabular}{|c|l|}
		\hline
		$s$ & \qquad  $\gamma(\alpha)$ \\
		\hline
		5 & $ -\alpha - 6 \alpha^2 - 102 \alpha^3 - 2640 \alpha^4 - 87804 \alpha^5 - 3483072 \alpha^6 - 		158329512 \alpha^7 - 8050087584 \alpha^8$ \\
		\hline
		4 & $-\alpha - 5 \alpha^2 - 70 \alpha^3 - 1485 \alpha^4 - 40370 \alpha^5 - 1306370 \alpha^6 - 		48365100 \alpha^7 - 2000065725 \alpha^8 $ \\
		\hline
		3 & $-\alpha - 4 \alpha^2 - 44 \alpha^3 - 728 \alpha^4 - 15368 \alpha^5 - 384960 \alpha^6 - 		11004672 \alpha^7 - 350628096 \alpha^8$ \\
		\hline
		2 & $-\alpha - 3 \alpha^2 - 24 \alpha^3 - 285 \alpha^4 - 4284 \alpha^5 - 75978 \alpha^6 - 1530720 \alpha^7 - 		34237485 \alpha^8$ \\
		\hline
		1 & $-\alpha - 2 \alpha^2 - 10 \alpha^3 - 72 \alpha^4 - 644 \alpha^5 - 6704 \alpha^6 - 78408 \alpha^7 - 		1008480 a^8$ \\
		\hline
		0 & $-\alpha - \alpha^2 - 2 \alpha^3 - 5 \alpha^4 - 14 \alpha^5 - 42 \alpha^6 - 132 \alpha^7 - 429 \alpha^8$ \\
		\hline
		-1 & $-\alpha$ \\
		\hline
		-2 & $-\alpha + \alpha^2 - 4 \alpha^3 + 27 \alpha^4 - 248 \alpha^5 + 2830 \alpha^6 - 38232 \alpha^7 + 		593859 \alpha^8$ \\
		\hline
		-3 & $-\alpha + 2 \alpha^2 - 14 \alpha^3 + 160 \alpha^4 - 2444 \alpha^5 + 45792 \alpha^6 - 1005480 \alpha^7 + 		25169760 \alpha^8$ \\
		\hline
		-4 & $-\alpha + 3 \alpha^2 -30 \alpha^3 + 483 \alpha^4 -10314 \alpha^5 + 268686 \alpha^6 -8167068 \alpha^7+281975715 \alpha^8$  \\
		\hline
		-5 & $-\alpha + 4 \alpha^2 - 52 \alpha^3 + 1080 \alpha^4 - 29624 \alpha^5 + 988288 \alpha^6 - 		38377152 \alpha^7 + 1689250176 \alpha^8$\\
		\hline 
	\end{tabular}
	\caption{Non-linear DSE in $D=4$ dimensions, see \cref{nonlinear_4d_results}. Series expansion of the anomalous dimension in MOM as a function of  the renormalized coupling $\alpha$ up to order $\alpha^8$ for various powers $s$ of the invariant charge $Q = G^s$. Only insertions into a single internal edge were performed in all cases.}
	\label{d4_s_gamma1}
\end{table}

\begin{table}[htbp]
	\begin{tabular}{|c|S[table-format=2.53]|}
		\hline
		$s$ & {\qquad  $S(s)$} \\
		\hline
		5 &  -0.025296711447842155554062589810922604262477942805771 \\
		\hline
		4 & -0.027093755285804302538145834438779321901953254099492 \\
		\hline
		3 & -0.027514268695235967509951466619196206136028416088749 \\
		\hline
		2 & -0.022754314527304604570864961094569471756231077114904 \\
		\hline
		1 & -0.0054283179932662026367480341381320752861015892636883 \\
		\hline
		-2 & 0.20755374871029735167013412472066868268445351496963   \\
		\hline
		-3 & 0.12923567581109177871522936685966399491429288708430 \\
		\hline
		-4 & 0.087977369959821254076048394021324447743442962588612  \\
		\hline
		-5 & 0.065314016354658749144010387750377100215558556707446 \\
		\hline 
	\end{tabular}
	\caption{First 50 digits of the Stokes constant $S(s)$   for the non-linear DSE in $D=4$, see \cref{nonlinear_4d_asymptotic}. One finds $S(-2)= (\sqrt \pi e)^{-1}$ and $S(-3)=3 (\sqrt \pi e)^{-2}$}
	\label{d4_s_stokes}
\end{table}

\begin{table}[htbp]
	\begin{tabular}{|c|l|l|}
		\hline
		$s$ & \qquad  $\ln \bar \delta(\alpha)$ & $f_{n+1}/f_n$\\
		\hline
		5 & $ -2 -9 \alpha + \left( -139 + 14 \zeta(3) \right) \alpha^2 + \left( -3464 -\frac{7\pi^4 }{12} + 233 \zeta(3) \right) \alpha^3 $ &$30.22 \pm 0.09$ \\
		\hline
		4 & $-2 -\frac{15}{2}\alpha + \left( -\frac{575}{6} + 10 \zeta(3) \right) \alpha^2 + \left( -\frac{23525}{12} -\frac{\pi^4}{3} + \frac{410}{3}\zeta(3) \right) \alpha^3 $ &$ 25.09 \pm 0.06$ \\
		\hline
		3 & $ -2 -6 \alpha + \left( -\frac{182}{3}+ \frac{20}{3}\zeta(3) \right) \alpha^2 + \left( -\frac{2911}{3} -\frac{\pi^4}{6} + \frac{214}{3}\zeta(3) \right) \alpha^3$ & $19.96 \pm 0.04$ \\
		\hline
		2 & $ -2 -\frac 9 2 \alpha + \left( \frac{-67}{2} + 4 \zeta(3) \right) \alpha^2 + \left( -\frac{773}{2}-\frac{\pi^4}{15}+ 31 \zeta(3) \right) \alpha^3$ & $14.80 \pm 0.02$\\
		\hline
		1 & $ -2 -3 \alpha +\left( -\frac{43}{3}+2\zeta(3) \right) \alpha^2  + \left( -\frac{305}{3} -\frac{\pi^4}{60} + \frac{29}{3}\zeta(3) \right) \alpha^3 $ & $9.60 \pm 0.01$ \\
		\hline
		0 & $ -2 -\frac 3 2 \alpha + \left( -\frac{19}{6}+ \frac 2 3 \zeta(3) \right) \alpha^2 + \left( -\frac{103}{12} + \frac 4 3 \zeta(3) \right) \alpha^3$ & \\
		\hline
		-1 & $ -2 $ &  \\
		\hline
		-2 & $ -2 + \frac 3 2 \alpha -\frac{29}{6 } \alpha^2 + \left( \frac{94}{3}-\frac 1 3 \zeta(3) \right) \alpha^3$ & $5.8\pm 1.8$ \\
		\hline
		-3 & $ -2 + 3 \alpha + \left( -\frac{53}{3} + \frac 2 3 \zeta(3) \right) \alpha^2 + \left( \frac{578}{3} + \frac{\pi^4}{60}-\frac{17}{3}\zeta(3) \right) \alpha^3$ &$ 10.50 \pm 0.11$ \\
		\hline
		-4 & $ -2+ \frac 92\alpha  + \left(-\frac{77}{2}  + 2 \zeta(3)\right)\alpha^2 + \left( \frac{2365}{4}+ \frac{\pi^4}{15}-22 \zeta(3) \right) \alpha^3$ &$15.69\pm 0.05$ \\
		\hline
		-5 & $-2 + 6 \alpha + \left( -\frac{202}{3} + 4 \zeta(3) \right) \alpha^2 + \left( \frac{4003}{3} + \frac{\pi^4}{6} -\frac{166}{3}\zeta(3) \right) \alpha^3 $ & $20.85 \pm 0.07$\\
		\hline 
	\end{tabular}
	\caption{Non-linear DSE in $D=4$ dimensions. $\ln\bar \delta(\alpha)$ is the logarithm of the shift in the renormalization point between MOM- and MS-scheme \cref{Gren_gammaks}. Shown are the first terms of its perturbative power series. $f_{n+1}/f_n$ is the growth rate of the function $f(\alpha)$ in \cref{4d_lndelta_gamma1}.}
	\label{d4_s_ldelta}
\end{table}

\begin{table}[htbp]
	\begin{tabular}{|c|S[table-format=5.53]|}
		\hline
		$s$ & {\qquad  $10^6\cdot S(s)$} \\
		\hline
		5 &  -48.879979612936267148575174247043686402701421680529 \\
		\hline
		4 & -33.683126435179258367949154667346857343063662040223 \\
		\hline
		3 & -16.197057487106552084835982615789341267879644562145 \\
		\hline
		2 & -2.8749310663584041698420077656773118015156356312116\\
		\hline
		1 & -0.0050376438522521046131658646410401520933414352165372 \\
		\hline
		-2 & 87595.552909179124483795447421262990627388017406822 \\
		\hline
		-3 & 17853.256793175269493347991077950813245133374820922 \\
		\hline
		-4 & 6637.5931100379316509518941784586037225957017664650  \\
		\hline
		-5 & 3384.1867616825132279651486289425088074650135043176 \\
		\hline 
	\end{tabular}
	\caption{First 50 digits of the Stokes constant $ S(s)$   for   $D=6$, see \cref{nonlinear_4d_asymptotic}.   }
	\label{d6_s_stokes}
\end{table}

\begin{table}[htbp]
	\begin{tabular}{|c|l|}
		\hline
		$s$ & \qquad  $\ln \delta(\alpha)$ \\
		\hline
		5 & $ -\frac 8 3 + \frac{61}{24}\alpha + \left( -\frac{80213}{7776} + \frac{7}{18}\zeta(3) \right) \alpha^2 + \left( \frac{8813575}{139968} + \frac{7\pi^4 }{2592} -\frac{2563}{1296}\zeta(3) \right) \alpha^3 $ \\
		\hline
		4 & $ -\frac 8 3 + \frac{305}{144}\alpha  + \left( -\frac{331345}{46656} + \frac{5}{18}\zeta(3) \right) \alpha^2 + \left( \frac{119812205}{3359232} + \frac{\pi^4}{648} -\frac{2255}{1944}\zeta(3) \right) \alpha^3  $ \\
		\hline
		3 & $ -\frac 8 3 + \frac{61}{36}\alpha + \left( -\frac{52325}{11664} + \frac{5}{27}\zeta(3) \right) \alpha^2 + \left( \frac{14842891}{839808} + \frac{\pi^4}{1296} -\frac{1177}{1944}\zeta(3) \right) \alpha^3$ \\
		\hline
		2 & $ -\frac 8 3 + \frac{61}{48}\alpha + \left( -\frac{38381}{15552} + \frac 1 9 \zeta(3) \right) \alpha^2 + \left( \frac{3947825}{559872} + \frac{\pi^4}{3240} -\frac{341}{1296}\zeta(3) \right) \alpha^3        $ \\
		\hline
		1 & $ -\frac 8 3 + \frac{61}{72}\alpha + \left( -\frac{24437}{23328} + \frac{1}{18}\zeta(3) \right) \alpha^2 + \left( \frac{1560359}{839808} + \frac{\pi^4}{12960} -\frac{319}{3888}\zeta(3) \right) \alpha^3  $ \\
		\hline
		0 & $ -\frac 8 3 + \frac{61}{144} \alpha + \left( -\frac{10493}{46656} + \frac{1}{54}\zeta(3) \right) \alpha^2 + \left( \frac{518095}{3359232}-\frac{11}{972}\zeta(3) \right) \alpha^3 $ \\
		\hline
		-1 & $ -\frac 8 3  $ \\
		\hline
		-2 & $   -\frac 8 3 -\frac{61}{144} \alpha-\frac{17395}{46656}\alpha^2 + \left(- \frac{114361}{209952} + \frac{11}{3888}\zeta(3) \right) \alpha^3  $ \\
		\hline
		-3 & $ -\frac 8 3 - \frac{61}{72}\alpha + \left( \frac{31339}{23328} + \frac{1}{54}\zeta(3) \right) \alpha^2 + \left( -\frac{359005}{104976} -\frac{\pi^4}{12960} + \frac{187}{3888}\zeta(3) \right) \alpha^3     $ \\
		\hline
		-4 & $ -\frac 8 3 -\frac{61}{48} \alpha + \left( -\frac{45283}{15552} + \frac{1}{18}\zeta(3) \right) \alpha^2 + \left( -\frac{11830593}{1119744}-\frac{\pi^4}{3240} + \frac{121}{648}\zeta(3) \right) \alpha^3     $  \\
		\hline
		-5 & $ -\frac 8 3 -\frac{61}{36}\alpha + \left( -\frac{59227}{11664} + \frac 1 9 \zeta(3) \right) \alpha^2 + \left( -\frac{20089615}{839808} -\frac{\pi^4}{1296} + \frac{913}{1944}\zeta(3) \right) \alpha^3  $\\
		\hline 
	\end{tabular}
	\caption{First perturbative coefficients of $\ln\delta(\alpha)$ for  $D=6$ dimensions.}
	\label{d6_s_ldelta}
\end{table}

\begin{table}[htbp]
	\begin{tabular}{|c|S[table-format=2.20]|S[table-format=3.19]|S[table-format=3.14]|}
		\hline
		$s$ & {\qquad  $S(s)$} & {\qquad $b^{(1)}(s)$}&{\qquad $b^{(2)}(s)$} \\
		\hline
		5 & -0.32358439814031030546 & -33.713129682396588961 & 565.374787298670 \\
		\hline
		4 & -0.39133508371923490586 & -28.505508252042547410 & 405.630022359080\\
		\hline
		3 & -0.48873615802624779599 & -23.352717957250113407 & 273.573399332400\\
		\hline
		2 & -0.62073652944344889658 & -18.337005501361698274 & 169.862094180663\\
		\hline
		1 & -0.59543401151910843904 & -13.869604401089358619 & 98.0525675224480\\
		\hline
		-2 & 0.63661977236758134308 & 4.4674011002723396547 & 7.97883629535726\\
		\hline
		-3 & 0.52618629546780378450 & 9.4831135561607547882 & 40.2545894932164\\
		\hline
		-4 & 0.41925649525660905756 & 14.635903850953188791 & 98.9744451682625\\
		\hline
		-5 & 0.34358721547093244258 & 19.843525281307230343 & 184.621956597118\\
		\hline 
	\end{tabular}
	\caption{First digits of the Stokes constant $S(s)$ and subleading corrections of the asymptotic growth  \cref{nonlinear_4d_asymptotic} of the anomalous dimension  in the toy model \cref{nonlinear_toy}. }
	\label{toy_s_stokes}
\end{table}

\begin{table}[htbp]
	\begin{tabular}{|c|l|}
		\hline
		$s$ & \qquad  $\alpha \ln\bar \delta(\alpha(A)$ \\
		\hline
		5 & $ -6 A - \frac{2009}3 A^2 - \frac{11563106}{45} A^3 - \frac{173306477104}{945} A^4 - \frac{1228737945883358}{6075} A^5 - \frac{46235332362117842849}{147015} A^6  $ \\
		\hline
		4 & $-5 A - \frac{1130}3 A^2 - \frac{4316822}{45} A^3 - \frac{59632972484}{1323} A^4 - \frac{461687074578658}{14175} A^5 - \frac{34025588969113725668}{1029105} A^6 $ \\
		\hline
		3 & $ -4 A - \frac{554}3 A^2 - \frac{1263424}{45} A^3 - \frac{10282878575}{1323} A^4 - \frac{46540947260036}{14175} A^5 - \frac{398737839692532122}{205821} A^6 $ \\
		\hline
		2 & $ -3 A - \frac{217}3 A^2  - \frac{1233338}{225} A^3 - \frac{4881119933}{6615} A^4 -\frac{	3528108924854}{23625} A^5 - \frac{1074400592111547046}{25727625 } A^6$ \\
		\hline
		1 & $  -2 A - \frac{55}3 A^2 - \frac{106898}{225} A^3 - \frac{135875429}{6615} A^4- \frac{272890120256}{212625} A^5 - \frac{2770658834393158}{25727625} A^6  $ \\
		\hline
		0 & $-A - \frac 43 A^2 - \frac{146}{45} A^3 - \frac{8864}{945} A^4 - \frac{417682}{14175} A^5 - \frac{9095176}{93555} A^6  $ \\
		\hline
		-1 & $  0 $ \\
		\hline
		-2 & $ A + 7 A^2 + 242 A^3 + 17771 A^4 + 2189294 A^5 + 404590470 A^6 $ \\
		\hline
		-3 & $ 2 A + 41 A^2 + \frac{92518}{25} A^3 + \frac{503885698}{735} A^4 + \frac{		1639676026462}{7875} A^5 + \frac{266517331818761291}{2858625} A^6 $ \\
		\hline
		-4 & $3 A + \frac{370}3 A^2 + \frac{4782122}{225} A^3 + \frac{48904622516}{6615} A^4 + \frac{887103429351554}{212625} A^5  + \frac{88600913717695595572}{25727625} A^6  $  \\
		\hline
		-5 & $ 4 A + \frac{826}3 A^2 + \frac{3478864}{45} A^3 + \frac{287007344207}{6615} A^4 + \frac{185545372999796}{4725} A^5+\frac{53252838327756373006}{1029105} A^6 $\\
		\hline 
	\end{tabular}
	\caption{First coefficients of $\ln\delta(\alpha)$ in the toy model \cref{nonlinear_toy}, up to order $\alpha^{11}$. Here, $A:= (\alpha \pi)^2 /4$.}
	\label{toy_s_ldelta}
\end{table}

\FloatBarrier

\printbibliography

\end{document}